\definecolor{ao(english)}{rgb}{0.0, 0.5, 0.0}
\definecolor{raspberrypink}{rgb}{0.89, 0.31, 0.61}
\newtheorem{theorem}{Theorem}
\numberwithin{theorem}{section}
\numberwithin{equation}{section}
\newtheorem{conjecture}[theorem]{Conjecture}
\newtheorem{lemma}[theorem]{Lemma}
\newtheorem{corollary}[theorem]{Corollary}
\newtheorem{proposition}[theorem]{Proposition}
\newtheorem{remark}[theorem]{Remark}
\newtheorem{protocol}[theorem]{Protocol}
\newcommand{\mc}[1]{\mathcal{#1}}
\newcommand{\cre}{a^\dagger}
\newcommand{\vac}{\ket{\vec{0}}}
\newcommand{\bvac}{\bra{\vec{0}}}
\newcommand{\kbvac}{\ketbra{\vec{0}}}
\newcommand{\h}[1]{\mathcal{H}_{\textnormal{#1}}}
\newcommand{\hn}{\mathcal{H}^{\otimes n}}
\newcommand{\hne}{\h{ext}^{\otimes n}}
\newcommand{\ole}{\overline{e}}
\newcommand{\perm}{\textnormal{perm}}
\def\l@subsubsection#1#2{}
\begin{document}
\title{General protocols for the efficient distillation of indistinguishable photons}

\author{Jason Saied}\thanks{jason.saied@nasa.gov}
\affiliation{QuAIL, NASA Ames Research Center, Moffett Field, CA 94035, USA}

\author{Jeffrey Marshall}
\affiliation{QuAIL, NASA Ames Research Center, Moffett Field, CA 94035, USA}
\affiliation{USRA Research Institute for Advanced Computer Science, Mountain View, CA 94043, USA}

\author{Namit Anand} 
\affiliation{QuAIL, NASA Ames Research Center, Moffett Field, CA 94035, USA}
\affiliation{KBR, Inc., 601 Jefferson St., Houston, TX 77002, USA}

\author{Eleanor G. Rieffel}
\affiliation{QuAIL, NASA Ames Research Center, Moffett Field, CA 94035, USA}

\begin{abstract} 
We introduce state-of-the-art protocols to distill indistinguishable photons, reducing distinguishability error rates by a factor of \(n\), while using a modest amount of resources scaling only linearly in \(n\). 
Our resource requirements are significantly lower and the protocols have fewer hardware requirements than in previous works, making large-scale distillation experimentally feasible for the first time. 
This efficient reduction of distinguishability error rates has direct applications to fault-tolerant linear optical quantum computation, potentially leading to improved thresholds for photon loss errors and allowing smaller code distances, thus reducing overall resource costs. 
Our protocols are based on Fourier transforms on finite abelian groups, special cases of which include the discrete Fourier transform and Hadamard matrices. 
This general perspective allows us to unify previous results on distillation protocols and introduce a large family of efficient schemes. 
We utilize the rich mathematical structure of Fourier transforms, including symmetries and related suppression laws, to quantify the performance of these distillation protocols both analytically and numerically.
Finally, our work resolves an open question concerning suppression laws for the $n$-photon discrete Fourier transform: the suppression laws are exactly characterized by the well-known Zero Transmission Law if and only if $n$ is a prime power.
\end{abstract}

\maketitle

\tableofcontents

\section{Introduction}

Highly pure and indistinguishable photons are a prerequisite for use in quantum information processing. 
The effect of distinguishability has been famously demonstrated by the Hong-Ou-Mandel (HOM) experiment, which shows fundamentally different statistics in the cases when photons are identical or not.
The HOM effect (and its generalizations) is a crucial ingredient for realizing linear optical quantum computation (QC) \cite{knill_optimal_2007}, where the interference between identical photons can be used to create entanglement over computational degrees of freedom (aided by post-selective measurements). For example, fusion measurements can be used to create large cluster states out of primitive entangled states, such as Bell states or small Greenberger-Horne–Zeilinger (GHZ) states \cite{browne_resource-efficient_2005}. These states can then be used to realize fault-tolerant quantum computation, in paradigms such as fusion-based QC \cite{bartolucci_fusion-based_2023-1}.
However, the presence of distinguishability will generally result in less entanglement generated over the computational degrees of freedom 
\cite{sparrow_quantum_2017, shaw_errors_2023}, meaning their computational `resource' is reduced. 
In fact, the degree of distinguishability has been shown to reduce the classical computational complexity (and thus potential quantum advantage) of boson sampling \cite{renema_efficient_2018}. 
Distinguishability between photons can also be unheralded (since the correct number of photons are eventually detected), making such errors potentially harder to deal with in linear optical QC than
photon loss, the dominant source of noise in linear optics, which is generally heralded. 
We sketch the potential impact of distinguishability errors on fusion-based QC in Section~\ref{sec:error correction}, where we argue that reducing distinguishability may greatly improve error thresholds for photon loss and reduce the overall resource requirements of the fault-tolerant quantum computation. 
In particular, due to the benefit of significantly reduced distinguishability error rates and the efficiency of the distillation protocols we present in this work, we argue in the appendix that our protocols may be worthwhile as a complement (or cheaper alternative) to state-of-the-art single photon sources to enable more scalable and resource-efficient fault-tolerant linear optical quantum computation. 

The conventional approach to achieving highly pure photons relies upon spectral filtering. While spectral filtering can produce very uniform photons, it has drawbacks. First, such methods are typically unheralded, meaning if a single photon impinges upon a filter, it is unknown if it exited the filter. Second, as the target fidelity is increased, the probability a photon passes the filter decreases \cite{mosley_heralded_2008}. To overcome these issues, in 2017 a fully linear optical scheme (with Fock basis measurements) to distill indistinguishable photons from partly distinguishable ones was presented \cite{sparrow_quantum_2017}, which was subsequently improved upon in Ref.~\cite{marshall_distillation_2022}. 
An experimental demonstration of the distillation scheme of Ref.~\cite{sparrow_quantum_2017} was recently given in Ref.~\cite{faurby2024purifying}.

Current experiments typically measure the degree of photon-photon distinguishability in linear optics via a HOM dip which counts coincidence events, allowing one to compute the \emph{visibility} $V = \mathrm{Tr}[\rho_a \rho_b]$, where e.g.~`$a$', `$b$' label two different photon sources. Values for the visibility vary quite a bit, depending on the type of single photon source used, photon encoding, spectral filtering, etc., with an approximate range found in the literature to be $V\in [0.74, 0.99]$ \cite{halder_high_2008, tsujimoto_high_2017,tambasco_quantum_2018, wang_research_2019, ollivier_hong-ou-mandel_2021, somhorst_quantum_2023, alexander_manufacturable_2024}. In this work we characterize the error via a parameter $\epsilon$ we call the \emph{distinguishability error rate}, discussed in Sect.~\ref{sec:models}. We have $\epsilon\approx 1-\sqrt{V}$, which for reference gives $\epsilon$ approximately in the range $[0.005, 0.15]$, based on the quoted visibility values. 

The general idea of these distillation protocols is shown in Fig.~\ref{fig:dist-cartoon}, whereby $n$ (noisy) single photons are evolved under a linear optical unitary, with post-selection performed by photon-number-resolving detectors (PNRD). It can be shown that by carefully performing the post-selection based on the interference generated by the unitary evolution, one can guarantee that the output photons will have a higher expected overlap with the target state.
In this work we generalize the state-of-the-art $3$ and $4$ photon protocols of Ref. \cite{marshall_distillation_2022} to allow the scheme to work with any number $n$ of photons (as in Fig.~\ref{fig:dist-cartoon}). 
We prove in Theorem~\ref{thm:error rate} that for input photons with distinguishability error rate $\epsilon$, 
our $n$-photon distillation protocols herald the output of single photons with distinguishability error rate $e_n(\epsilon)\approx \epsilon/n$. 
In particular, large enough values of $n$ lead to arbitrary reduction of the distinguishability error rate. 
The $n$-photon protocols use approximately $4n$ photons to obtain this reduction (see Theorem~\ref{thm:resources}). 
As discussed in more detail below and in Remark~\ref{remark:efficiency}, the previous state of the art \cite{marshall_distillation_2022} requires $O(n^2)$ photons to achieve the same amount of error reduction, implying (for example) that our protocols are over $20$ times more efficient when $n=81$. 
Further, in order to reduce the error rate by more than a factor of $4$ (as in Marshall's 4-photon protocol \cite{marshall_distillation_2022}) \emph{without} exponentially scaling resource costs, 
the previous techniques require iteration of several smaller distillation protocols, 
along with active feed-forward and quantum memory (e.g., optical delay lines) so that successfully distilled photons can be used in later rounds of distillation. The additional hardware requirements of these previous protocols may ultimately \emph{increase} the error rate: for example, the delay lines may cause the photons to experience different amounts of dispersion. 
For the main protocols presented in this work, however, there is no need for iteration, feed-forward, or quantum memory during distillation. 
Thus the new distillation protocols are significantly more resource-efficient and require less complex hardware, making them far more feasible for experimental implementation. 

We now briefly discuss our general framework for distillation and its utility. 
The $n$-photon protocols depend on an $n\times n$ unitary matrix $U$, determining an $n$-mode linear optical unitary. 
We consider protocols with $U$ coming from the discrete Fourier transform $F_n$ (for $n\ge 3$) and Hadamard matrices $H_n=H^{\otimes r}$, where $H$ is the $2\times 2$ Hadamard matrix and $n=2^r\geq 4$. 
We note that the previous state-of-the-art $3$ and $4$ photon protocols of \cite{marshall_distillation_2022}, corresponding to $U=F_3$ and $U=H_4$ respectively, were discovered numerically. 
Ref. \cite{marshall_distillation_2022} gives rigorous proofs regarding the performance of these protocols, but these proofs are by direct computation: there was no theoretical understanding of why those unitaries would be particularly useful for distillation, nor any indication of how one might generalize them. 
In this work, we put both protocols into a larger theoretical context, generalizing them to the families $F_n$ and $H_n$. 
We further unify these families by observing that both correspond to Fourier transforms on finite abelian groups of order $n$ ($\mathbb{Z}_n$ for Fourier and $\mathbb{Z}_2^{\times r}$ for Hadamard), and in fact any such Fourier transform (with $n\geq 3$) leads to a similarly efficient distillation protocol. 
(The main text focuses on the $F_n$ and $H_n$ cases, with the generalization discussed in detail in Appendix~\ref{sec:fourier abelian}.) 
Questions about distillation then become questions about suppression laws for general Fourier transforms. 
This generality allows us to combine powerful results on suppression laws in the Fourier \cite{tichy2010zero} and Hadamard \cite{crespi2015suppression} settings into a single framework and apply it to characterize the performance of our distillation protocols. 
In particular, as discussed further below, suppression laws are often determined by symmetry properties of the relevant unitary \cite{dittel2018totally}; for the Fourier transform on the finite abelian group $G$, the relevant group of symmetries is simply $G$. 
This is the perspective motivating the proofs and conjectures in this work. 
We also note that, due to the theoretical connections we establish between distillation and suppression laws, our results have important consequences beyond distillation. In particular, we resolve an open problem due to the 2010 paper Ref. \cite{tichy2010zero} by proving precisely when the known suppression laws for $F_n$ are both necessary and sufficient conditions for an output pattern to be suppressed (see Theorem~\ref{thm:suppression body}). We return to this subject below. 

As with the earlier works \cite{sparrow_quantum_2017, marshall_distillation_2022}, the new $n$-photon distillation protocols are non-deterministic with some \emph{heralding rate} $h_n(\epsilon)<1$, the probability of heralding the output of a distilled single photon. 
Large heralding rates are desirable in order to minimize overhead. 
In Theorem~\ref{thm:herald first order}, we give a formula for the heralding rate $h_n(\epsilon)$ up to first order in $\epsilon$, depending only on the heralding rate in the error-free case, $h_n(0)$. 
In Theorem~\ref{thm:herald ideal} and the following discussion, we give a formula for $h_n(0)$ in terms of hypergeometric functions and observe that it quickly approaches $1/4$, with $h_n(0) - \frac{1}{4}\sim \frac{1}{16n}$.
In particular, for small error rates $\epsilon$ (and $n\geq 4$), the $n$-photon distillation protocol succeeds with probability near $1/4$, and therefore we expect to use $4n$ input photons to obtain an output photon with reduced error rate approximately $\epsilon/n$. 
This is a significant improvement over the protocol of Sparrow and Birchall \cite{sparrow_quantum_2017}, which requires exponentially many photons to achieve the same reduced error rate $\epsilon/n$ \cite{marshall_distillation_2022}. 
The protocols of \cite{sparrow_quantum_2017} and \cite{marshall_distillation_2022} may be iterated, with successfully distilled photons stored and used as input to later rounds of distillation; 
this iterative approach allows for 
similar error rates with cubic or quadratic resources respectively, but requires active feed-forward and quantum memory (optical delay lines) during distillation. 
Our protocols, on the other hand, use only linearly scaling resources \emph{without} requiring active feed-forward or memory during distillation. 
In Section~\ref{sec:iterated}, we discuss the resource requirements of the protocols in \cite{sparrow_quantum_2017, marshall_distillation_2022} and the present work in more detail. 
Generally, we see that for small $\epsilon$, the new protocols 
achieve comparable reduction in error at significantly lower cost. 
We give a concrete example in Remark~\ref{remark:efficiency}: to reduce the error rate $\epsilon$ by a factor of $n=81$, the new protocols require $20$ times fewer photons than the previous state-of-the-art protocol of \cite{marshall_distillation_2022}. 
In particular, the new $F_{81}$ protocol would require an average of only $4$ runs using $81$ photons each, with no interaction between subsequent runs. 
We note, however, that these resource estimates require $\epsilon$ to be sufficiently small. Each protocol has a \emph{threshold} value of $\epsilon$ above which it fails to reduce the error rate; as $n$ increases, this threshold decreases. (See Figure~\ref{fig:epsilon_critical}.) 
Further, even below threshold, it can be advantageous to concatenate multiple distillation protocols, beginning with smaller values of $n$ and feeding the less distinguishable output photons into larger $n$ protocols in subsequent stages. We discuss such iterated distillation schemes in Section~\ref{sec:iterated}. 

For larger values of $\epsilon$, the first-order approximations of $h_n(\epsilon)$ discussed above no longer suffice. 
In Section~\ref{sec:symmetry body} we conjecture a lower bound on $h_n(\epsilon)$ as a function of $\epsilon$. 
This lower bound may be used to give explicit guarantees on the heralding rate for large $n$, as long as $\epsilon$ is sufficiently small relative to $n$. For example, if $\epsilon\leq 1/n$, we conjecture that the $n$-photon protocol's heralding rate is bounded below by $\frac{1}{4e}\approx 0.092$. 
These conjectures are supported by numerical evidence 
and are motivated by certain symmetry properties of the distillation protocols, discussed in Sect.~\ref{sec:symmetry body}. 
As discussed above, these symmetry properties are related to the suppression laws of \cite{tichy_sampling_2015, crespi2015suppression, dittel2018totally} that underlie our main results and motivate the use of the Fourier and Hadamard unitaries. 
These suppression laws have also been applied to the related problem of verifying boson sampling devices \cite{tichy2014stringent, viggianiello2018experimental}. 
For the Hadamard case, the suppression is entirely governed by certain parity-preservation conditions determined by the symmetries \cite{crespi2015suppression, dittel2018totally}. 
For the Fourier transform, the corresponding \emph{Zero Transmission Law} (ZTL) was only known to determine some of the suppression, with cases (the smallest being $F_6$) in which there are more suppressed patterns than dictated by the ZTL \cite{tichy2010zero}. 
In Theorem~\ref{thm:suppression body}, we prove that the suppressed patterns are exactly determined by the ZTL if and only if $n$ is a prime power. 

In Sect.~\ref{sec:numerics}, we give some numerics of interest. 
We provide detailed analysis of the case $n=8$ as a representative example. This is the smallest new protocol with $n$ a power of $2$, which means that we may consider both Fourier and Hadamard distillation protocols on the same number of photons and compare their performance. 
We also study the following problem: given an approximately known input error rate $\epsilon$, which distillation protocol gives the greatest reduction in error if we are free to choose the value of $n$? 
This problem is numerically answered for $n\leq 16$ in Figures~\ref{fig:opt_n} and \ref{fig:opt_n_both}. 
We observe several interesting trends in these results, related to the symmetry properties of Section~\ref{sec:symmetry body}. 
In particular, the optimal protocols seem to be Fourier transform protocols in which there are \emph{more} suppressed patterns than explained by the ZTL (i.e., by Theorem~\ref{thm:suppression body}, with $n$ not a prime power). 

In Sect.~\ref{sec:loss}, we consider the effect of photon loss on our protocols. 
Since photon loss will be detected by a distillation protocol with high probability (especially for large $n$), 
we are most concerned with the effect on the heralding rates (and thus the resource requirements). 
We give numerics and prove lower bounds on the lossy heralding rates as a function of $h_n(\epsilon)$, thus obtaining upper bounds on the required resources. 
In the $\epsilon = 0$ limit, the lower bound is exactly attained. 

We further demonstrate in App.~\ref{sec:haar} that the ability to distill photons is in fact typical, as shown by a randomized scheme based on Haar sampled unitaries. However, these versions are significantly less 
resource efficient (requiring a far greater number of photons) compared to those based on the Fourier or Hadamard matrices, due to a greater degree of constructive interference 
as a result of symmetries present in these matrices (as discussed in Sect.~\ref{sec:symmetry body}).

While readers familiar with linear optics should be able to skip most of the preliminaries in Section~\ref{sec:preliminaries}, we recommend reviewing the notation used in Section~\ref{sec:models}, especially the error models and \eqref{eq:error decomp}. 

We also note that a paper with similar results on distillation protocols, restricting to the case $F_n$ \cite{somhorst2024photon}, was released simultaneously with the first version of the present work. 

\begin{figure}[!t]
    \centering
    \includegraphics[width=0.7\columnwidth]{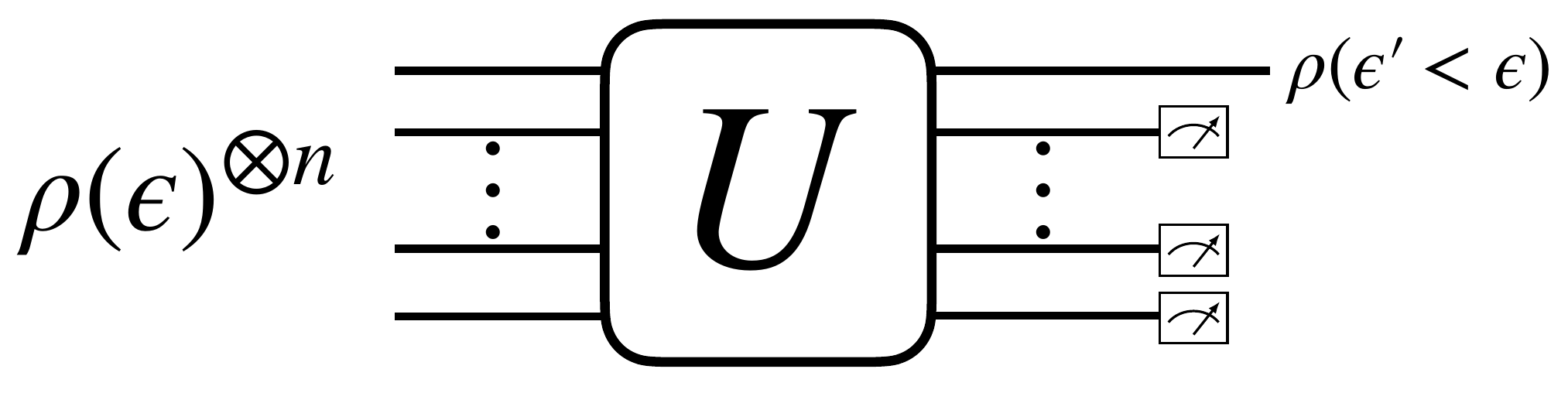}
    \caption{Cartoon of $n$-photon distillation scheme, where $n$ photons are injected into a linear optical unitary $U$. PNRDs are used to post-select on certain $n-1$ photon outcomes, resulting in a single photon out (in the absence of losses, dark counts, etc.). It is possible to engineer the unitary and post-selection criteria so that the output single-photon error is reduced.}
    \label{fig:dist-cartoon}
\end{figure}

\subsection{Summary of contributions}

We now briefly summarize the main results of this work. For each positive integer $n\geq 3$, we introduce several $n$-photon distillation protocols of the form in Figure~\ref{fig:dist-cartoon}, with the unitaries $U$ corresponding to Fourier transforms on finite abelian groups $G$ of order $n$. 
We assume that we have distinguishability error rate $\epsilon$ (defined in Section~\ref{sec:models} below). 
We then prove the following results: 

\begin{enumerate}
    \item (Theorem~\ref{thm:herald first order}) Upon successful heralding, the $n$-photon protocols output a single photon with reduced distinguishability error rate 
    \begin{equation*}
        e_n(\epsilon) = \epsilon/n + O(\epsilon^2). 
    \end{equation*}
    \item (Theorem~\ref{thm:error rate}) The probability of successful heralding, called the \emph{heralding rate} $h_n(\epsilon)$, may be expressed in terms of the error-free heralding rate $h_n(0)$ as follows: 
    \begin{equation*}
        h_n(\epsilon) = h_n(0) - (n-1)h_n(0)\epsilon + O(\epsilon^2).
    \end{equation*}
    For $n\geq 4$, we have $h_n(0)\approx 1/4$, and therefore for small $\epsilon$, we obtain heralding rate
    \begin{equation*}
        h_n(\epsilon) \approx \dfrac{1}{4} - \dfrac{n-1}{4}\epsilon.
    \end{equation*}
    \item (Theorem~\ref{thm:resources}) The expected number of photons required to distill a single output photon using an $n$-photon protocol is $n/h_n(\epsilon)$. For $n\geq 4$ and small $\epsilon$, the number of photons required is then approximately $4n$. This is in contrast to the previous state-of-the-art protocols \cite{marshall_distillation_2022}, for which the cost is $O(n^2)$ (see Remark~\ref{remark:efficiency}). 
    \item (Theorem~\ref{thm:loss}) If each beamsplitter in the linear optical circuit has loss rate $\lambda$, then the lossy heralding rate is 
    \begin{equation*}
        h_n(\epsilon; \lambda)\geq (1-\lambda)^{(n-1)\log n} h_n(\epsilon).
    \end{equation*}
    The lower bound becomes an equality when $\epsilon=0$. In this setting, the number of photons required for successful distillation is, on average, approximately
    \begin{equation}
        \dfrac{4n}{(1-\lambda)^{(n-1)\log n}},
    \end{equation}
    as shown in \eqref{eq:lossy resource estimate}. 
    For $\lambda = 0.01$, $n=16$, distillation to reduce the error rate by a factor of $16$ then requires an average of $7.3$ runs of $16$ photons at a time, with a total expected cost of approximately $117$ photons. 
    We note that this outperforms even the \emph{lossless} version of the previous state-of-the-art protocols \cite{marshall_distillation_2022}, which would require approximately $256$ photons to achieve a similar reduction in error. 
    \item (Eq. \eqref{eq:lossy fidelity}) Let $\Lambda := 1- (1-\lambda)^{\log n}$ be the average loss probability per photon. Given successful heralding, the probability of obtaining a single ideal photon in the output mode (i.e., no loss and no distinguishability errors) is 
    \begin{equation*}
        (1-\Lambda)(1-e_n(\epsilon)) + O(n\Lambda\epsilon),
    \end{equation*}
    where $e_n(\epsilon)$ is the error rate in the lossless case as above. 
    In other words, up to \emph{second-order} corrections involving both a photon loss and a distinguishability error, the output fidelity is simply the product of the output fidelity in the lossless case and the probability that the output photon is not lost. 
    \item (Theorem~\ref{thm:suppression body}) We prove that for the discrete Fourier transform $F_n$, the suppression laws are precisely characterized by the Zero Transmission Law of Ref. \cite{tichy2010zero} if and only if $n$ is a prime power. 
    The connection between suppression laws and the performance of the corresponding distillation protocols is discussed in Section~\ref{sec:symmetry body} and Appendix~\ref{sec:symmetry}. 
    \item (Section~\ref{sec:numerics}) Given a distinguishability error rate $\epsilon$, we provide numerical simulations to study which distillation protocols (i.e., which choice of $n$ and abelian group of order $n$) lead to the smallest output error rates. 
\end{enumerate}

\section{Preliminaries}\label{sec:preliminaries}

\subsection{Linear optics}\label{sec:linear optics}

In what follows, we will consider the space of $n$ photons (not necessarily indistinguishable) in $m$ (external) modes. 
We begin with the \emph{first quantization}, following the notation of \cite{englbrecht2024indistinguishability}. 
In this framework, each photon has state space $\mc{H} = \h{ext}\otimes\h{int}$, where $\h{ext}$ corresponds to the ``external" modes\textemdash those manipulated by the experiment\textemdash and $\h{int}$ corresponds to the ``internal" modes\textemdash those not manipulated by the experiment. 
We identify $\h{ext} = \mathbb{C}^m$ with basis $\ket{0}, \dots, \ket{m-1}$, where $\ket{k}$ describes a photon in mode $k$. 
The state space corresponding to $n$ such photons is 
\begin{equation}
    \hn = \h{ext}^{\otimes n}\otimes \h{int}^{\otimes n}.
\end{equation}
Given a $\ket{\psi}\in\hn$, it has a corresponding \emph{external density matrix} in $\hne$ obtained by taking the partial trace over $\h{int}^{\otimes n}$. 
For experiments involving only linear optics on the external modes, the behavior of the state is fully characterized by its external density matrix \cite{englbrecht2024indistinguishability}. 
For our purposes, however, we will often need to consider both the external and internal degrees of freedom. 

The standard basis for $\hne$ in the first quantization picture is $\ket{m_1, \dots, m_n} = \ket{m_1}\otimes\cdots\otimes\ket{m_n}$, where the $i$th photon is in mode $m_i\in \{0, \dots, m-1\}$. 
When it is necessary to clarify that a state is using first quantization notation, we will instead write $\ket{m_1, \dots, m_n}_{1Q}$. 

In this work, we will consider many different representations of permutation groups on $\hn$ and the related subspaces. 
In the \emph{photon permutation representation}, the symmetric group $S_n$ acts on $\hn$ by permuting the tensor factors, thus permuting the particles. 
The group $U_{n,m}$ of \emph{linear optical unitaries} is the group of unitary operators on $\hn = \h{ext}^{\otimes n}\otimes \h{int}^{\otimes n}$ of the form $U^{\otimes n}\otimes I^{\otimes n}$, where $U\in U(m)$ is an $m$-mode unitary operator on $\h{ext}$. 
In particular, linear optical unitaries operate only on the external degrees of freedom and are symmetric with respect to permutation of photons. 
We will often write $\hat{U}$ to denote the linear optical unitary corresponding to $U\in U(m)$; when the context is clear, we often simply write $U$. 
We also extend the notation to arbitrary operators $T$ on $\mc{H}_{\textnormal{ext}}$ (not necessarily unitary), 
writing $\hat{T}$ for the operator $T^{\otimes n}\otimes I^{\otimes n}$ on $\hn = \h{ext}^{\otimes n}\otimes \h{int}^{\otimes n}$. 

We will also use the \emph{second quantization} representation of photonic states. We begin with the case in which only external degrees of freedom are considered. 
In this setting, we use the \emph{Fock basis} states $\ket{s_0, \dots, s_{m-1}} = \ket{s_0, \dots, s_{m-1}}_{2Q}$, which describe an $m$-mode state with $s_0$ photons in mode $0$, $s_1$ photons in mode $1$, and so on. 
For any mode $i$, we consider the \emph{creation operator} acting on that mode, 
\begin{equation}
    \cre_i = \sum_{n\geq 0} \sqrt{n} \ketbra{n}{n-1}_i.
\end{equation}
We also have $a_i=(a_i^\dag)^\dag$, the corresponding annihilation operator. 
Representing the \emph{vacuum} state by $\vac$, we may express all Fock basis states (up to normalization) by applying appropriate creation operators: 
\begin{equation}
    \ket{s_0, \dots, s_{m-1}}_{2Q} = \dfrac{1}{\sqrt{s_0! s_1!\cdots s_{m-1}!}}(\cre_0)^{s_0}\cdots (\cre_{m-1})^{s_{m-1}}\vac. 
\end{equation}
Linear optical unitary evolution in the second quantization is characterized by the same $U\in U(m)$ matrix mentioned above, which independently evolves each creation operator: $a_j^\dag \rightarrow \sum_{i=0}^{m-1} U_{ij}a_i^\dag$.
Note that the Fock basis states are symmetric with respect to permutation of photons: for example, 
\begin{equation}
    \ket{1,1}_{2Q} = \cre_0 \cre_1\vac = \frac{1}{\sqrt{2}}\left(\ket{0,1}_{1Q} + \ket{1,0}_{1Q}\right).
\end{equation}
Thus the Fock basis is a basis for the \emph{symmetric subspace} of $\h{ext}^{\otimes n}$. 

\begin{remark}\label{rem:conversion}
We may convert between first and second quantization as follows. We let $F$ be the symmetrization map from $\h{ext}^{\otimes n}$ to its symmetric subspace, with (for example) 
\begin{equation}
    F(\ket{0,1}_{1Q}) = \frac{1}{\sqrt{2}}\left(\ket{0,1}_{1Q} + \ket{1,0}_{1Q}\right) = \ket{1,1}_{2Q}.
\end{equation}
In particular, we have $F(\ket{m_0, \dots, m_{n-1}}_{1Q}) = \ket{s_0, \dots, s_{m-1}}_{2Q}$, where $s_i$ is the number of photons in mode $i$ (the number of indices $j$ with $m_j = i$). 
The map $F$ is not invertible, but given $\ket{s_0,\dots, s_{m-1}}_{2Q}$, there is a unique \emph{weakly increasing} $\ket{m_0, \dots, m_{n-1}}_{1Q}$ with $F(\ket{m_0, \dots, m_{n-1}}_{1Q}) = \ket{s_0,\dots, s_{m-1}}_{2Q}$. 
\end{remark}

When there are internal degrees of freedom under consideration, we use the notation $\cre_i[\xi_j]$ to indicate the creation of a particle with internal state $\ket{\xi_j}$ in external mode $i$. 
For a fixed orthonormal basis $\{\ket{\xi_0}, \ket{\xi_1}, \dots, \}$ of $\h{int}$, we will consider (appropriately normalized) states of the form 
\begin{equation}\label{eq:internal fock}
    \cre_{m_0}[\xi_{i_0}]\cdots \cre_{m_{n-1}}[\xi_{i_{n-1}}]\vac.
\end{equation}
These states are symmetric under permutation of photons, as the notation $\cre_k[\xi_{i_j}]$ only indicates the mode and internal state, not the particular photon (tensor factor in $\hn$); in fact, they form a basis for the photon-permutation-symmetric subspace of $\mathcal{H}^{\otimes n}$. 
This is in alignment with \cite{englbrecht2024indistinguishability}. 

We will consider two main types of measurements. The first is \emph{photon number resolving detection} (PNRD), formalizing the notion of a measurement that counts the number of photons in each mode. 
(We may also perform PNRD on a subset of modes in the obvious way.) 
This is a projective measurement on $\hn_\textnormal{ext}$, projecting onto the subspaces $V_{s_0, \dots, s_{m-1}}$ where
\begin{equation}
    V_{s_0, \dots, s_{m-1}} = \textnormal{span}\{\ket{m_0, \dots, m_{n-1}}_{1Q}: F(\ket{m_0, \dots, m_{n-1}}_{1Q}) = \ket{s_0, \dots, s_{m-1}}_{2Q}\}
\end{equation}
(recalling the map $F$ of Remark~\ref{rem:conversion}). 
This is just the space spanned by all states with $s_0$ photons in mode $0$, $s_1$ photons in mode $1$, and so on. 
We call $(s_0, \dots, s_{m-1})$ the obtained \emph{measurement pattern}. 
For symmetric states in $\h{ext}^{\otimes n}$, PNRD is simply projection onto the Fock basis; the above version allows for nonsymmetric states and internal degrees of freedom. 
Note that the internal degrees of freedom may be traced out for PNRD. 

For certain calculations, we will be interested in measuring the internal modes as well. 
Specifically, we consider \emph{internal-external measurement}, a projective measurement with respect to the states \eqref{eq:internal fock} (appropriately normalized). 
We note, however, that this will only be used mathematically, to calculate output error rates of distillation protocols; 
the protocols themselves do not require any measurements beyond PNRD. 

We now define perfect indistinguishability of photons, following \cite{englbrecht2024indistinguishability}. Let $\rho$ be a density matrix, and let $\rho_{\textnormal{ext}}$ be the corresponding external state, obtained by tracing out the internal Hilbert space. 
Let $P^{(n)}$ be the symmetrizer on $\h{ext}^{\otimes n}$, projecting to states that are invariant under permutation of photons. 
Then $\rho$ is \emph{perfectly indistinguishable} if $P^{(n)}\rho_\textnormal{ext} = \rho_\textnormal{ext}$. 
The main examples in this paper correspond to states 
$\rho = \ketbra{\psi}$, where $\ket{\psi} = \cre_{m_0}[\xi_0]\cdots \cre_{m_{n-1}}[\xi_0]|\vec{0}\rangle$. 
More generally, any state $\ket{\psi}$ that can be expressed as a pure tensor $\ket{\psi_\textnormal{ext}}\otimes\ket{\psi_\textnormal{int}}$, with $\ket{\psi_\textnormal{ext}}$ and $\ket{\psi_\textnormal{int}}$ symmetric under permutation of photons, is perfectly indistinguishable. 
We discuss a nontrivial such example in Appendix~\ref{sec:one}.
On the other hand, any state of the form \eqref{eq:internal fock} involving two distinct (orthogonal) internal states $\ket{\xi_i}, \ket{\xi_j}$ is \emph{not} perfectly indistinguishable.

\subsection{Models for distinguishability}\label{sec:models}

Following \cite{sparrow_quantum_2017, marshall_distillation_2022, saied2024advancing}, we will consider variants of the \emph{random source} (RS) model for photon distinguishability, 
in which photon sources independently produce photons with average internal state $\rho = \sum_{j\geq 0} p_j \ketbra{\xi_j}$, where the $\ket{\xi_j}$ range over an orthonormal basis for $\h{int}$.
We fix this basis $\{\ket{\xi_0}, \ket{\xi_1}, \dots\}$ for $\h{int}$ going forward. 
Note that this model is equivalent, under the assumption that the internal degrees of freedom cannot be resolved or manipulated by the experiment, to a pure state description $\sum_j e^{i\phi_j}\sqrt{p_j}|\xi_j\rangle$ \cite{sparrow_quantum_2017, saied2024advancing}. 
Without loss of generality, we assume that $p_0$ is the dominant eigenvalue and write $\epsilon = 1-p_0 = \sum_{j\geq 1} p_i$, so that
\begin{equation}
    \rho = \rho(\epsilon) = (1-\epsilon)\ketbra{\xi_0} + \epsilon \sum_{j\geq 1} p_j' \ketbra{\xi_j}
    \label{eq:rho_eps}
\end{equation}
in the RS model, with $p_j' = p_j/\epsilon$. 
We write $\rho = \rho(\epsilon)$ to emphasize the dependence on the parameter $\epsilon$ (suppressing the $p_i$ from the notation). 
We view the state as a probabilistic mixture: with large probability $1-\epsilon$ we obtain the ``ideal" state $\ket{\xi_0}$, and with small probability $\epsilon$ we obtain one of the ``distinguishable" error states $\ket{\xi_j}$, orthogonal to $\ket{\xi_0}$. 
In terms of creation operators, we may represent $\rho(\epsilon)$ by the appropriate probabilistic mixture of the creation operators $\cre[\xi_j]$. 

In this work, we often restrict our attention to the \emph{uniform RS} (URS) model, in which we take $p_1' = p_2' = \cdots = p_{R}'=1/R$ and $p_j = 0$ for $j> R$, so that
\begin{equation}
    \rho(\epsilon) = (1-\epsilon)\ketbra{\xi_0} + \frac{\epsilon}{R} \sum_{j=1}^{R} \ketbra{\xi_j}. 
\end{equation}

For ease of exposition and simulation, 
we will often consider two extreme cases that are useful for restricting the possible internal states. 
First, we have the Same Bad Bits (SBB) model \cite{saied2024advancing}, where we take 
$R = 1$ in the URS model, so that each photon has average internal state
\begin{equation}
    \rho(\epsilon) = (1-\epsilon)\ketbra{\xi_0} + \epsilon\ketbra{\xi_1}.
\end{equation}
This model captures a situation with systematic errors in which photons are prone to exhibit the same type of distinguishability error. 
This may be viewed as the special case in which 
the internal state space $\h{int}$ is $2$-dimensional. 
On the other hand, we consider the Orthogonal Bad Bits (OBB) model, where the $i$th photon has internal state
\begin{equation}
    \rho_i(\epsilon) = (1-\epsilon)\ketbra{\xi_0} + \epsilon\ketbra{\xi_{i+1}}. 
\end{equation}
Recall that $\braket{\xi_i}{\xi_j} = \delta_{ij}$. 
As discussed in \cite{sparrow_quantum_2017}, this corresponds to the limit of the URS model as $R\rightarrow\infty$. 
This model captures a situation in which error states are uniformly random with many degrees of freedom, so that it is vanishingly unlikely for the same error to occur twice in the same experiment. 
Since the OBB and SBB models may be viewed as opposite extremes of the URS model, we choose them as a particular focus for numerical simulation and examples. 

Finally, we compare our parameter $\epsilon$ to the \emph{visibility}. In the URS model, one can compute the visibility between two states drawn from the distribution as $V = \mathrm{Tr}[\rho(\epsilon)^2] = (1-\epsilon)^2 + \epsilon^2/R$. For the OBB limit, this gives visibility $V = (1-\epsilon)^2$. 
In any case we have $V = 1-2\epsilon + O(\epsilon^2)$, so when $V$ is large there is little difference between the corresponding values of $\epsilon$ in different error models. 
Using the above, visibility $V\geq 0.74$ (a rough lower bound on experimental visibility values in some recent literature \cite{halder_high_2008, tsujimoto_high_2017, tambasco_quantum_2018, wang_research_2019, ollivier_hong-ou-mandel_2021, somhorst_quantum_2023, alexander_manufacturable_2024}) approximately translates to $\epsilon \leq 0.15$. 
We are therefore primarily interested in effective distillation protocols for these relatively small values of $\epsilon$.

\subsubsection{Initial states}\label{sec:initial}

We briefly return to the general RS model. 
We will take $m=n$, considering 
experiments in which $n$ photons are injected into $n$ different (external) modes of an interferometer, with the photon in mode $i$ having internal state $\rho_i(\epsilon)$. 
When $\epsilon=0$ and internal degrees of freedom are neglected, this corresponds to the idealized Fock state $\ket{1, \dots, 1}_{2Q}$. 
In general, we call the initial state $\rho^{(n)} = \rho^{(n)}(\epsilon)$. 
We may decompose $\rho^{(n)}$ as follows: 
\begin{align}\label{eq:total_2Q}
    \rho^{(n)} = \sum_{j_0, \dots, j_{n-1}} (p_{j_0}\cdots p_{j_{n-1}})\cre_0[\xi_{j_0}] \cdots \cre_{n-1}[\xi_{j_{n-1}}]\kbvac a_{0}[\xi_{j_{0}}]\cdots a_{n-1}[\xi_{j_{n-1}}].
\end{align}
This is a probabilistic mixture of many terms; 
by abuse of notation, we often write
\begin{equation}\label{eq:internal fock experiment}
    \cre_0[\xi_{j_0}] \cdots \cre_{n-1}[\xi_{j_{n-1}}]\vac = \ket{1, \dots, 1}_{2Q}\otimes\ket{\xi_{j_0},\dots,\xi_{j_{n-1}}}.
\end{equation}
Note that in general, the left-hand side is not actually a pure tensor. 
For example, $\cre_0[\xi_0]\cre_1[\xi_1]\vac = |0,1\rangle_{1Q}\otimes |\xi_0, \xi_1\rangle + |1,0\rangle_{1Q}\otimes |\xi_1, \xi_0\rangle$ (up to normalization). 

We say that a state \eqref{eq:internal fock experiment} has $k$ \emph{distinguishability errors}, where $k$ is the number of indices $i$ with $j_i\neq 0$. 
We will often consider small values of $\epsilon$, so that the expression is dominated by the terms that are first order or less in $\epsilon$; in other words, those with $0$ or $1$ distinguishability errors. 
We note that, for any RS model, the weight of the terms with exactly $1$ distinguishability error is $n(1-\epsilon)^{n-1}\sum_{j\geq 1}p_j = n\epsilon (1-\epsilon)^{n-1}$. 
For the calculations of interest, we will see that these first-order terms exhibit essentially the same behavior regardless of the particular model. 

In the URS model (or its OBB limit), we write the input state as 
\begin{align}\label{eq:error decomp}
    \rho^{(n)}(\epsilon) = \sum_{k=0}^n \binom{n}{k}\epsilon^k(1-\epsilon)^{n-k} \Phi_k, 
\end{align}
where $\Phi_k$ is the mixed state involving all terms with $k$ distinguishability errors, normalized to have $\Tr(\Phi_k)=1$. 
Going forward, we will exclusively discuss the URS framework, but the results involving only terms with $0$ or $1$ distinguishability errors, such as Theorems \ref{thm:error rate}, \ref{thm:herald first order}, \ref{thm:herald ideal}, \ref{thm:resources}, \ref{thm:symm body} in fact hold for any RS model.

\subsection{Distillation and error correction}\label{sec:error correction}

Before proceeding to the main results on distillation of distinguishability errors, we give a brief overview of a major motivation for this work, namely the potential relevance of distillation to photonic quantum error correction. 
As discussed later in this section, distinguishability errors can lead to logical errors in fault-tolerant quantum computation. As a result, reducing distinguishability error rates can lead to higher thresholds for other errors (such as photon loss), which may be harder to engineer away. 
This can therefore allow for smaller code distances and smaller system sizes. 
Of course, this could in principle be achieved by either distillation or source engineering. In practice however, engineering a source to achieve (for example) an order of magnitude reduction in error, historically, appears to be a challenging task. 
Distillation, on the other hand, is a general, tunable, and source-agnostic approach that can be used in a modular fashion as part of single photon generation. 
The new protocols are highly efficient (with an order of magnitude error reduction requiring around $40$ photons) and can be implemented with existing technologies \cite{faurby2024purifying}, making them a promising supplement to source engineering. 
We therefore believe both source engineering and distillation to be important aspects to consider for realizing highly pure, indistinguishable, and heralded single photon sources. 

For our discussion of quantum error correction, we consider \emph{fusion-based quantum computation} (FBQC), a paradigm for universal fault-tolerant quantum computation starting from single photon sources \cite{bartolucci_fusion-based_2023-1}. The details of FBQC are beyond the scope of this work, but the main idea is to use probabilistic entangling measurements and multiplexing to generate many copies of constant-sized entangled \emph{resource states}, then carry out (destructive) probabilistic Bell measurements, called \emph{Type II fusions}, between appropriate pairs of qubits (photons). This is similar in spirit to other measurement-based approaches to quantum error correction, but with the state generation happening in a modular fashion to allow for multiplexing. 
We view Type II fusion as measuring two observables, $X\otimes X$ and $Z\otimes Z$. 
One may show \cite{rohde2006error, sparrow_quantum_2017} that fusion between an ideal photon and another with internal state $\rho(\epsilon)$ results in flipping the value of the $X\otimes X$ observable with probability $\epsilon/2$. 
Then in FBQC, we may approximate distinguishability as leading to errors in fusion measurements. In particular, since the roles of the $X\otimes X$ and $Z\otimes Z$ observables are often swapped, we roughly approximate a distinguishability error rate $\epsilon$ by measurement errors with probability $\epsilon/4$. 

To illustrate the effect of distinguishability errors in this setting, we consider the (already very small) error rate $\epsilon\approx 5\times 10^{-3}$, roughly corresponding to the visibility achieved in \cite{alexander_manufacturable_2024}. This translates to an approximate measurement error rate of $p_{m}=\epsilon/4\approx 1.25\times 10^{-3}$. 
For simplicity, we consider the most basic FBQC architecture, using the $6$-ring resource state to construct an analogue of the surface code \cite{bartolucci_fusion-based_2023-1, bombin2023logical}. In that setting, our error rate $p_m$ is below the measurement error threshold and yields a photon loss threshold of approximately $6\times 10^{-3}$, as seen in Figure 9 of \cite{bombin2023increasing}. 
However, we will see that there is significant benefit to decreasing $p_m$ further. 
As calculated in Section~\ref{sec:iterated}, applying our $H_{16}$ distillation protocol reduces the distinguishability error rate to $\epsilon'\approx 3.6\times 10^{-4}$, with a corresponding approximate FBQC measurement error rate of $p_m'\approx 8.9\times 10^{-5}$ and photon loss threshold very near the maximum threshold of $1.6\times 10^{-2}$. 
Thus we nearly \emph{triple} the photon loss threshold by improving the distinguishability error rate even beyond the values obtained by state-of-the-art single photon sources. 
Due to the high photon loss rates in linear optics, increasing this threshold is a major priority. 

We now consider the effect of reduced error rates on the code distance and resource requirements. 
Let us consider a simple (unrealistic) model with no photon loss, only measurement error due to distinguishability. 
The logical error rate $p_L$ roughly scales as follows, where $p_m$ is the measurement error rate, $p_{th}$ is the measurement error threshold, and $d$ is the code distance \cite{fowler2012surface, bombin2023logical, litinski2022active_doi}: 
\begin{equation}
    p_L\sim C \left(\dfrac{p_m}{p_{th}}\right)^{d/2}. 
\end{equation}
Taking $C=1$ for convenience, we consider $p_m=1.25\times 10^{-3}$ as in the above example, and the corresponding lossless measurement error threshold $p_{th}=2.1\times 10^{-3}$ (from Figure 9, \cite{bombin2023increasing}). 
This leads to
\begin{equation}
    p_L\sim (0.77)^d. 
\end{equation}
If we target logical error rate $p_L\leq 10^{-6}$, this requires a code distance of $d\geq 53$. 
On the other hand, if we use $H_{16}$ distillation as above to reduce the measurement error rate to $p_m'\approx 8.9\times 10^{-5}$, we obtain 
\begin{equation}
    p_L'\sim (0.21)^d,
\end{equation}
requiring a code distance of $d\geq 9$ to obtain the same desired logical error rate. 
Since this FBQC architecture is carried out in ``logical blocks" of $d^3$ resource states \cite{bombin2023logical}, distillation would reduce the required number of resource states per logical block from $148877$ to $729$. 
Of course, in practice, one will need $d$ much larger than $9$ in order to handle photon loss, which is the dominant source of noise. 
But as argued above, reduced distinguishability also leads to increased tolerance of photon loss. 
This in turn will reduce the required code distance via a similar argument. 

In summary, we see that reducing distinguishability error rates far below the error threshold can improve thresholds for other types of errors, such as photon loss, and can greatly reduce the large-scale resource requirements of a fault-tolerant linear optical quantum computation. 
We may view distillation and source engineering as complementary means toward achieving this reduction. 
Since distillation requires only standard linear optical hardware, it is likely far cheaper to use distillation rather than purchasing or engineering a state-of-the-art photon source. 
Further, as seen in the example above, even the best known photon sources may not have low enough distinguishability error rates for efficient fault-tolerance. 
In such a setting, one may apply distillation to the output of state-of-the-art single photon sources to obtain even smaller distinguishability error rates and therefore more scalable and efficient linear optical quantum computation.

\section{Main results}\label{sec:main}

\subsection{Distillation protocols}\label{sec:distillation}

In this work, we will consider \emph{distillation protocols}  generalizing the work of \cite{marshall_distillation_2022}, described in Protocol~\ref{proto:distillation} and illustrated in Figure~\ref{fig:dist-cartoon}. 
Unless stated otherwise, we assume that the only errors are distinguishability errors, and we model distinguishability with the URS error model (or the OBB limit) with error rate $\epsilon$. 
(Photon loss will be addressed in Section~\ref{sec:loss}.) 

Consider an $n\times n$ matrix $U$. Define the set of \emph{ideal patterns} corresponding to $U$ to be the set of all $(s_0, \dots, s_{n-1})$ such that 
\begin{equation}\label{def:ideal pattern}
    s_0 = 1\textnormal{ and }\bra{s_0, \dots, s_{n-1}}\hat{U}\ket{1, \dots, 1} \neq 0,
\end{equation}
where both states are Fock basis states (with no internal degrees of freedom). 

\begin{protocol}\label{proto:distillation}
Given a number $n$ of modes and an $n\times n$ unitary matrix $U$ (corresponding to a linear optical unitary), we describe the corresponding $n$-photon \emph{distillation protocol}. 
\begin{enumerate}
    \item Input a state $\sigma$ of $n$ photons in $n$ distinct modes. 
    We will typically consider $\sigma = \rho^{(n)}(\epsilon)$. 
    \item Apply $\hat{U}$, mapping $\sigma\mapsto \mu = \hat{U}\sigma \hat{U}^\dagger$. 
    \item Perform PNRD on the final $n-1$ modes of $\mu$, 
    obtaining a \emph{measurement pattern} $(s_1, \dots, s_{n-1})$ with $s_i$ being the number of photons found in mode $i$. 
    Label mode $0$, the unmeasured mode, as the \emph{output mode}; we call the single-mode post-measurement state the \emph{output state}. Letting $s_0 = n-\sum_{j\geq 1} s_j$, the output state has $s_0$ photons. We call $(s_0, \dots, s_{n-1})$ the corresponding \emph{completed measurement pattern}. 
    \item \emph{Post-select} for ideal patterns, as defined in \eqref{def:ideal pattern}. In other words, if the pattern is not ideal, the photon is rejected. 
\end{enumerate}
\end{protocol}
The purpose of the distillation protocol is to herald the output of single photons with reduced distinguishability error rate, enabling more efficient linear optical quantum computation. 
By post-selecting for patterns with $s_0=1$, we guarantee that the output state has a single photon. (Assuming distinguishability is the only error.) 
By post-selecting for ideal patterns in particular, we \emph{herald} the output of a single photon with (hopefully) a smaller rate of distinguishability error. 
This is simply an application of conditional probability: one needs to show that, given the knowledge that the completed measurement pattern is ideal, the output photon is more likely to be ideal.

Given such a distillation protocol with input state $\sigma = \rho^{(n)}(\epsilon)$, there will be two main metrics of interest, functions of the input error rate $\epsilon$. 
First is the \emph{heralding rate} $h_n(\epsilon)$, the probability of obtaining an ideal pattern during the measurement step. 
Second is the \emph{output error rate} $e_n(\epsilon)$. If distinguishability is the only error, this is the probability that, given the heralding of an ideal pattern, 
the output photon has a ``distinguishable" internal state $\ket{\xi_i}$, $i>0$. 
(This is made rigorous via an internal-external measurement, as discussed in Section~\ref{sec:linear optics}.  
We discuss the case in which there are loss errors in addition to distinguishability in Section~\ref{sec:loss}.
) 
For an arbitrary density matrix $\sigma$, we write $h_n(\sigma)$, $e_n(\sigma)$ to be the corresponding heralding and output error rates for the distillation protocol with input state $\sigma$. 

We will be particularly interested in protocols using the quantum Fourier transform and Hadamard matrices. 
We briefly discuss these matrices and the corresponding ideal patterns now: for a more detailed analysis, we refer to Appendix~\ref{sec:symmetry}. 
We also compare to the case of Haar random matrices in Appendix~\ref{sec:haar}. 

The Hadamard matrices under consideration are defined for $n=2^r$ by $H_n = H^{\otimes r}$, where $H$ is the standard $2\times 2$ Hadamard matrix. 
(We note that this family is typically called the family of \emph{Sylvester matrices}.) 
A recursive linear optical circuit for $H_n$ is given in Figure~\ref{fig:hadamard-circuit}. 
For $H_n$, we always take $n$ to be a power of $2$ with $n\geq 4$, even if not explicitly stated. 
The case $n=4$ is equivalent to the $4$-photon distillation protocol in the appendix of \cite{marshall_distillation_2022}, where there is only one ideal pattern, $(1,1,1,1)$. 
To describe the ideal patterns for $H_n$, we take a completed measurement pattern $p = (s_0, \dots, s_{n-1})$ and ``convert to first quantization," obtaining integers $0\leq g_0\leq g_1\leq \dots\leq g_{n-1}\leq m-1$ such that $s_i$ is the number of indices $j$ with $g_j = i$. 
For example, $p=(1,2,0,1)$ corresponds to $g = (0,1,1,3)$. 
With this notation, the ideal patterns for $H_n$ are exactly those satisfying $s_0=1$ and $g_0\oplus \cdots \oplus g_{n-1} = 0$, where $\oplus$ is binary XOR without carrying \cite{crespi2015suppression}. 

Letting $\omega = \exp(2\pi i/n)$, the $n$-mode Fourier transform matrix is given by 
\begin{equation}
    F_n = \frac{1}{\sqrt{n}}(\omega^{ij})_{0\leq i,j\leq n-1}. 
\end{equation}
The distillation protocol for $F_3$ is equivalent to the $3$-photon protocol in the main text of \cite{marshall_distillation_2022}, where there is only one ideal pattern, $(1,1,1)$. 
With the same ``first quantization" notation as above, the ideal patterns for $F_n$ satisfy the \emph{Zero Transmission Law} (ZTL), $g_0 + \dots + g_{n-1} \equiv 0 \mod n$ \cite{tichy2010zero, dittel2018totally}. 
In Theorem~\ref{thm:suppression body}, we prove that whenever $n$ is a prime power, then the ideal patterns are exactly characterized by the ZTL (and the condition $s_0=1$). 
Further, for $n$ not a prime power, we show that the set of ideal patterns is a proper subset of those satisfying this condition (and $s_0=1$). We further discuss these properties in Sections~\ref{sec:symmetry body} and Appendix~\ref{sec:fourier symm} below. 

More generally, we also consider 
\begin{equation}\label{eq:fourier general body}
    F_{(n_1, \dots, n_\ell)} = F_{n_1}\otimes \cdots\otimes F_{n_\ell},
\end{equation}
where $n=n_1\cdots n_\ell$ and the $n_i$ are integers satisfying $n_i\geq 2$. 
Noting that $H=F_2$, we have $F_n = F_{(n)}$ and $H_n = F_{(2,2,\dots, 2)}$. 
We discuss this case in far greater detail in Appendix~\ref{sec:fourier abelian}. 
Our discussion in the main text will focus primarily on the $F_n$ ($n\geq 3$) and $H_n$ ($n\geq 4$) cases, with mentions of the ``Fourier" case generally referring to $F_n$ unless stated otherwise. 
(Note, even if not explicitly stated, we will always consider $n>2$ because $H_2 = F_2 = H$ has no ideal patterns, by the standard HOM effect.) 
We have the following result. 

\begin{theorem}\label{thm:error rate}
Consider an $n$-photon distillation protocol with unitary $U=F_{(n_1, \dots, n_\ell)}$, $n=n_1\cdots n_\ell > 2$, and input state $\rho^{(n)}(\epsilon)$. 
(Note this encompasses the cases $F_n$ and $H_n$.) 
The output error rate satisfies
\begin{align}
    e_n(\epsilon) = \frac{\epsilon}{n} + O(\epsilon^2).
\end{align}
\end{theorem}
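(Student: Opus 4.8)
The plan is to expand the input state $\rho^{(n)}(\epsilon)$ using the error decomposition \eqref{eq:error decomp} and keep only the terms through first order in $\epsilon$, since the output error rate is a ratio of first-order quantities. Write $\rho^{(n)}(\epsilon) = (1-\epsilon)^n \Phi_0 + n\epsilon(1-\epsilon)^{n-1}\Phi_1 + O(\epsilon^2)$, where $\Phi_0$ is the perfectly indistinguishable state $\ket{1,\dots,1}_{2Q}\otimes\ket{\xi_0,\dots,\xi_0}$ and $\Phi_1$ is the uniform mixture over the $n$ desymmetrized Fock states with exactly one photon (say photon $\ell$) carrying a distinguishable internal label. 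Applying $\hat U$ and post-selecting on an ideal pattern $p$, the heralding probability is, to leading order, $h_n(0)$ times $(1-\epsilon)^n$ plus an $O(\epsilon)$ correction from $\Phi_1$; the numerator of $e_n(\epsilon)$ — the probability of heralding \emph{and} the output photon being distinguishable — comes entirely from the $\Phi_1$ piece, since $\Phi_0$ contributes zero distinguishable output. Hence $e_n(\epsilon) = \dfrac{n\epsilon(1-\epsilon)^{n-1} \cdot (\text{prob. of heralding with bad output photon} \mid \Phi_1)}{(1-\epsilon)^n \cdot h_n(0)} + O(\epsilon^2)$, so everything reduces to an $\epsilon=0$ computation involving $\Phi_0$ and $\Phi_1$.

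The heart of the argument is then a combinatorial/linear-algebraic identity at $\epsilon=0$. For the single-error state $\Phi_1$, the distinguishable photon $\ell$ does not interfere with the other $n-1$ (indistinguishable) photons, so the amplitude for landing in a completed pattern $p$ with the bad photon in mode $0$ factors as $U_{0\ell}$ times a permanent over the remaining modes/photons, while the amplitude for the bad photon ending in some measured mode $k\ge 1$ gives a contribution to heralding-but-discarding (when the output photon is good) or heralding-with-bad-output. I would compute, summing over ideal patterns $p$: (i) the total weight $W_{\text{bad}}$ of events where the output photon is the distinguishable one, and (ii) the total weight where it is good but one measured photon is bad; the output error rate's numerator is governed by $W_{\text{bad}}$. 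The key claim to establish is that $W_{\text{bad}} = \frac{1}{n}\, h_n(0)$ exactly, which then yields $e_n(\epsilon) = \frac{\epsilon}{n} + O(\epsilon^2)$ after the $(1-\epsilon)$ factors cancel to this order.

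To prove $W_{\text{bad}} = \frac{1}{n} h_n(0)$, I would exploit the structure of $F_n$ and $H_n$. For $F_n$: by the ZTL characterization, an ideal pattern with a single bad photon reduces the condition to the remaining $n-1$ good photons, and a symmetry/averaging argument over which input mode $\ell$ carries the error — using that all rows/columns of $F_n$ have equal modulus $1/\sqrt n$, and that the ideal-pattern set is invariant under the relevant cyclic shift symmetry (Appendix~\ref{sec:symmetry}) — forces the bad photon to be equidistributed over the $n$ output modes conditioned on heralding, giving probability exactly $1/n$ of it being the output photon. The analogous statement for $H_n$ follows from the XOR suppression law and the tensor-product (Sylvester) structure, again with all entries of equal modulus. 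The main obstacle I anticipate is making this equidistribution rigorous: one must show not merely that each entry of $U$ has the same modulus, but that the set of ideal completed patterns, together with the per-pattern permanents of the good photons, is symmetric enough that the bad photon's output-mode marginal is exactly uniform — this is where the suppression-law symmetry properties referenced in Section~\ref{sec:symmetry body} do the real work, and I would lean on the precise group-invariance statements proved in the appendix rather than re-deriving them. The $(1-\epsilon)$ bookkeeping and the passage from the internal-external measurement to the stated ratio are routine by comparison.
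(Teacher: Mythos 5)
Your first-order bookkeeping is off by a factor of $n$, and the discrepancy sits exactly on top of the mechanism that makes the theorem true. With $W_{\textnormal{bad}} = \ole_n(\Phi_1)$ the probability, given a one-error input, of heralding an ideal pattern \emph{and} outputting the bad photon, your own formula reads $e_n(\epsilon) = n\epsilon\, W_{\textnormal{bad}}/h_n(0) + O(\epsilon^2)$, so the conclusion $e_n(\epsilon) = \epsilon/n + O(\epsilon^2)$ requires $W_{\textnormal{bad}} = h_n(0)/n^2$; your stated key claim $W_{\textnormal{bad}} = h_n(0)/n$ would instead give $e_n(\epsilon) = \epsilon + O(\epsilon^2)$, i.e.\ no distillation at all. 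The root cause is visible in your justification: you argue that, \emph{conditioned on heralding}, the bad photon occupies the output mode with probability $1/n$, which gives $W_{\textnormal{bad}} = \frac{1}{n}h_n(\Phi_1)$ --- and you have tacitly replaced $h_n(\Phi_1)$ by $h_n(0)$. In fact $h_n(\Phi_1) = h_n(0)/n$ (Theorem~\ref{thm:herald first order}, Part~1): one-error inputs are heralded $n$ times \emph{less} often than the ideal input, and this suppression is where the error reduction actually comes from; the conditional $1/n$ you do argue for merely cancels the factor $n$ from there being $n$ one-error terms in \eqref{eq:error decomp}. Your proposal never computes, or even mentions, the heralding rate of $\Phi_1$, so the central ingredient is missing. (A smaller imprecision: the bad photon is equidistributed over the $n$ \emph{photons}, not the $n$ output modes --- a mode with $n_i = 2$ receives it with probability $2/n$ --- though the two coincide for mode $0$ since $n_0 = 1$.)

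The paper (Appendix~\ref{sec:one}) obtains both ingredients from a single observation that would also repair your argument: the $G$-symmetrization $\ket{\eta_+}$ of a one-error state $\ket{\eta}$ satisfies $|\braket{\eta_+}{\eta}|^2 = 1/n$ and is itself a perfectly indistinguishable state, a pure tensor of $\ket{1,\dots,1}_{2Q}$ with a fully permutation-symmetric internal state. Lemma~\ref{lem:symmetrization} says only $\ket{\eta_+}$ contributes to ideal patterns, whence $h_n(\ketbra{\eta}) = \frac{1}{n}h_n(\ketbra{\eta_+}) = \frac{1}{n}h_n(0)$; and because linear optics and PNRD never act on the (symmetric) internal factor, the heralded output photon is the bad one with probability exactly $1/n$, giving $e_n(\Phi_1) = 1/n$. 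Together these yield $\ole_n(\Phi_1) = h_n(0)/n^2$ and hence the theorem. I would route your argument through this symmetrization rather than through per-pattern permanent factorizations, which would force you to prove the equidistribution pattern by pattern.
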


This is proven in Appendix~\ref{sec:one}. 
As a consequence of the theorem, we see that for sufficiently small $\epsilon$, $e_n(\epsilon)\approx \epsilon/n$. 
Then, when $\epsilon$ is small enough, the protocols with larger $n$ are better at reducing distinguishability errors. 
This is visible in Figure~\ref{fig:err_reduction_fourier}. 
However, we note that as $\epsilon$ grows, the protocols with large $n$ tend to perform worse, 
due to the higher order terms not considered in Theorem~\ref{thm:error rate}. 
In particular, each protocol has a \emph{distillation threshold}, a value of $\epsilon$ after which it does not necessarily reduce the error rate. 
We numerically plot some representative threshold values in Figure~\ref{fig:epsilon_critical}; these values appear to decrease as $n$ increases, likely approaching $0$ in the limit $n\rightarrow\infty$. 
For practical values of $n$, however, the thresholds are reasonably large. For example, the $16$-photon protocols improve the error rate as long as $\epsilon< 0.179$ (for OBB and SBB error models). 

\begin{remark}
    We note that Theorem~\ref{thm:error rate} still holds even if, in Protocol~\ref{proto:distillation}, we post-select for only a \emph{subset} of the ideal patterns. 
    In particular, \cite{somhorst2024photon} (released simultaneously with the first version of the present work) considers the output error rates obtained by post-selecting on only a \emph{single} ideal pattern at a time. 
    Each pattern yields the same reduction in error up to first order; however, \cite{somhorst2024photon} observes numerically that, when considering \emph{higher order} terms in $\epsilon$, different patterns may lead to significantly different output error rates. 
    Thus one might consider pruning the set of ideal patterns and keeping only those with the smallest output error rates. 
    When $\epsilon$ is small relative to the distillation threshold, so that higher order terms are negligible, this will not be beneficial: a smaller set of ideal patterns leads to a smaller heralding rate and thus larger resource costs, without significantly improving the output error rate. (See Section~\ref{sec:iterated} for a discussion of resource costs.) 
    However, as seen in Figure~\ref{fig:err_reduction_fourier}, higher order terms become increasingly nontrivial as $\epsilon$ grows, and it is worth investigating whether such methods could lead to improved performance for large $\epsilon$. 
\end{remark}

In order to understand the utility of a distillation protocol, we must also consider the heralding rate, which determines the resource cost. 
We investigate this problem in Section~\ref{sec:heralding}.

\begin{figure}[h]
    \centering
    \includegraphics[width=\columnwidth]{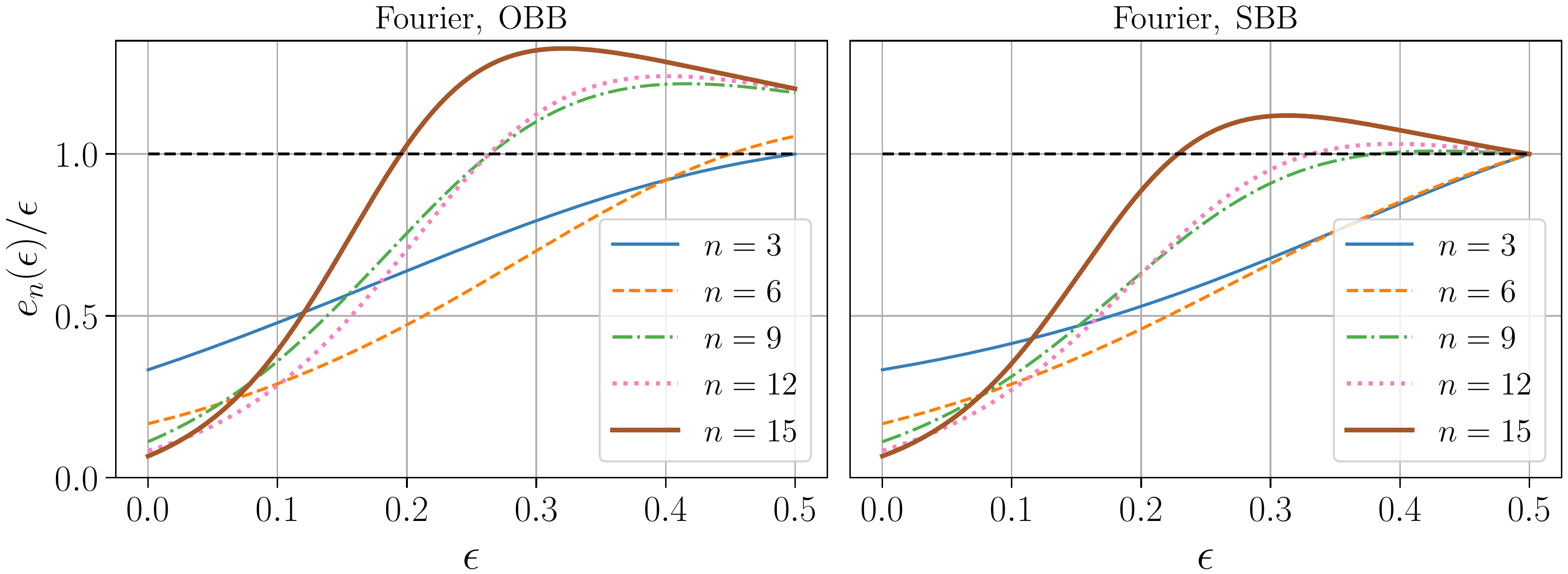}
    \caption{Plot of the error rate reduction with OBB (left) and SBB (right) error models, for the Fourier ($F_n$) distillation protocol. For $\epsilon\rightarrow 0$, $e_n(\epsilon)/\epsilon \rightarrow 1/n$. 
    Distillation is successful at reducing the error rate when the curve is less than $1$, marked by the horizontal dashed line. 
    (Also see Fig.~\ref{fig:epsilon_critical}).}
    \label{fig:err_reduction_fourier}
\end{figure}

\begin{figure}[h]
    \centering
    \includegraphics[width=0.7\columnwidth]{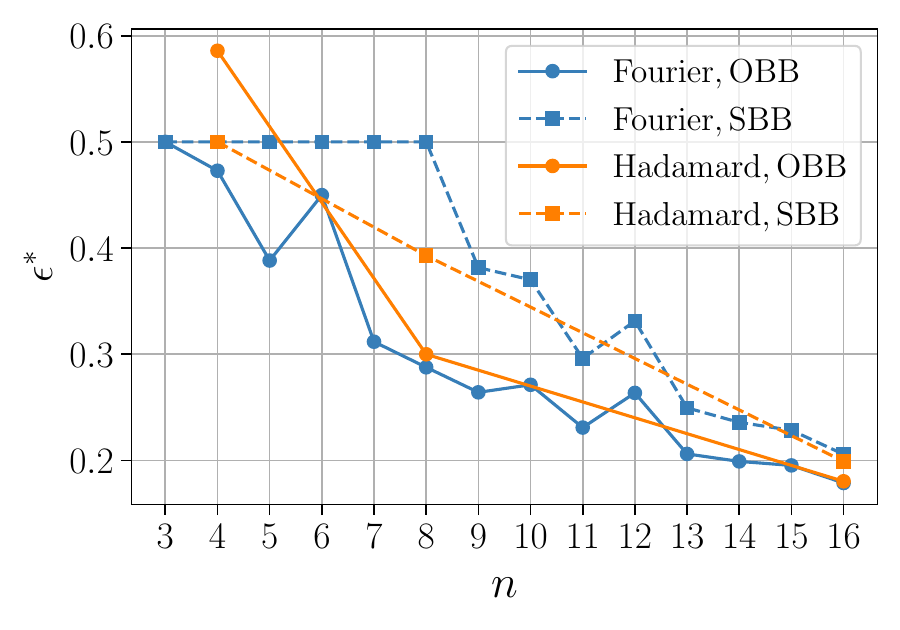}
    \caption{Plot of the distillation thresholds, the smallest value $\epsilon^*>0$ such that $e_n(\epsilon^*) = \epsilon^*$. 
    In particular, distillation is advantageous for the given protocol, as long as the initial error $\epsilon$ satisfies $\epsilon < \epsilon^*$.
    We plot both SBB and OBB noise models, for Hadamard and Fourier $(F_n)$ matrices. 
    Numerically we observe the Fourier OBB case to scale approximately as $\epsilon^* \sim 1/n^{0.61}$.}
    \label{fig:epsilon_critical}
\end{figure}

\subsection{Heralding rates}\label{sec:heralding}

For a distillation protocol with heralding probability $h_n(\epsilon)$, 
we expect to repeat it $1/h_n(\epsilon)$ times in order to successfully herald a distilled output photon. 
Since each iteration involves $n$ photons, the expected number of photons required for a single distilled output photon is then $n/h_n(\epsilon)$. 
Thus, we are interested in estimating the heralding rate, and especially obtaining lower bounds, in order to obtain upper bounds on the required resources.

Recalling the decomposition \eqref{eq:error decomp} for $\rho^{(n)}(\epsilon)$, we have
\begin{align}
    h_n(\epsilon) &= \sum_{k=0}^n \binom{n}{k}\epsilon^k (1-\epsilon)^{n-k} h_n(\Phi_k) \label{eq:herald_expanded_start}
    \\&= (1-\epsilon)^n h_n(\Phi_0) + n \epsilon (1-\epsilon)^{n-1} h_n(\Phi_1) + O(\epsilon^2) 
    \\&= h_n(\Phi_0) + \epsilon n(h_n(\Phi_1) - h_n(\Phi_0)) + O(\epsilon^2).\label{eq:herald_expanded_end}
\end{align}
Thus, for sufficiently small $\epsilon$, the heralding rate is determined by $h_n(\Phi_0)=h_n(0)$ and $h_n(\Phi_1)$, corresponding to heralding rates for input states with $0$ or $1$ errors. 
We have: 

\begin{theorem}\label{thm:herald first order}
Consider an $n$-photon distillation protocol with unitary $U=F_{(n_1, \dots, n_\ell)}$, $n=n_1\cdots n_\ell > 2$, 
and input state $\rho^{(n)}(\epsilon)$. 
(In particular, we may take $U=F_n, H_n$.)
\begin{enumerate}
    \item We have
    $h_n(\Phi_1) = \frac{1}{n}h_n(0).$
    \item The heralding rate is given by
    \begin{equation}
        h_n(\epsilon) = h_n(0)-(n-1)h_n(0)\epsilon + O(\epsilon^2).
    \end{equation}
\end{enumerate}
\end{theorem}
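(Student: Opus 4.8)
The plan is to establish part (1), the identity $h_n(\Phi_1)=\tfrac1n h_n(0)$, and then obtain part (2) by substitution: plugging $h_n(\Phi_1)=\tfrac1n h_n(0)$ into \eqref{eq:herald_expanded_end} gives $h_n(\epsilon)=h_n(0)+\epsilon n\!\left(\tfrac1n h_n(0)-h_n(0)\right)+O(\epsilon^2)=h_n(0)-(n-1)h_n(0)\epsilon+O(\epsilon^2)$. So the content is part (1).

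For part (1) I would first decompose $\Phi_1$ as the uniform mixture over which of the $n$ input modes carries the single distinguishability error; the choice of ``bad'' internal state $\ket{\xi_r}$, $r\geq1$, is immaterial since it only relabels orthogonal states, so $h_n(\Phi_1)=\tfrac1n\sum_{k=0}^{n-1}h_n(\sigma_k)$, where $\sigma_k$ is the one-error state with the mode-$k$ photon distinguishable. (This decomposition holds for any RS model, in line with the remark after \eqref{eq:error decomp}.) Next I would use that the distinguishable photon, being in an internal state orthogonal to the others, cannot interfere with them under linear optics: after $\hat U$ it sits in mode $i$ with probability $|U_{ik}|^2$, independently of the $n-1$ indistinguishable photons, whose output pattern follows the $(n-1)$-photon distribution $P_k$ obtained by injecting one photon into each mode except mode $k$. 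A completed pattern is then ideal exactly when the $(n-1)$-photon pattern $q$ satisfies ``$q+e_i$ ideal'', where $e_i$ denotes one photon in mode $i$. Invoking the flatness $|U_{ik}|^2=1/n$ shared by $F_n$ and $H_n$, the bad photon is uniform over modes, so
\[
 h_n(\sigma_k)=\frac1n\sum_{i=0}^{n-1}\ \sum_{q:\,q+e_i\text{ ideal}}P_k(q)=\frac1n\,\mathbb{E}_{q\sim P_k}\big[N(q)\big],\qquad N(q):=\#\{\,i:q+e_i\text{ ideal}\,\}.
\]

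The next ingredient is a purely linear-optical identity: $\tfrac1n\sum_k P_k$ equals the distribution $\widetilde P_{n-1}$ obtained by applying $\hat U$ to $\ket{1,\dots,1}$ and then deleting a uniformly random photon. I would prove this from $a_\ell\,A_0^\dagger\cdots A_{n-1}^\dagger\vac=\sum_k U_{\ell k}\prod_{j\ne k}A_j^\dagger\vac$ (with $A_j^\dagger:=\hat U a_j^\dagger\hat U^\dagger$) together with unitarity of $U$ — equivalently it is the column Laplace expansion of the relevant permanents combined with $\sum_k|\mathrm{per}(U[q\,|\,\widehat k])|^2=\sum_i|\mathrm{per}(U[q+e_i\,|\,\mathrm{all}])|^2$. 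This yields $h_n(\Phi_1)=\tfrac1n\,\mathbb{E}_{q\sim\widetilde P_{n-1}}[N(q)]$, so it remains to show $\mathbb{E}_{q\sim\widetilde P_{n-1}}[N(q)]=h_n(0)$. Writing $P_n(p):=|\bra{p}\hat U\ket{1,\dots,1}|^2$ and unfolding the definitions,
\[
 \mathbb{E}_{q\sim\widetilde P_{n-1}}[N(q)]=\sum_{p\text{ ideal}}\ \sum_{i:\,p_i\ge1}\widetilde P_{n-1}(p-e_i),\qquad \widetilde P_{n-1}(p-e_i)=\frac{p_i}{n}P_n(p)+\frac1n\sum_{j\ne i}(p_j+1)\,P_n(p-e_i+e_j).
\]
Summing the first terms and using $\sum_i p_i=n$ gives exactly $\sum_{p\text{ ideal}}P_n(p)=h_n(0)$, so the proof reduces to showing that every cross term vanishes.

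That vanishing is the crux, and the only place the specific matrices matter beyond flatness. Each ideal $p$ satisfies the relevant suppression law — the Zero Transmission Law $\sum_\ell g_\ell(p)\equiv0\pmod n$ for $F_n$, resp.\ $\bigoplus_\ell g_\ell(p)=0$ for $H_n$ — whereas moving one photon from mode $i$ to a distinct mode $j$ changes $\sum_\ell g_\ell$ by $j-i\not\equiv0\pmod n$ (resp.\ toggles $\bigoplus_\ell g_\ell$ by $i\oplus j\ne0$); hence $p-e_i+e_j$ violates the suppression law, so $P_n(p-e_i+e_j)=0$. All cross terms therefore vanish, $\mathbb{E}_{q\sim\widetilde P_{n-1}}[N(q)]=h_n(0)$, and part (1) follows. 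I expect this last observation — ``moving one photon out of an ideal pattern always lands in a suppressed pattern'' — to be the main obstacle to isolate and exploit; note it uses the suppression laws only as \emph{necessary} conditions for nonzero amplitude, which is why the argument works for all $n$ rather than only prime powers. The permanent/unitarity identity and the bookkeeping in $\widetilde P_{n-1}(p-e_i)$ are then routine.
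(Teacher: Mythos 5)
Your proof is correct, but it takes a genuinely different route from the paper's. The paper proves $h_n(\Phi_1)=\tfrac1n h_n(0)$ by symmetrization (Appendix~\ref{sec:one}): the one-error state $\ket{\eta}$ has mutually orthogonal images under the order-$n$ mode-symmetry group $G$, its $G$-symmetrization $\ket{\eta_+}$ satisfies $|\braket{\eta_+}{\eta}|^2=1/n$ and is a perfectly indistinguishable pure tensor of $\ket{1,\dots,1}_{2Q}$ with a symmetric internal state, so Lemma~\ref{lem:symmetrization} gives $h_n(\ketbra{\eta})=\tfrac1n h_n(\ketbra{\eta_+})=\tfrac1n h_n(0)$ directly. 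You instead argue classically: the bad photon does not interfere, lands uniformly over output modes by flatness of $|U_{ik}|^2$, your averaging identity $\tfrac1n\sum_k P_k=\widetilde P_{n-1}$ (correctly reduced to the Laplace expansion $\perm(U_{q+e_i})=\sum_k U_{ik}\perm(U_{q,\hat k})$ plus column orthonormality), and the observation that moving a single photon between distinct output modes always breaks the ZTL/XOR suppression law, so the cross terms $P_n(p-e_i+e_j)$ vanish. All of these steps check out, including the factorial bookkeeping in $\widetilde P_{n-1}(p-e_i)$ and the reduction of Part 2 to Part 1 via \eqref{eq:herald_expanded_start}--\eqref{eq:herald_expanded_end}. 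Your "isolation of ideal patterns under single-photon moves" is in effect the unfolded, combinatorial shadow of the paper's symmetry argument (it is equivalent to the diagonal phases $d_{P,i}$ separating the modes), and like the paper's proof it needs the suppression laws only as necessary conditions, hence works for all $n$, not just prime powers. What the paper's route buys additionally is that the same symmetrized state immediately yields $e_n(\Phi_1)=1/n$ (Theorem~\ref{thm:error rate}), which your heralding-only bookkeeping would not give without further work; what your route buys is independence from the symmetrization machinery of Appendix~\ref{sec:symmetry} and a more explicit picture of where the probability mass goes.
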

\begin{proof}
The first claim is proven in Appendix~\ref{sec:one}. 
The second claim follows from the first and \eqref{eq:herald_expanded_start}-\eqref{eq:herald_expanded_end}. 
\end{proof}

Then up to first order in $\epsilon$, the heralding rate is entirely governed by the ideal heralding rate $h_n(0)$. The ideal heralding rate in the cases of interest is characterized as follows. In Appendix~\ref{sec:haar}, we compare these results to the case of Haar random linear optical unitaries. 
\begin{theorem}\label{thm:herald ideal}
Let $U=F_{(n_1, \dots, n_\ell)}$, where $n=n_1\cdots n_\ell > 2$. More generally, we may take $U$ to be any $n\times n$ unitary with entries in the first row identically equal to $1/\sqrt{n}$. 
\begin{enumerate}
    \item \label{part:herald ideal 1}The ideal heralding rate has the following equivalent expressions: 
\begin{align}
    h_n(0) &= \left(\frac{-1}{n}\right)^{n-1}(n-1)! \,\sum_{t=0}^{n-1}  (n-t) \frac{(-n)^t}{t!} 
    \\&= {}_2 F_0(-(n-1), 2; 1/n), 
    \label{eq:hypergeom_hn}
\end{align}
where ${}_2 F_0(a,b; z) = \sum_{j\geq 0}\dfrac{(a)_j (b)_j}{j!}z^j$ is the generalized hypergeometric function, with $(a)_j = a (a+1) \cdots (a + (j-1))$.
\item \label{part:herald ideal 2}$\lim_{n\rightarrow \infty}h_n(0)=1/4$. 
\end{enumerate}
\end{theorem}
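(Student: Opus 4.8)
In the ideal regime the input is the symmetric Fock state $\ket{1,\dots,1}$, and since an ideal pattern is by definition one with $n_0=1$ and nonzero amplitude (and zero-amplitude patterns contribute nothing to any probability), $h_n(0)$ is exactly the probability that PNRD finds one photon in mode $0$ after applying $\hat U$. In the Heisenberg picture the mode-$0$ number operator is transported to $\hat U^\dagger a_0^\dagger a_0 \hat U = b^\dagger b$, where $b^\dagger = \hat U^\dagger a_0^\dagger \hat U = \sum_j \overline{U_{0j}}\,a_j^\dagger = \tfrac1{\sqrt n}\sum_j a_j^\dagger$; this uses only that the first row of $U$ is $\tfrac1{\sqrt n}(1,\dots,1)$, which is exactly the generality claimed, and $b$ is a genuine bosonic mode since $[b,b^\dagger]=1$. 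Thus $h_n(0)$ is the probability of finding one photon in the balanced mode $b$ in the state $\ket{1,\dots,1}$, and I would compute the whole distribution through the generating function $G(x) = \sum_k \Pr[n_0=k]\,x^k = \bra{1,\dots,1}x^{b^\dagger b}\ket{1,\dots,1}$, so that $h_n(0) = [x^1]G(x)$.

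To evaluate $G(x)$, normal-order $x^{b^\dagger b} = \sum_{m\ge 0}\tfrac{(x-1)^m}{m!}(b^\dagger)^m b^m$, giving $G(x) = \sum_{m\ge 0}\tfrac{(x-1)^m}{m!}\lVert b^m\ket{1,\dots,1}\rVert^2$. Expanding $\bigl(\sum_j a_j\bigr)^m$ acting on $\ket{1,\dots,1}$, only the $m!\binom nm$ terms that annihilate $m$ distinct single-photon modes survive, and these land in mutually orthonormal Fock states, so $\lVert b^m\ket{1,\dots,1}\rVert^2 = \tfrac1{n^m}\binom nm (m!)^2 = \tfrac{m!\,n!}{n^m(n-m)!}$ for $m\le n$ (and $0$ for $m>n$). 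Hence
\begin{equation}
    G(x) = \sum_{m=0}^n \frac{n!}{(n-m)!}\left(\frac{x-1}{n}\right)^{m}.
\end{equation}
Using $[x^1](x-1)^m = m(-1)^{m-1}$ gives $h_n(0) = \sum_{m=1}^n \tfrac{n!\,m(-1)^{m-1}}{(n-m)!\,n^m}$. Re-indexing $t = n-m$ and factoring out $(n-1)!\,(-1/n)^{n-1}$ yields the first displayed formula of Part~\ref{part:herald ideal 1}; alternatively, writing $\tfrac{n!}{(n-m)!\,n} = \tfrac{(n-1)!}{(n-m)!} = (-1)^{m-1}\bigl(-(n-1)\bigr)_{m-1}$ and $m = (2)_{m-1}/(m-1)!$ recasts the same sum as $\sum_{j\ge 0}\tfrac{(-(n-1))_j (2)_j}{j!}(1/n)^j = {}_2F_0(-(n-1),2;1/n)$, the series terminating at $j=n-1$ since $\bigl(-(n-1)\bigr)_j$ vanishes for $j\ge n$. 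This proves Part~\ref{part:herald ideal 1}.

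For Part~\ref{part:herald ideal 2}, substitute $m! = \int_0^\infty u^m e^{-u}\,du$ into $G(x) = \sum_m \binom nm\bigl(\tfrac{x-1}{n}\bigr)^m m!$ and swap the finite sum with the integral to get $G(x) = \int_0^\infty e^{-u}\bigl(1+\tfrac{(x-1)u}{n}\bigr)^n du$, whence
\begin{equation}\label{eq:plan hn integral}
    h_n(0) = \int_0^\infty u\left(1-\frac un\right)^{n-1} e^{-u}\,du .
\end{equation}
Since $(1-u/n)^{n-1}\to e^{-u}$ pointwise, \eqref{eq:plan hn integral} should converge to $\int_0^\infty u\,e^{-2u}\,du = \tfrac14$. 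The one real obstacle is that the integrand is \emph{not} dominated uniformly in $n$ on all of $(0,\infty)$, because for $u>n$ the factor $\lvert 1-u/n\rvert^{n-1}$ can be exponentially large. I would resolve this by splitting $\int_0^\infty = \int_0^n + \int_n^\infty$: on $[0,n]$ one has $0\le (1-u/n)^{n-1}\le e^{-(n-1)u/n}\le e^{-u/2}$ (for $n\ge 2$), so the integrand is dominated by the integrable function $u\,e^{-3u/2}$ and dominated convergence gives $\int_0^n \to \tfrac14$; on $[n,\infty)$, the substitution $u=nv$ together with the elementary bound $(v-1)e^{-v}\le e^{-2}$ for $v\ge 1$ gives $\bigl\lvert\int_n^\infty\bigr\rvert \le 2n^2 e^{-2n+1}\to 0$. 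Combining these proves $\lim_{n\to\infty} h_n(0) = \tfrac14$. (Refining the bulk estimate via $(1-u/n)^{n-1}e^{-u} = e^{-2u}\bigl(1+O(u^2/n)\bigr)$ also recovers $h_n(0)-\tfrac14 \sim \tfrac1{16n}$, the sharper asymptotic mentioned in the introduction.)

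The bookkeeping in Part~\ref{part:herald ideal 1}---recasting the single coefficient $[x^1]G(x)$ into both the finite-sum and the ${}_2F_0$ shapes---is routine; the genuinely delicate point is the $n\to\infty$ interchange in \eqref{eq:plan hn integral}, so the explicit $\int_0^n/\int_n^\infty$ split is the crux of Part~\ref{part:herald ideal 2}.
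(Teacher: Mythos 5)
Your proof is correct, and it takes a genuinely different route from the paper's on both parts. For Part~\ref{part:herald ideal 1}, the paper also begins from the observation that $h_n(0)$ is just the probability of detecting one photon in mode $0$ (so only the first row of $U$ matters), but it then evaluates this by inserting the projector $\hat{U}^\dagger(\ketbra{0}\otimes I)\hat{U}=\hat{\Pi}$ with $\Pi = I-\tfrac1n\mathbf{1}$ and running a direct combinatorial expansion of $\bvac\prod a_{r'}\prod(\cre_r-\tfrac1n\sum_v\cre_v)\vac$, split into diagonal and off-diagonal contributions. You instead pass to the single bosonic mode $b^\dagger=\tfrac1{\sqrt n}\sum_j a_j^\dagger$ and compute the full photon-number generating function $G(x)=\bra{1,\dots,1}x^{b^\dagger b}\ket{1,\dots,1}$ via normal ordering; extracting $[x^1]$ reproduces the paper's finite sum and the ${}_2F_0$ form after the same routine reindexing. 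The real divergence is in Part~\ref{part:herald ideal 2}: the paper identifies $n^{n-1}h_n(0)$ with coefficients of the exponential generating function of $-1/(1-W(-z))$ and then relies on Kotesovec's computation with the Maple \emph{gdev} package (saddle-point asymptotics) to conclude $h_n(0)\to 1/4$, whereas your integral representation $h_n(0)=\int_0^\infty u(1-u/n)^{n-1}e^{-u}\,du$ plus the $\int_0^n/\int_n^\infty$ split and dominated convergence gives a fully self-contained, elementary proof. You correctly flag and handle the only delicate point (the lack of a uniform dominating function on $(0,\infty)$), and your bounds $(1-u/n)^{n-1}\le e^{-u/2}$ on $[0,n]$ and $2n^2e^{-2n+1}$ for the tail check out. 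Your approach buys a rigorous, software-free proof of the limit (and, as you note, easy access to the $1/(16n)$ correction, which the paper only observes numerically); the paper's approach buys the connection to OEIS A277458 and the Lambert $W$ function, which it uses for exposition.
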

The proof of the first equality in Part~\ref{part:herald ideal 1} is given in Appendix~\ref{sec:ideal herald}; the second follows from reindexing $t\mapsto n-1-j$ and rewriting the terms. 
Part~\ref{part:herald ideal 2} follows from a result of Vaclav Kotesovec \cite{oeis}, proven using the Maple library \emph{gdev} \cite{salvy1991examples}. 
We adapt this proof to our setting in Appendix~\ref{sec:maple appendix}. 

\begin{remark}
We note that $\{n^{n-1}h_n(0)\}_{n\geq 1}$ is visibly a sequence of integers. Kotesovec entered the same sequence into The On-Line Encyclopedia of Integer Sequences as sequence A277458 in 2016 \cite{oeis}, as the coefficients of an exponential generating function related to the Lambert $W$ function. 
The asymptotic above is given in the encyclopedia listing, up to the factor of $n^{n-1}$. 
We note that the $W$ function appears in many physical problems \cite{valluri2000some}, and certain variants were recently used in proposed protocols for verification of Gaussian boson sampling \cite{mendes2022applications}. 
\end{remark}

\begin{figure}
    \centering
    \includegraphics[width=0.6\columnwidth]{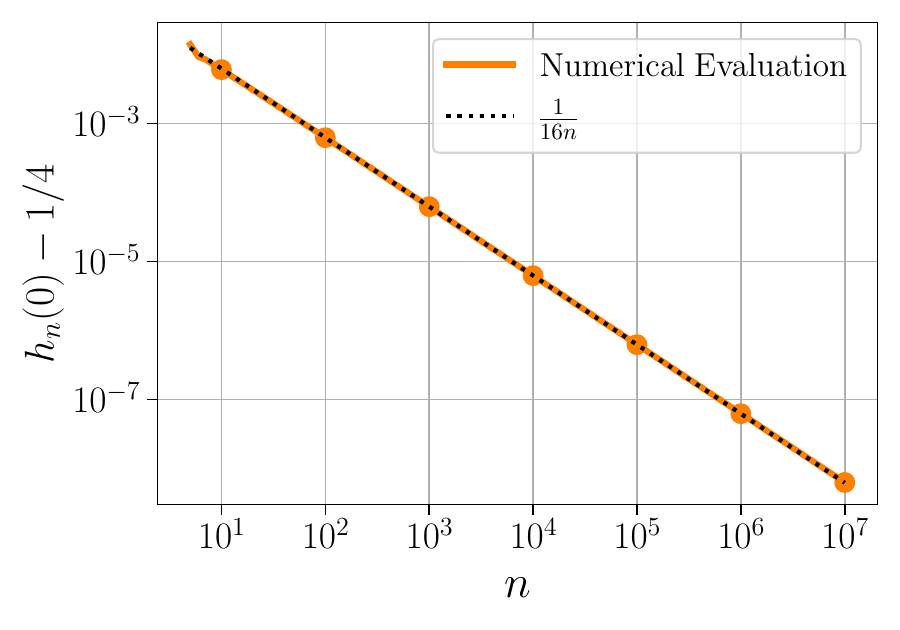}
    \caption{Numerical evaluation of $h_n(0)-1/4$ for $n=5$ up to $10^7$. The data (solid line) contains all values of $n\in[5, 10^5]$, as well as multiples of $10^5$ up to $10^6$, and multiples of $10^6$ up to $10^7$ (we include circular data points at powers of 10 to guide the eye). This was computed using the hypergeometric function, Eq.~\eqref{eq:hypergeom_hn}. We observed numerically that the deviation of $h_n(0)$ from $1/4$ falls as $1/(16n)$, plotted as the dashed line.}
    \label{fig:hn0}
\end{figure}

The hypergeometric expression allows for quick numerical estimation of the ideal heralding rate. 
The first few values are $h_3(0) = 1/3$, $h_4(0) = 1/4$, $h_5(0)\approx 0.264$. 
In Figure~\ref{fig:hn0}, we plot the difference $h_n(0) - 1/4$ and numerically observe that $h_n(0)$ quickly approaches $1/4$, with asymptotic $h_n(0) - 1/4 \sim \frac{1}{16n}$. 
In particular, we conjecture
\begin{conjecture}\label{conj:herald}
For $n\geq 5$, the sequence $h_n(0)$ monotonically decreases to $1/4$. 
\end{conjecture}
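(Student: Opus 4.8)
\textbf{Proof proposal for Conjecture~\ref{conj:herald}.}

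The plan is to work from the hypergeometric/finite-sum expression for $h_n(0)$ in Theorem~\ref{thm:herald ideal}, namely $h_n(0) = {}_2F_0(-(n-1),2;1/n)$, and show that the difference $d_n := h_n(0) - \tfrac14$ is positive and decreasing for $n\geq 5$. The first step is to obtain a clean asymptotic expansion of $d_n$ beyond the leading term $\tfrac{1}{16n}$ already identified numerically in Figure~\ref{fig:hn0}. Since $h_n(0)$ is a truncation of ${}_2F_0$ evaluated at $z=1/n$, I would use the known connection between ${}_2F_0(a,b;z)$ and the confluent hypergeometric (Tricomi) function $U(a,1+a-b,1/z)$, or equivalently write $h_n(0)$ as an incomplete-gamma-type integral. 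Concretely, from the first displayed formula $h_n(0) = (-1/n)^{n-1}(n-1)!\sum_{t=0}^{n-1}(n-t)(-n)^t/t!$, one can split $\sum_{t=0}^{n-1} = \sum_{t=0}^{\infty} - \sum_{t=n}^{\infty}$; the full sum $\sum_{t\ge 0}(n-t)(-n)^t/t! = e^{-n}(n\cdot 1 - n\cdot(-1)) \cdot(\text{small correction})$ — more carefully, $\sum_t (-n)^t/t! = e^{-n}$ and $\sum_t t(-n)^t/t! = -n e^{-n}$, so the full sum equals $(n + n)e^{-n} = 2n e^{-n}$, giving a ``main term'' contribution $(-1/n)^{n-1}(n-1)!\cdot 2ne^{-n}$, which by Stirling is exponentially small — hence the genuine value of $h_n(0)$ comes from the \emph{tail} $\sum_{t\ge n}$, which I would analyze via Laplace's method / Stirling after the substitution $t = n+s$.

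The second step is to convert this tail into an integral amenable to Watson's lemma. Writing $(n-t)(-n)^t/t!$ for $t\ge n$ and using $1/t! = \frac{1}{2\pi i}\oint e^w w^{-t-1}\,dw$ or, more elementarily, the integral representation ${}_2F_0(a,b;z) = \frac{1}{\Gamma(-a)}\int_0^\infty e^{-u}u^{-a-1}(1-zu)^{-b}\,du$ valid here since $-a = n-1$ is a positive integer (the integrand is a polynomial times $e^{-u}$, so this is exact, not just asymptotic): with $a = -(n-1)$, $b=2$, $z=1/n$,
\begin{equation}
    h_n(0) = \frac{1}{(n-1)!}\int_0^\infty e^{-u}\,u^{n-2}\,\frac{1}{(1-u/n)^2}\,du.
\end{equation}
Wait — $(1-u/n)^{-2}$ has a pole at $u=n$, so this representation needs the principal value or a contour deformation; I would instead keep the polynomial form $(1-zu)^{-2} = \sum_{k\ge0}(k+1)z^k u^k$ truncated appropriately, or rewrite using $b = 2$ as a derivative $\partial_z[z\cdot {}_2F_0(a,1;z)]$. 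The cleanest route is probably: ${}_2F_0(-(n-1),2;1/n) = \sum_{j=0}^{n-1}\binom{n-1}{j}(j+1)!\,(-1/n)^j$ after simplifying $(-(n-1))_j = (-1)^j (n-1)!/(n-1-j)!$ and $(2)_j = (j+1)!$, giving
\begin{equation}
    h_n(0) = \sum_{j=0}^{n-1}\binom{n-1}{j}\frac{(j+1)!}{j!}\cdot\frac{(-1)^j}{n^j}\cdot j! \Big/ j! = \sum_{j=0}^{n-1}\frac{(n-1)!}{(n-1-j)!}\,\frac{(j+1)(-1)^j}{n^j}.
\end{equation}
From here, the ratio $(n-1)!/((n-1-j)!n^j) = \prod_{i=1}^{j}(1 - i/n)$ is a decreasing-in-$j$ positive factor, so $h_n(0) = \sum_j (j+1)(-1)^j\prod_{i=1}^j(1-i/n)$ is an alternating sum with a unimodal envelope; I would apply alternating-series / Euler–Maclaurin bracketing to pin down $h_n(0)$ between consecutive partial sums and extract that $h_n(0) = \tfrac14 + \tfrac{1}{16n} + c_2/n^2 + O(1/n^3)$ with an explicit, controllable $c_2$.

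The third step is monotonicity. Having $d_n = h_n(0) - \tfrac14 = \tfrac{1}{16n} + O(1/n^2) > 0$ for large $n$ handles the tail; for the finitely many remaining $n$ (from $5$ up to wherever the asymptotic estimate becomes rigorous, say $n\le N_0$ for an explicit $N_0$) I would verify $d_n > d_{n+1} > 0$ by direct computation using the exact rational values of $h_n(0)$ from the finite sum — these are rationals with denominator $n^{n-1}$ and the sequence $n^{n-1}h_n(0)$ is the integer sequence A277458, so the comparison $d_n > d_{n+1}$ is a finite check of an inequality between explicit rationals. To make the asymptotic part rigorous rather than numerical, the key technical estimate is a \emph{two-sided} bound: using the alternating structure, $S_{2k+1} \le h_n(0) \le S_{2k}$ where $S_\ell = \sum_{j=0}^\ell (j+1)(-1)^j\prod_{i=1}^j(1-i/n)$, and then showing $S_{\ell}(n)$ and $S_\ell(n+1)$ straddle in the right way. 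The main obstacle is precisely this last point: turning the numerically evident monotonicity into a rigorous inequality $d_n > d_{n+1}$ uniformly in $n$. The difference $d_n - d_{n+1}$ is a small ($O(1/n^2)$) quantity obtained as a difference of two alternating sums with slightly different parameters, so naive term-by-term bounds lose too much; I expect to need either a careful pairing of terms $j$ and $j+1$ within each sum before differencing, or an integral-representation comparison (writing $h_n(0)$ via the Lambert-$W$ exponential generating function mentioned in the Remark, whose analytic form may make the $n$-dependence monotone by inspection). If neither the alternating-sum bracketing nor the generating-function route yields a clean monotone expression, a fallback is to prove only the (weaker but still useful) statement that $d_n > 0$ for all $n\ge 5$ — which follows immediately from the alternating-series bracket $h_n(0) \ge S_3 = 1 - 2(1-\tfrac1n) + 3(1-\tfrac1n)(1-\tfrac2n) - 4(1-\tfrac1n)(1-\tfrac2n)(1-\tfrac3n) > \tfrac14$ for $n\ge 5$ after expansion — and leave full monotonicity as the stated conjecture.
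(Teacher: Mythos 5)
First, note that the statement you are trying to prove is labeled a \emph{conjecture} in the paper: the authors prove only the limit $\lim_{n\to\infty}h_n(0)=1/4$ (Theorem~\ref{thm:herald ideal}, via the Lambert-$W$ generating function and a computer-assisted saddle-point computation in Appendix~\ref{sec:maple appendix}), and support monotonicity purely by the numerics of Figure~\ref{fig:hn0}. So there is no proof in the paper to compare against, and your proposal must stand on its own. It does not: it is a plan rather than a proof, and the one step you carry out concretely is wrong.

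The correct and useful part is the algebraic rewriting $h_n(0)=\sum_{j=0}^{n-1}(j+1)(-1)^j\prod_{i=1}^{j}(1-i/n)$, which does follow from the ${}_2F_0$ form. The genuine gap is in everything you build on it. (i) The alternating-series bracketing $S_{2k+1}\le h_n(0)\le S_{2k}$ requires the term magnitudes $a_j=(j+1)\prod_{i=1}^j(1-i/n)$ to be decreasing, but $a_{j+1}/a_j=\tfrac{j+2}{j+1}\bigl(1-\tfrac{j+1}{n}\bigr)>1$ whenever $(j+1)(j+2)<n$, so the envelope \emph{increases} up to $j\approx\sqrt{n}$; the Leibniz brackets for fixed small $k$ are therefore not justified, and to reach the regime where they become informative you must take $\ell$ growing with $n$, which destroys the "finite uniform check." (ii) Your fallback claim that $h_n(0)\ge S_3>\tfrac14$ for $n\ge 5$ is false: $S_3(5)=9/125=0.072$ and $S_3(n)\to 1-2+3-4=-2$ as $n\to\infty$, so $S_3>\tfrac14$ holds for no $n\ge 5$. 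Thus even the weaker positivity statement $h_n(0)>\tfrac14$ is not established by your route. (iii) The asymptotic expansion $h_n(0)=\tfrac14+\tfrac{1}{16n}+c_2/n^2+O(1/n^3)$ is asserted, not derived, and without explicit constants in the error term you cannot identify the cutoff $N_0$ for the finite rational check. (iv) Most importantly, the actual content of the conjecture — the monotone decrease $d_n>d_{n+1}$ — is nowhere proved; you correctly identify it as "the main obstacle" and list candidate strategies (term pairing, generating-function comparison) without executing any of them. As it stands, the proposal neither proves the conjecture nor even its positivity consequence.
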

As an immediate consequence of Theorems \ref{thm:herald first order} and \ref{thm:herald ideal}, we have: 

\begin{proposition}
For $n\geq 4$ and sufficiently small $\epsilon$, 
\begin{equation}
    h_n(\epsilon) \approx \frac{1}{4} - \left(\dfrac{n-1}{4}\right)\epsilon.
\end{equation}
\end{proposition}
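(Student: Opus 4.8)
The plan is simply to combine the two cited theorems; the proposition is a corollary and the argument is essentially a one-line substitution. From Theorem~\ref{thm:herald first order}, part~2, the first-order expansion of the heralding rate reads
\[
  h_n(\epsilon) = h_n(0) - (n-1)\,h_n(0)\,\epsilon + O(\epsilon^2).
\]
The only remaining ingredient is to replace the ideal heralding rate $h_n(0)$ by $1/4$. By Theorem~\ref{thm:herald ideal}, part~2, $h_n(0) \to 1/4$ as $n\to\infty$; moreover $h_4(0) = 1/4$ exactly, and the discussion following that theorem (together with Figure~\ref{fig:hn0}) gives $h_n(0) = \tfrac14 + O(1/n)$ for all $n\geq 4$. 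Substituting $h_n(0) = \tfrac14$ into the display above yields
\[
  h_n(\epsilon) = \frac14 - \frac{n-1}{4}\,\epsilon + O(\epsilon^2),
\]
which is the claimed approximation; the symbol ``$\approx$'' in the statement absorbs both the $O(\epsilon^2)$ error and the $O(1/n)$ error incurred by replacing $h_n(0)$ with $1/4$.

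The only point that warrants a word of care is quantifying the quality of the approximation, since the prefactor $(n-1)$ multiplies $h_n(0)$. Writing $h_n(0) = \tfrac14 + \delta_n$ with $\delta_n = O(1/n)$ (and $\delta_n \sim 1/(16n)$ by the asymptotic in Theorem~\ref{thm:herald ideal}), the exact first-order expression is $\tfrac14 - \tfrac{n-1}{4}\epsilon + \delta_n - (n-1)\delta_n\,\epsilon + O(\epsilon^2)$. The two correction terms $\delta_n$ and $(n-1)\delta_n\,\epsilon$ are of relative size $O(1/n)$ compared with the displayed terms, so the stated approximation is uniformly good in that sense for $\epsilon$ small. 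If one preferred an explicit error term in place of ``$\approx$'', one would simply retain $\delta_n$ and invoke $\delta_n \sim 1/(16n)$. There is no substantive obstacle in the proof beyond this bookkeeping.
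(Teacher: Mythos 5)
Your proposal is correct and is essentially identical to the paper's own derivation: the paper states this proposition as an immediate consequence of Theorems~\ref{thm:herald first order} and \ref{thm:herald ideal}, obtained by substituting $h_n(0)\approx 1/4$ into the first-order expansion $h_n(\epsilon)=h_n(0)-(n-1)h_n(0)\epsilon+O(\epsilon^2)$. Your additional bookkeeping on the $\delta_n=h_n(0)-\tfrac14$ correction is a sensible clarification of what the ``$\approx$'' absorbs, but it does not change the argument.
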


In Figure~\ref{fig:herald}, we numerically plot heralding rates as a function of $\epsilon$ for the Fourier case. 
This plot, and others in this work, are obtained by numerically calculating the terms $h_n(\Phi_k)$, $0\leq k\leq n$, and substituting into \eqref{eq:herald_expanded_start}-\eqref{eq:herald_expanded_end}. Thus we obtain $h_n(\epsilon)$ as a polynomial in $\epsilon$ with explicitly known coefficients, up to rounding errors in the calculations for $h_n(\Phi_k)$. 
No sampling or fitting with respect to particular values of $\epsilon$ is required. 
In Appendix~\ref{sec:numerics appendix}, we explicitly give a list of computed heralding rates and output error rates as functions of $\epsilon$. 
We see from Figure~\ref{fig:herald} that the protocol with $n=3$ has a significantly better heralding rate than the larger protocols; however, as observed in Theorem~\ref{thm:error rate}, the larger protocols have better output error rates for small $\epsilon$. 
Further, since the heralding rate $h_n(0)$ stabilizes at $1/4$, it seems that for small $\epsilon$ it is advantageous to increase the value of $n$. 

We use a similar expansion to the above, discussed in Appendix~\ref{sec:calculations}, to calculate the output error rates $e_n(\epsilon)$, plotted in Fig.~\ref{fig:err_reduction_fourier}. 
We note that as $\epsilon$ grows larger, it is not necessarily better to increase $n$; 
this is especially clear in Figs.~\ref{fig:epsilon_critical} and \ref{fig:opt_n}. 
We further discuss the tradeoffs between different protocols in Section \ref{sec:iterated}. 

\begin{figure}[h]
    \centering
    \includegraphics[width=\columnwidth]{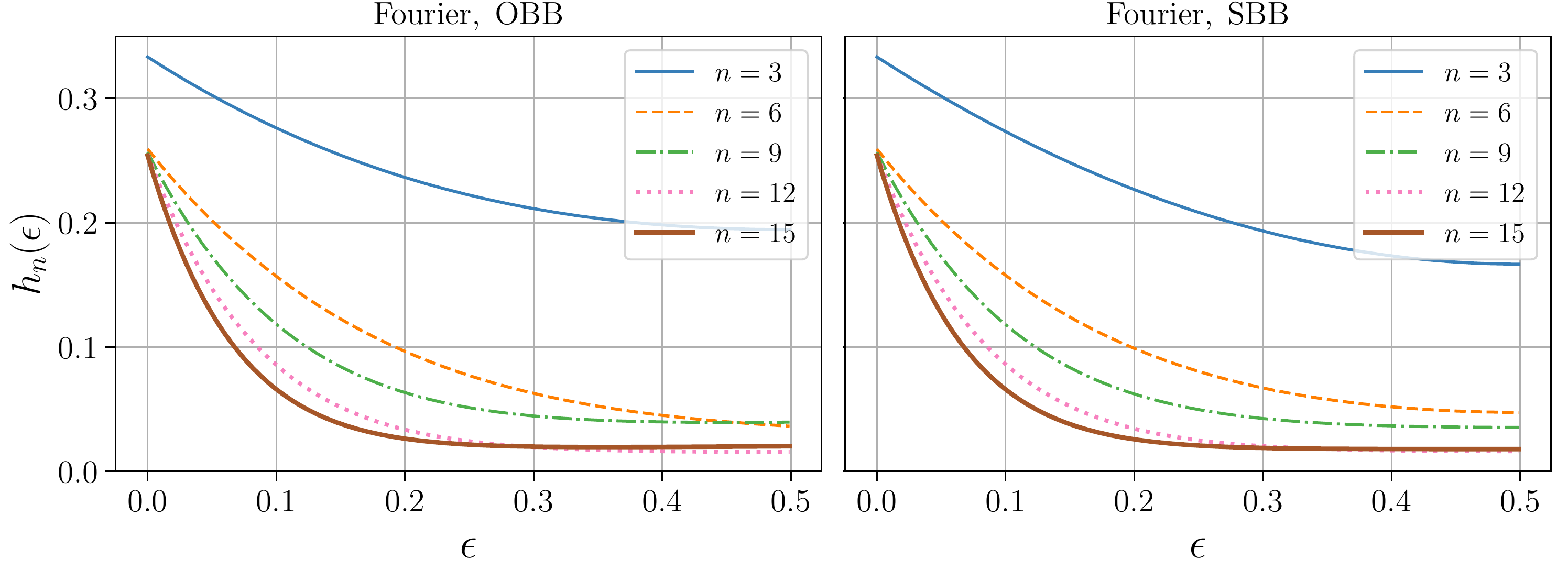}
    \caption{Plot of the heralding rate $h_n(\epsilon)$ with OBB (left) and SBB (right) error models for the Fourier distillation protocol. The zero-error heralding rate $h_n(0)$ is given analytically by Th.~\ref{thm:herald ideal}.}
    \label{fig:herald}
\end{figure}

\subsection{Comparison of resource costs}\label{sec:iterated}
We now compare the resource costs of our distillation protocols with previous work. In particular, we consider the expected number of photons required to obtain output error rate $\epsilon'\approx \epsilon/n$ in the small $\epsilon$ regime. 

The first photon distillation protocol, due to Sparrow and Birchall and called HOM filtering \cite{sparrow_quantum_2017}, requires $n$ photons per distillation attempt, with heralding probability decaying exponentially to $0$ \cite{marshall_distillation_2022}. 
Thus the expected number of photons required is exponential in $n$. If instead one iterates the 2-photon version of this scheme with active feed-forward, it is possible to have a scheme that scales cubically $O(n^3)$, as pointed out in \cite{marshall_distillation_2022}.

Marshall \cite{marshall_distillation_2022} proposed the $3$-photon Fourier distillation scheme and suggested iterating the scheme to arbitrarily reduce the error rate at the cost of additional photons. 
In particular, many batches of photons with error rate $\epsilon$ are put through a $3$-photon distillation protocol, giving output photons with error rate near $\epsilon/3$ upon successful heralding; these then undergo further rounds of distillation to obtain error rate approximately $\epsilon/3^2$, then $\epsilon/3^3$, and so on. 
Without active feed-forward, it is easy to see that the expected number of photons to obtain output error rate $\epsilon'\approx \epsilon/n$ (where $n$ is a power of $3$) 
is exponential in $n$, since it is highly unlikely for many distillation protocols to succeed simultaneously. 
With active feed-forward and the ability to perfectly store successfully heralded photons in memory, Marshall's iterated distillation protocol can be made more efficient: the output photons from one round of distillation can be stored until enough are available to use as input for subsequent rounds of distillation. 
Such an iterated protocol reduces the error rate by a factor of $n=3^r$ with an expected cost of $n^2$ photons \cite{marshall_distillation_2022}. 
In practice, however, these additional requirements may introduce more errors, for example due to dispersion introduced by optical delay lines. 

In our present setting, an $n$-photon protocol $F_{(n_1, \dots, n_\ell)}$ obtains output error rate $e_n(\epsilon)\approx \epsilon/n$ upon successful heralding in the small $\epsilon$ regime. 
The heralding rate satisfies $h_n(0)\approx 1/4$ for $n\geq 4$ (see Theorem~\ref{thm:herald ideal} and the following discussion). 
Then we have: 
\begin{theorem}\label{thm:resources}
For $n\geq 4$ and $\epsilon$ sufficiently small relative to $n$, the expected number of photons required to distill a single output photon with error rate $e_n(\epsilon)\approx \epsilon/n$ is 
\begin{equation}
    \dfrac{n}{h_n(\epsilon)}\approx \dfrac{n}{h_n(0)}\approx 4n.
\end{equation}
\end{theorem}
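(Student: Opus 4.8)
The plan is to derive the result directly from Theorems~\ref{thm:herald first order} and \ref{thm:herald ideal}, combined with the elementary accounting already recorded in Section~\ref{sec:heralding}: each distillation attempt consumes a fresh batch of $n$ photons and succeeds independently with probability $h_n(\epsilon)$, so the number of attempts until the first heralded success is a geometric random variable with mean $1/h_n(\epsilon)$. Since a failed attempt still consumes its $n$ photons, the expected number of input photons per heralded output photon is exactly $n/h_n(\epsilon)$ by linearity of expectation (no Wald-type subtlety is needed, as the per-attempt cost is the deterministic constant $n$). It therefore suffices to estimate $n/h_n(\epsilon)$.

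First I would pin down the regime ``$\epsilon$ sufficiently small relative to $n$.'' By Theorem~\ref{thm:herald first order}(2) we have $h_n(\epsilon) = h_n(0)\bigl(1-(n-1)\epsilon\bigr) + O(\epsilon^2)$, so once $(n-1)\epsilon \ll 1$ the relative error in replacing $h_n(\epsilon)$ by $h_n(0)$ is $O\bigl((n-1)\epsilon\bigr)$, and hence $n/h_n(\epsilon) = \bigl(n/h_n(0)\bigr)\bigl(1+o(1)\bigr)$. This is the first approximation $n/h_n(\epsilon)\approx n/h_n(0)$; it can be made fully explicit, e.g.\ for $\epsilon \le c/n$ with $c$ a small constant the relative error is $O(c)$ uniformly in $n$.

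Next I would invoke Theorem~\ref{thm:herald ideal}(2), namely $\lim_{n\to\infty} h_n(0) = 1/4$, sharpened by the stated asymptotic $h_n(0)-\tfrac14 \sim \tfrac{1}{16n}$ together with the explicit small values $h_4(0) = 1/4$, $h_5(0)\approx 0.264$, and so on. These give $h_n(0) = \tfrac14\bigl(1 + O(1/n)\bigr)$ for all $n\ge 4$, whence $n/h_n(0) = 4n\bigl(1+O(1/n)\bigr) = 4n + O(1)$. Chaining this with the previous paragraph yields $n/h_n(\epsilon)\approx 4n$ throughout the stated regime, which is the claim.

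The only point requiring care — and the closest thing to an obstacle — is keeping track of the two independent error terms so that the chained ``$\approx$'' has real content: the $O(\epsilon^2)$ (equivalently $O((n-1)\epsilon)$ relative) error from the first-order heralding expansion, and the $O(1/n)$ relative error from the convergence $h_n(0)\to 1/4$. Both are simultaneously controlled precisely when $\epsilon$ is small relative to $n$, so I would phrase the conclusion as the asymptotic statement $n/h_n(\epsilon) = 4n\bigl(1+o(1)\bigr)$ as $n\to\infty$ along any sequence $\epsilon = \epsilon(n)$ with $n\,\epsilon(n)\to 0$, which is exactly what the theorem's informal ``$\approx$'' encodes. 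No new machinery is needed beyond the two cited theorems.
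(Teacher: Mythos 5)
Your proposal is correct and follows essentially the same route as the paper: the cost $n/h_n(\epsilon)$ from the geometric-repetition argument of Section~\ref{sec:heralding}, the replacement $h_n(\epsilon)\approx h_n(0)$ via Theorem~\ref{thm:herald first order}, and $h_n(0)\approx 1/4$ via Theorem~\ref{thm:herald ideal}. Your extra bookkeeping of the two error terms (noting only that the sharpened rate $h_n(0)-\tfrac14\sim\tfrac{1}{16n}$ is a numerical observation in the paper, whereas the proven statement is just the limit, which suffices here) only makes the argument more explicit than the paper's own informal treatment.
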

In particular, our $n$-photon distillation protocol requires only \emph{linearly scaling resources} to obtain the target error in the low $\epsilon$ regime, \emph{without} requiring feed-forward or memory. 
Thus the new protocols are the most resource-efficient known method for distilling the distinguishability error rate to an arbitrarily small value. 

\begin{remark}\label{remark:efficiency}
We now more closely consider the feasibility of the new distillation protocols relative to previous work, specifically the iterated $F_3$ protocol of \cite{marshall_distillation_2022} discussed above. For $n=3^r$, the iterated $F_3$ protocol requires $r$ concatenated rounds of $F_3$ distillation protocols, with an expected cost of $n^2$ photons, in order to reduce the error rate by a factor of $n$. Further, the iterated protocol of \cite{marshall_distillation_2022} (for $r>1$ rounds) requires active feed-forward and storing photons in memory, which in fact can \emph{increase} the distinguishability error rate, as discussed above. 
On the other hand, the new $F_n$ protocol instead requires around $4n$ photons for the same amount of error reduction, \emph{without} requiring memory or active feed-forward. 
For example, to reduce the error rate by a factor of $n=81$, the protocol of \cite{marshall_distillation_2022} requires $4$ rounds of the $F_3$ distillation scheme, resulting in the use of 
$n^2 = 6561$ photons. 
The present work requires approximately $4n=324$ photons, using a single $F_{81}$ protocol. This is more than $20$ times more efficient than the iterated $F_3$ protocol. 
In fact, if one is willing to use $6561$ photons, the number used by the iterated $F_3$ protocol, they may instead use a single iteration of $F_{1640}$, reducing the error rate by a factor of approximately $1640$ without increasing the cost. (And note that the difficulty of implementation is potentially \emph{decreased}, since memory and feed-forward are not required.) 
Thus, given a desired level of performance, the new protocols are both dramatically more efficient and simpler to implement than the work of \cite{marshall_distillation_2022}, which in turn was far more efficient than \cite{sparrow_quantum_2017}. 
\end{remark}

We recall, however, that our protocols for large $n$ have an ever-decreasing \emph{threshold} after which the protocol fails to reduce the distinguishability error rate. 
(See Figure~\ref{fig:epsilon_critical}.) 
In particular, in the discussion above, one must take $\epsilon$ sufficiently small so that it is below the threshold for all relevant values of $n$. 
This limitation does not apply to the iterated protocol of \cite{marshall_distillation_2022}, as the relevant threshold is that of the $3$-photon protocol. 
We note that these thresholds are not a major concern for moderately sized protocols: as discussed above, we expect realistic experimental protocols to have $\epsilon\leq 0.15$ or so, below threshold for all $n\leq 16$, as seen in Figure~\ref{fig:epsilon_critical}. 

However, even below the threshold, we may not always want to choose the largest value of $n$ available. 
This is discussed in Section~\ref{sec:numerics}, especially Figures~\ref{fig:opt_n} and \ref{fig:opt_n_both}, where, for each $\epsilon$, we numerically identify the value of $n$ with smallest output error rate $e_n(\epsilon)$. 
If the initial error rate $\epsilon$ is large, we note that one may iterate distillation protocols with gradually increasing numbers of photons. 
The first rounds will have small $n$, chosen according to the figures, allowing for the reduction of the distinguishability error rate. 
Once the error rate is sufficiently small, one may use larger protocols in later rounds, further decreasing the error rate at a lower cost. 
(However, as discussed above, iterative protocols come at the cost of requiring active feed-forwarding and quantum memory during distillation.) 

We briefly consider some practical examples. 
Recall that, from visibility estimates in the literature \cite{halder_high_2008, tsujimoto_high_2017,tambasco_quantum_2018, wang_research_2019, ollivier_hong-ou-mandel_2021, somhorst_quantum_2023, alexander_manufacturable_2024}, we approximate that the initial error rate $\epsilon_0$ is likely in the range $[0.005, 0.15]$. 
We first consider the high end of this range. 
Given initial $\epsilon_0 = 0.15$, applying the protocol with $U=F_6$ will give output error rate $\epsilon_0'= e_6(\epsilon_0)\approx 0.056$ in the OBB model. 
(Calculated using the exact formula for $e_6(\epsilon)$ in Appendix~\ref{sec:numerics appendix}.) 
If a smaller error rate is required, we may then insert the output photons with error rate $\epsilon_0'$ into the $U=F_{12}$ protocol, giving output error rate $\epsilon_0''= e_{12}(\epsilon_0')\approx 0.0097$. 
We now consider the low end of the range, corresponding to very small error rate $\epsilon_0=0.005$, roughly corresponding to the visibility found in \cite{alexander_manufacturable_2024}. 
Of the protocols we considered, the optimal one in the OBB model would be $U=H_{16}$. After one iteration, we obtain output error rate $\epsilon_0' \approx 0.0003565$, an improvement by a factor of around $14$. 
Thus distillation protocols can have a significant impact in both the low and high error regimes. 

\subsection{Symmetries and conjectures}\label{sec:symmetry body}
In this section, we briefly discuss the importance of certain symmetry properties to our results. 
We will return to this subject in much greater detail in Appendix~\ref{sec:symmetry}. 
Further, in Appendix~\ref{sec:simulation appendix}, we discuss how these and other symmetries can be leveraged to greatly reduce the complexity of calculating the $h_n(\Phi_k)$ and related quantities via simulation. 

We consider a distillation protocol determined by $n\times n$ matrix $U$. 
We refer to a \emph{mode symmetry} of $U$ to be an $n\times n$ permutation matrix $P$ such that there exists an $n\times n$ diagonal matrix $D$ with $UP = DU$. In a boson sampling experiment, this corresponds to the observation that permuting the \emph{modes} (not the photons) of the input state is equivalent to applying phase shifts on the output state. 
Let $G$ be an abelian group of such mode symmetries for $U$. 
In other words, $G$ is an abelian permutation group that is simultaneously diagonalized by $U$. 
One can show that such symmetries lead to suppression laws \cite{dittel2018totally}; in other words, we obtain a set $\mc{S}_G$ of patterns $(s_0, \dots, s_{n-1})$, determined by the symmetry group $G$, such that 
\begin{equation}\label{eq:suppression}
    \bra{s_0, \dots, s_{n-1}}\hat{U}\ket{1, \dots, 1}\neq 0 \implies (s_0, \dots, s_{n-1})\in\mc{S}_G.
\end{equation}
In a distillation setting, recalling \eqref{def:ideal pattern}, we see that the \emph{ideal patterns} must be a subset of $\mc{S}_G$. 
As discussed above, in the Hadamard case, the ideal patterns may be exactly identified as the elements of $\mc{S}_G$ (satisfying our additional restriction $s_0=1$) \cite{crespi2015suppression, dittel2018totally}. 
In the Fourier case, we say that patterns in $\mc{S}_G$ satisfy the \emph{Zero Transmission Law} (ZTL), and
the ideal patterns are in general a proper subset of the ZTL patterns \cite{tichy2010zero}. The first non-ideal pattern satisfying the ZTL occurs for $n=6$ (see Appendix~\ref{sec:fourier symm}). 
In general, not all suppression laws may be explained by symmetry conditions such as the ZTL \cite{bezerra2023families}: we expect that the extra suppression for $U=F_6$ is such a case. 
In Appendix~\ref{sec:fourier prime power}, we prove the following theorem: 
\begin{theorem}\label{thm:suppression body}
If $n$ is a prime power, the suppression laws for $U=F_n$ are exactly characterized by the Zero Transmission Law. In particular, the ideal patterns $(s_0, \dots, s_{n-1})$ for $U=F_n$ are precisely the elements of $\mc{S}_G$ with $s_0=1$. 
If $n>1$ is not a prime power, we may explicitly exhibit a pattern in $\mc{S}_G$ with $s_0=1$ that is \emph{not} an ideal pattern. 
\end{theorem}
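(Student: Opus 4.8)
The plan is to show that for $U = F_n$ with $n = p^k$ a prime power, any measurement pattern $(n_0, \dots, n_{n-1})$ satisfying the ZTL condition $g_0 + \cdots + g_{n-1} \equiv 0 \pmod n$ (in the first-quantization notation $g$ associated to the pattern) in fact has $\bra{n_0, \dots, n_{n-1}}\hat{F}_n\ket{1,\dots,1} \neq 0$; the reverse implication is the ZTL itself, already known. Since the transition amplitude $\bra{n_0, \dots, n_{n-1}}\hat{F}_n\ket{1,\dots,1}$ is (up to a nonzero normalization factor $1/\sqrt{n}^{\,n}$ and combinatorial prefactors for repeated rows) the permanent of the $n\times n$ matrix whose $j$th row is the $g_j$th row of $F_n$, namely $\mathrm{perm}\big(\omega^{g_j \ell}\big)_{0\le j,\ell\le n-1}$ with $\omega = e^{2\pi i/n}$, the goal is to prove this permanent is nonzero exactly when $\sum_j g_j \equiv 0 \pmod n$. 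I would first reduce to this permanent statement, recording carefully that the extra multiplicities coming from the $F$-map (Remark~\ref{rem:conversion}) only contribute strictly positive factors and hence do not affect vanishing.

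The heart of the argument is an algebraic identity for $\mathrm{perm}\big(\omega^{g_j\ell}\big)$. Expanding the permanent over $S_n$ gives $\sum_{\sigma\in S_n}\prod_j \omega^{g_j\sigma(j)} = \sum_\sigma \omega^{\langle g,\sigma\rangle}$ where $\langle g,\sigma\rangle = \sum_j g_j\sigma(j)$. I would group the permutations $\sigma$ according to the value of $\langle g,\sigma\rangle \bmod n$: writing $c_r = \#\{\sigma : \langle g,\sigma\rangle\equiv r \pmod n\}$, the permanent equals $\sum_{r=0}^{n-1} c_r\,\omega^r$. The key structural fact is that for the Fourier matrix this cyclotomic sum can be related to the \emph{cycle type} or, better, to whether the $g_j$ (mod $n$) have a certain "balanced" structure. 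The cleanest route, and the one I expect the paper adapts from the Hadamard case, is: the permanent of $\big(\omega^{g_j\ell}\big)$ factors (or is shown nonzero) by an inductive decomposition of $F_{p^k}$ using the tensor/Cooley–Tukey-type factorization $F_{p^k}$ relative to the subgroup structure of $\mathbb{Z}/p^k\mathbb{Z}$, reducing the problem to a statement about $F_p$ for the prime $p$, where one can argue directly. For $F_p$ with $p$ prime, the relevant claim — that $\mathrm{perm}(\omega^{g_j\ell})\ne 0$ iff $\sum g_j\equiv 0$ — should follow because the only way the cyclotomic sum $\sum_r c_r\omega^r$ can vanish is if the $c_r$ are all equal (as $1,\omega,\dots,\omega^{p-1}$ are linearly independent over $\mathbb Q$ modulo the single relation $\sum\omega^r=0$, by irreducibility of the $p$th cyclotomic polynomial), and one shows the $c_r$ are all equal precisely when the multiset $\{g_j \bmod p\}$ fails the sum condition, exploiting that translating one $g_j$ by a unit permutes the residue classes transitively.

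The main obstacle I anticipate is the prime-power step rather than the prime step: handling $F_{p^k}$ for $k>1$ requires controlling how the recursive factorization of $F_{p^k}$ interacts with the combinatorics of repeated rows and the mod-$n$ sum condition, since $\mathbb Z/p^k\mathbb Z$ has zero divisors and the linear-independence argument for cyclotomic integers is more delicate (the minimal polynomial of $\omega = e^{2\pi i/p^k}$ has degree $\phi(p^k) = p^{k-1}(p-1) < p^k$, so $1,\omega,\dots,\omega^{p^k-1}$ satisfy nontrivial relations beyond $\sum\omega^r = 0$). I would address this by passing to the block structure: write each $g_j = p\, g_j' + s_j$ with $s_j\in\{0,\dots,p-1\}$, separate the permanent sum according to how $\sigma$ distributes indices among residue classes mod $p$, and show the cross-terms either vanish by an inner ZTL symmetry or combine into a nonzero contribution governed by the $k=1$ case applied to the "digit" structure; an induction on $k$ with the prime case as base should close it. I would also double-check the boundary hypothesis $n_0 = 1$ is used only to land inside the distillation setting and plays no role in the permanent nonvanishing, so that the theorem statement's second sentence follows immediately from the first.
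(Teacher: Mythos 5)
Your reduction to the non-vanishing of $\perm\big(\omega^{g_j\ell}\big)$ is right, and you are correct that the condition $n_0=1$ plays no role in the permanent statement. But the core of your argument has a genuine gap already at the prime case $n=p$. Writing $\perm = \sum_{r} c_r\omega^r$ with $c_r = \#\{\sigma : \langle g,\sigma\rangle \equiv r \pmod p\}$, irreducibility of the $p$th cyclotomic polynomial does give ``vanishes iff all $c_r$ are equal.'' The direction you need, however, is that when $\sum_j g_j \equiv 0 \pmod p$ the $c_r$ are \emph{not} all equal, and the mechanism you invoke does not establish it: the symmetry $\sigma(j)\mapsto \sigma(j)+1$ shifts $\langle g,\sigma\rangle$ by the constant $\sum_j g_j$, which is exactly how one proves the ZTL when $\sum_j g_j\not\equiv 0$, but when $\sum_j g_j\equiv 0$ this symmetry preserves each class and gives no information; translating a single $g_j$ by a unit changes $\langle g,\sigma\rangle$ by the $\sigma$-dependent amount $\sigma(j)$ and does not simply permute the classes. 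Showing the $c_r$ are unequal is essentially the entire difficulty (naive $(1-\omega)$-adic estimates fail because $p!$ and $\sum_\sigma\langle g,\sigma\rangle$ are both highly divisible by $p$). Your prime-power step is likewise only a sketch (``should follow,'' ``should close it''), and, as you yourself anticipate, the failure of linear independence of $1,\omega,\dots,\omega^{p^k-1}$ makes the proposed digit-by-digit induction delicate; no actual argument is supplied.

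The paper avoids all of this with a different device, adapted from Crespi's argument in the Hadamard case. One Laplace-expands $\perm(A)$ along row $0$ and lets $\vec c$ be the vector of $(n-1)\times(n-1)$ minors. Replacing row $0$ of $A$ by the $\ell$th row of $F_n$ for $\ell\neq g_0$ produces a matrix violating the ZTL, so $\vec u_\ell\cdot\vec c = 0$ for all $\ell\neq g_0$; since the rows of $F_n$ form an orthonormal basis, $\perm(A)=\vec u_{g_0}\cdot\vec c\neq 0$ provided $\vec c\neq 0$. The only remaining input is a cited theorem of Akbari et al.\ that for $n=p^r$ every $(n-1)\times(n-1)$ matrix with the relevant root-of-unity entries has nonzero permanent. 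In other words, the already-known ZTL plus row orthogonality does the heavy lifting, and the hard number theory is outsourced to a black-box statement about the minors. Completing your route would in effect require reproving that theorem; I recommend restructuring your argument around the Laplace-expansion/orthogonality reduction instead.
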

We also discuss analogues of the ZTL for general $F_{(n_1, \dots, n_\ell)}$ in Appendix~\ref{sec:fourier abelian}. 

Beyond constraining the ideal patterns, these symmetries may be used to simplify calculations for heralding and output error rates. In particular, we have the following, proven in Appendix~\ref{sec:symmetry}. 

\begin{theorem}\label{thm:symm body}
Let $U$ be an $n\times n$ unitary with corresponding symmetry group $G$ and set $\mc{S}_G$ of symmetry-preserving patterns. 
Let $\ket{\eta}\in \hn$ be an arbitrary pure state, with normalized projection $\ket{\eta_+}$ onto the space of $G$-symmetric states. 
We have the following results. 
\begin{enumerate}
    \item Overlap with states in $\mc{S}_G$ may be calculated in terms of the symmetrized state  $\ket{\eta_+}$:\\
    $\bra{s_0, \dots, s_{n-1}}\hat{U}\ket{\eta} = \braket{\eta_+}{\eta}\bra{s_0, \dots, s_{n-1}}\hat{U}\ket{\eta_+}$. \label{step:overlap}
    \item \label{step:symm} The heralding and error rates may be calculated in terms of the symmetrization:
    \begin{align}
        h_n(\ketbra{\eta}) &= |\braket{\eta_+}{\eta}|^2 h_n(\ketbra{\eta_+}),
        \\ e_n(\ketbra{\eta}) &= e_n(\ketbra{\eta_+}).
    \end{align}
    \item Further assume that $U = H_n$, where $n$ is a power of $2$, or $U=F_n$, where $n$ is a prime power. 
    \label{step:symm2}
    Then 
    \begin{equation}\label{eq:herald symm norm}
        h_n(\ketbra{\eta_+}) = |\left(\bra{1}\otimes I\right)\hat{U}\ket{\eta_+}|^2.
    \end{equation}
    In particular, this gives a method for calculating heralding rates $h_n(\ketbra{\eta})$ while only checking whether output patterns have $1$ photon in the output mode, rather than verifying whether the entire pattern is ideal. 
    In Appendix~\ref{sec:app error}, we express $e_n(\ketbra{\eta})$ in a similar form involving only measurements on the $0$th mode. 
\end{enumerate}
\end{theorem}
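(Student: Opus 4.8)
The plan is to reduce all three parts to a single dictionary between mode symmetries of $U$ and diagonal operators in Fock space. If $P\in G$ satisfies $UP = DU$ for a diagonal $D$, then functoriality of the map $T\mapsto \hat T$ gives $\hat U\hat P = \hat D\hat U$, where $\hat P = P^{\otimes n}\otimes I^{\otimes n}$ permutes external modes (and is inert on internal ones) and $\hat D = D^{\otimes n}\otimes I^{\otimes n}$ acts diagonally on Fock states via $\hat D\ket{n_0,\dots,n_{n-1}} = \big(\prod_i D_{ii}^{\,n_i}\big)\ket{n_0,\dots,n_{n-1}}$. Since $U$ is invertible, $P\mapsto D(P)$ is a homomorphism from $G$ into the diagonal unitaries, so for each Fock pattern $p = (n_0,\dots,n_{n-1})$ the assignment $\chi_p(P) := \prod_i D_{ii}(P)^{\,n_i}$ is a character of $G$, and (following \cite{dittel2018totally}) $p\in\mc S_G$ precisely when $\chi_p$ is trivial. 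For Part~\ref{step:overlap} I would decompose $\ket\eta$ into joint eigenspaces of the commuting unitaries $\{\hat P:P\in G\}$, i.e.\ into $G$-isotypic components $\ket\eta = \sum_\chi\ket{\eta_\chi}$ with $\hat P\ket{\eta_\chi} = \chi(P)\ket{\eta_\chi}$, the trivial component being $\braket{\eta_+}{\eta}\ket{\eta_+}$. From $\hat U\hat P = \hat D\hat U$ one gets $\hat D(P)\,\hat U\ket{\eta_\chi} = \chi(P)\,\hat U\ket{\eta_\chi}$, so pairing with $\bra p$ and using diagonality of $\hat D(P)$ yields $(\chi_p(P)-\chi(P))\bra p\hat U\ket{\eta_\chi} = 0$ for every $P$; hence $\bra p\hat U\ket{\eta_\chi} = 0$ unless $\chi = \chi_p$. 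When $p\in\mc S_G$ only the trivial component survives, giving $\bra p\hat U\ket\eta = \braket{\eta_+}{\eta}\bra p\hat U\ket{\eta_+}$. The identical computation applies when $\bra p$ carries an internal label, since $\hat D(P)$ acts only on external modes.

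Part~\ref{step:symm} then follows by unwinding definitions: $h_n(\ketbra\eta)$ is a sum over ideal patterns $p$ (and internal outcomes $\alpha$) of $|\bra{p,\alpha}\hat U\ket\eta|^2$, and the numerator of the conditional probability defining $e_n(\ketbra\eta)$ is the analogous sum restricted to ``bad'' internal outcomes in the output mode. Since every ideal pattern lies in $\mc S_G$, Part~\ref{step:overlap} extracts the common factor $|\braket{\eta_+}{\eta}|^2$ from every term, giving $h_n(\ketbra\eta) = |\braket{\eta_+}{\eta}|^2 h_n(\ketbra{\eta_+})$; the factor cancels in the ratio defining $e_n$, so $e_n(\ketbra\eta) = e_n(\ketbra{\eta_+})$.

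For Part~\ref{step:symm2} I would apply the suppression statement of Part~\ref{step:overlap} to the symmetric state $\ket{\eta_+}$ itself: $\hat U\ket{\eta_+}$ has Fock support inside $\mc S_G$, so $(\bra 1\otimes I)\hat U\ket{\eta_+}$ — the unnormalized state remaining after detecting exactly one photon in mode $0$ — has Fock support inside $\{p\in\mc S_G : n_0 = 1\}$. Under the stated hypothesis ($U=H_n$ with $n$ a power of $2$, or $U=F_n$ with $n$ a prime power), Theorem~\ref{thm:suppression body} in the Fourier case and the characterization of \cite{crespi2015suppression, dittel2018totally} in the Hadamard case identify this set exactly with the set of ideal patterns; hence $\|(\bra 1\otimes I)\hat U\ket{\eta_+}\|^2 = \sum_{p\text{ ideal},\,\alpha}|\bra{p,\alpha}\hat U\ket{\eta_+}|^2 = h_n(\ketbra{\eta_+})$, which is \eqref{eq:herald symm norm}. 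Without the prime-power hypothesis one only obtains $\geq$, since $\mc S_G$ may then contain non-ideal patterns with $n_0 = 1$. The companion formula for $e_n(\ketbra\eta)$ announced at the end of Part~\ref{step:symm2} would follow by the same reasoning applied to an internal-external measurement on mode $0$.

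I expect the hard part to be organizational rather than conceptual: pinning down the dictionary $UP = DU \Rightarrow \hat U\hat P = \hat D\hat U$ precisely enough that the character-theoretic step is rigorous (in particular, that $\mc S_G$ as defined via $G$ coincides with $\{p : \chi_p \equiv 1\}$), and carefully propagating the internal degrees of freedom and the symmetrization/desymmetrization conventions of Section~\ref{sec:linear optics} through the definitions of $h_n$ and $e_n$, including the internal-external measurement used to make $e_n$ rigorous.
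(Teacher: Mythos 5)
Your proposal is correct and follows essentially the same route as the paper: the character/joint-eigenspace decomposition for the abelian group $G$ together with the relation $\hat U\hat P=\hat D\hat U$ is exactly Lemmas~\ref{lem:suppression} and \ref{lem:symmetrization}, Part~\ref{step:symm} is the same unwinding of the sums defining $h_n$ and $\ole_n$ with the common factor $|\braket{\eta_+}{\eta}|^2$ cancelling in the ratio, and Part~\ref{step:symm2} uses the same identification of $\{p\in\mc S_G: n_0=1\}$ with the ideal patterns via Theorem~\ref{thm:suppression body} (Fourier, prime power) and \cite{crespi2015suppression} (Hadamard). Your remark that only an inequality survives without the prime-power hypothesis matches the paper's Remark~\ref{remark:fourier prime assumption}.
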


\begin{remark}\label{remark:fourier prime assumption}
We briefly comment on the assumption on $U$ in part \ref{step:symm2} above. 
More generally, we can replace this with the assumption that all patterns $(s_0, \dots, s_{n-1})$ in $\mc{S}_G$ with $s_0 = 1$ and $\bra{s_0, \dots, s_{n-1}}\hat{U}\ket{\eta}\neq 0$ are ideal. 
As discussed above, this is automatically satisfied for $U$ as in the theorem, as all patterns in $\mc{S}_G$ with $s_0=1$ are ideal. 
For $\ket{\eta} = \ket{1, \dots, 1}$ perfectly indistinguishable, \eqref{eq:herald symm norm} follows from the definition of ideal pattern regardless of the unitary $U$, allowing for the straightforward computation of $h_n(0)$ as in Appendix~\ref{sec:ideal herald}. 
Further, we show in Appendix~\ref{sec:one} that we may use the indistinguishable case to calculate $h_n(\Phi_1)$ for any $U=F_{(n_1, \dots, n_\ell)}$
regardless of the value of $n=n_1\cdots n_\ell > 2$. 
In the Fourier case with $n$ not a prime power, one may modify Protocol~\ref{proto:distillation} to post-select for any patterns in $\mc{S}_G$ with $s_0=1$ rather than the smaller set of ideal patterns, so that the general assumption given above holds by definition. This intentionally increases the output error rate, but allows for the application of Theorem~\ref{thm:symm body} Part~\ref{step:symm2} and makes the behavior of the protocol more predictable for all $n$. 
Noting Figure~\ref{fig:opt_n}, however, we advise against doing this, as the Fourier protocols with $n$ not a prime power generally seem to be optimal without this modification. 
\end{remark}

These symmetry considerations are essential in the proofs given in the Appendix. 
Further, they lead to the following conjectures: 

\begin{conjecture}\label{conj:herald term bound}
As above, assume that 
$U = H_n$, where $n$ is a power of $2$, or $U=F_n$, where $n$ is a prime power. 
Let $\ket{\eta}, \ket{\Delta}$ be states of the form \eqref{eq:internal fock experiment}, where $\ket{\eta}$ has $k$ distinguishability errors and $|\Delta \rangle$ has $n$ (so that in the latter case, all photons are mutually fully distinguishable). 
Let $\ket{\eta_+}, \ket{\Delta_+}$ be the $G$-symmetrizations, as above. Then 
\begin{equation}
    h_n(\Phi_0)\leq h_n(\ketbra{\eta_+})\leq h_n(\ketbra{\Delta_+}).
\end{equation}
In particular, 
\begin{equation}
    \frac{1}{n}h_n(\Phi_0)\leq h_n(\Phi_k)\leq \dfrac{\textnormal{gcd}(k,n)}{n} h_n(\ketbra{\Delta_+}).
\end{equation}
\end{conjecture}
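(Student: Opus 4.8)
The plan is to reduce the two displayed chains to a single monotonicity statement about the ``heralding marginal'' $\Pr[\text{one photon in the output mode}]$, using Theorem~\ref{thm:symm body}, and then to prove that monotonicity. Since $U=H_n$ ($n=2^r$) or $U=F_n$ ($n$ a prime power), Part~\ref{step:symm2} of Theorem~\ref{thm:symm body} applies and gives $h_n(\ketbra{\eta_+}) = |\left(\bra{1}\otimes I\right)\hat{U}\ket{\eta_+}|^2$, and likewise for $\ket{\Delta_+}$; since the perfectly indistinguishable input $\ket{1,\dots,1}$ is already $G$-symmetric, $h_n(\Phi_0)=h_n(0)$ is the same quantity evaluated on $\ket{1,\dots,1}$. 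So the first chain asserts that the probability of heralding one photon in mode $0$ is smallest on $\ket{1,\dots,1}$ and largest on $\ket{\Delta_+}$. Next I would write $\ket{\eta_+}$ as the normalized sum of $\hat{g}\ket{\eta}$ over a transversal of $G/\mathrm{stab}_G(\eta)$; the terms $\hat{g}\ket{\eta}$ have pairwise orthogonal internal parts (they place the distinguishable photons in different input modes), and $\hat{U}$, $(\bra{1}\otimes I)$ act trivially on $\h{int}^{\otimes n}$, so I would check that the projected vectors $(\bra{1}\otimes I)\hat{U}\hat{g}\ket{\eta}$ are still pairwise orthogonal, whence $h_n(\ketbra{\eta_+})$ equals $\Pr[\text{one photon in mode }0]$ evaluated on the plain Fock state $\ket{\eta}$ of \eqref{eq:internal fock experiment} itself (and likewise for $\ket{\Delta}$). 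After this reduction the whole statement is a claim about the Fock states $\cre_0[\xi_{j_0}]\cdots\cre_{n-1}[\xi_{j_{n-1}}]\vac$.

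For the monotonicity I would associate to such a Fock state the partition of the $n$ photons into blocks of equal internal state: $\ket{1,\dots,1}$ is the coarsest partition, $\ket{\Delta}$ the finest, and a $k$-error state lies in between (for the OBB family, one block of size $n-k$ together with $k$ singletons). It then suffices to prove that $\Pr[\text{one photon in mode }0]$ through $\hat{U}$ is monotone non-decreasing under refining the partition. For a single refinement---splitting off some photons and making them orthogonal to the rest---the change in this marginal is a Hong--Ou--Mandel-type interference term, which I would expand using the permanent/immanant representation of $\bra{n_0,\dots,n_{n-1}}\hat{U}\ket{1,\dots,1}$; the goal is to show that the structure of $F_n$ and $H_n$ forces this term to have a fixed sign. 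Here the useful special features are: $|U_{ij}|^2 = 1/n$ identically for $F_n$ (so the fully distinguishable statistics through $U$ are exactly multinomial, giving a closed form for the right endpoint $h_n(\ketbra{\Delta_+})$ to compare against), the Zero Transmission Law of Theorem~\ref{thm:suppression body} (which pins down exactly which patterns contribute), and the recursive tensor structure of $H_n$. An alternative route, perhaps more tractable for the averaged statement, is to derive a hypergeometric-type closed form for $h_n(\ketbra{\eta_+})$ and $h_n(\Phi_k)$ generalizing the computation of $h_n(0)$ in Appendix~\ref{sec:ideal herald}, and to check the inequalities directly from that formula.

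The ``in particular'' clause should then be routine: by Part~\ref{step:symm} of Theorem~\ref{thm:symm body} and the decomposition \eqref{eq:error decomp}, $h_n(\Phi_k)$ is the average of $|\braket{\eta_+}{\eta}|^2\, h_n(\ketbra{\eta_+})$ over the $k$-error Fock states $\ket{\eta}$ making up $\Phi_k$, and $|\braket{\eta_+}{\eta}|^2 = |\mathrm{stab}_G(\eta)|/|G|$. Since $|G|=n$ for the relevant translation groups, this weight lies in $[\tfrac1n,\tfrac{\gcd(k,n)}{n}]$: the lower bound is immediate, and the upper bound holds because a subgroup of $G$ fixing the error set forces that set to be a union of its cosets, so its order divides both $k$ and $n$. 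Feeding the bounds $h_n(\Phi_0)\le h_n(\ketbra{\eta_+})\le h_n(\ketbra{\Delta_+})$ into this average yields $\tfrac1n h_n(\Phi_0)\le h_n(\Phi_k)\le \tfrac{\gcd(k,n)}{n}h_n(\ketbra{\Delta_+})$.

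The hard part will be Step~2's monotonicity---pinning down the sign of the Hong--Ou--Mandel term under a partition refinement. Because $\Pr[\text{one photon in mode }0]$ is only a marginal of the full output distribution and is itself an intricate permanent-type sum, there is no soft reason for it to move monotonically with distinguishability, so the argument must genuinely exploit the arithmetic structure of $F_n$ and $H_n$; I expect this to be where most of the work lies, and it may be that only a weaker, $\epsilon$-dependent version is provable in general. A secondary technical point is the orthogonality claim $(\bra{1}\otimes I)\hat{U}\hat{g}\ket{\eta}\perp(\bra{1}\otimes I)\hat{U}\hat{g'}\ket{\eta}$ used in the reduction---true in worked small cases but requiring a real argument by tracking internal labels through $\hat{U}$---and the identification $|G|=n$ together with the lower bound $|\braket{\eta_+}{\eta}|^2\ge 1/n$, both of which rest on the precise description of the abelian symmetry group $G$ in the Hadamard and Fourier cases.
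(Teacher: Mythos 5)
This statement is labeled a \emph{conjecture} in the paper, and the paper offers no proof of it: the authors support it only by numerical evidence (e.g.\ Figure~\ref{fig:herald}) and a heuristic (``more mode symmetry seems to decrease the heralding rate''), and they explicitly note that the coefficients in the ``in particular'' clause come from Theorem~\ref{thm:symm body} Part~\ref{step:symm} together with trivial bounds on $|\braket{\eta_+}{\eta}|^2$. Your proposal reproduces exactly those parts that the paper also regards as routine: the reduction via Theorem~\ref{thm:symm body} Parts~\ref{step:symm} and \ref{step:symm2}, and the derivation of $\tfrac1n h_n(\Phi_0)\le h_n(\Phi_k)\le \tfrac{\gcd(k,n)}{n}h_n(\ketbra{\Delta_+})$ from the first chain by bounding $|\braket{\eta_+}{\eta}|^2=|K|/|G|$ with $|G|=n$ and $|K|\mid\gcd(k,n)$. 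That bookkeeping is consistent with the paper.

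The genuine gap is that the central claim, $h_n(\Phi_0)\le h_n(\ketbra{\eta_+})\le h_n(\ketbra{\Delta_+})$, is left unproven in your proposal just as in the paper. Your Step~2 --- monotonicity of $\Pr[\text{one photon in mode }0]$ under refinement of the distinguishability partition --- is only a plan: no sign argument for the Hong--Ou--Mandel interference term is given, and you yourself concede that a weaker statement may be all that is provable. Note also that the paper records that the inequality \emph{fails} for $F_6$ (where $h_6(\Phi_2)/h_6(\Phi_0)\approx 0.132<1/6$) unless the protocol is modified as in Remark~\ref{remark:fourier prime assumption}, so any proof must use the prime-power hypothesis in an essential, not merely formal, way; nothing in your outline does so at the step where it matters. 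A secondary unproven point is your identification $h_n(\ketbra{\eta_+})=|(\bra{1}\otimes I)\hat{U}\ket{\eta}|^2$ via pairwise orthogonality of the vectors $(\bra{1}\otimes I)\hat{U}\hat{g}\ket{\eta}$: in first quantization the internal tensor factor of $\hat{g}\ket{\eta}$ is literally unchanged by the mode permutation $\hat{g}$ (only the mode--internal-state correlation changes), so the orthogonality after projection requires a careful argument tracking internal labels, which you flag but do not supply. As it stands, the proposal is a reasonable research program for the conjecture, not a proof of it.
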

The heart of this conjecture is that for $G$-symmetrized states, more mode symmetry seems to \emph{decrease} the heralding rate. 
Thus the lower bound comes from the fully mode-symmetric state $\Phi_0 = \ketbra{1, \dots, 1}$ and the upper bound from a state $\ket{\Delta}$ with as little symmetry as possible. 
The coefficients come from Theorem~\ref{thm:symm body} Part \ref{step:symm} and trivial bounds on  $|\braket{\eta_+}{\eta}|^2$ when $\ket{\eta}$ has $k$ distinguishability errors. 

For $F_n$ where $n$ is not a prime power, Conjecture~\ref{conj:herald term bound} fails, with such protocols tending to have lower heralding rate than predicted: for example, in the OBB model we have $h_6(\Phi_2)/h_6(\Phi_0)\approx 0.132 < 1/6$. 
However, the conjecture seems to hold for general $F_n$ if we modify the protocol as suggested in Remark~\ref{remark:fourier prime assumption}. 
With this modification, we would for example have $h_6(\Phi_2)/h_6(\Phi_0)\approx 0.216 > 1/6$, and the $n=6$ case would then fit into the same patterns as the prime-power cases. (But this would negatively impact the error rate, e.g., increasing $e_6(\Phi_2)$ from $0.017$ to $0.026$). 

As a corollary of the previous conjecture, we obtain the following: 

\begin{conjecture}\label{conj:herald lower bound}
For the $n$-photon Fourier or Hadamard protocols, with $n$ a prime power as above, we have
\begin{equation}
    h_n(\epsilon)\geq h_n(0)\left((1-\epsilon)^n\left(1-\frac{1}{n}\right) + \frac{1}{n}\right).
\end{equation}
\end{conjecture}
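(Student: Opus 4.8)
The plan is to obtain Conjecture~\ref{conj:herald lower bound} as an immediate corollary of Conjecture~\ref{conj:herald term bound}, using the \emph{exact} (untruncated) binomial decomposition of the heralding rate in the URS model. Concretely, I would start from the identity \eqref{eq:herald_expanded_start},
\[
h_n(\epsilon) = \sum_{k=0}^n \binom{n}{k}\epsilon^k(1-\epsilon)^{n-k}\, h_n(\Phi_k),
\]
which is valid term-by-term (no $O(\epsilon^2)$ truncation is needed here, in contrast to \eqref{eq:herald_expanded_end}).

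Next I would split off the $k=0$ summand, contributing exactly $(1-\epsilon)^n h_n(\Phi_0)$, and bound each remaining summand from below. The lower-bound half of Conjecture~\ref{conj:herald term bound} (in the prime-power/power-of-two setting where it is asserted) gives $h_n(\Phi_k)\geq \tfrac1n h_n(\Phi_0)$ for every $k\geq 1$; note the $\gcd(k,n)/n$ upper bound plays no role in this corollary. Hence
\[
h_n(\epsilon)\;\geq\;(1-\epsilon)^n h_n(\Phi_0) \;+\; \frac{h_n(\Phi_0)}{n}\sum_{k=1}^n \binom{n}{k}\epsilon^k(1-\epsilon)^{n-k}.
\]
Then I would invoke the elementary identity $\sum_{k=1}^n \binom{n}{k}\epsilon^k(1-\epsilon)^{n-k} = 1-(1-\epsilon)^n$ together with $h_n(\Phi_0)=h_n(0)$ (noted below \eqref{eq:herald_expanded_end}), and collect terms to get exactly
\[
h_n(\epsilon)\;\geq\; h_n(0)\!\left((1-\epsilon)^n\Big(1-\tfrac1n\Big)+\tfrac1n\right),
\]
as claimed. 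That completes the derivation of the corollary.

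The main obstacle is not in the above computation, which is essentially a one-liner once Conjecture~\ref{conj:herald term bound} is granted, but in the input conjecture itself: namely, that $G$-symmetrizing a state with $k$ distinguishability errors cannot drop the heralding rate below $\tfrac1n h_n(\Phi_0)$, which is currently only supported numerically. The natural tool would be Theorem~\ref{thm:symm body} — in particular the reduction \eqref{eq:herald symm norm} of $h_n(\ketbra{\eta_+})$ to an overlap involving only the output mode — combined with the heuristic that additional mode symmetry decreases the heralding rate, making the fully mode-symmetric state $\Phi_0$ extremal. I would therefore present Conjecture~\ref{conj:herald lower bound} as conditional on Conjecture~\ref{conj:herald term bound}, emphasizing that unconditionally it requires only the \emph{lower} bound $h_n(\Phi_k)\geq \tfrac1n h_n(\Phi_0)$, not the full strength of the term-bound conjecture.
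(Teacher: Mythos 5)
Your derivation is correct and is exactly the route the paper intends: the statement is presented ``as a corollary of the previous conjecture,'' and the implicit argument is precisely your substitution of the lower bound $h_n(\Phi_k)\geq \tfrac{1}{n}h_n(\Phi_0)$ into the exact binomial decomposition \eqref{eq:herald_expanded_start}, followed by the identity $\sum_{k=1}^n\binom{n}{k}\epsilon^k(1-\epsilon)^{n-k}=1-(1-\epsilon)^n$. Your observation that only the lower-bound half of Conjecture~\ref{conj:herald term bound} is needed, and that the result therefore remains conditional on that (numerically supported) input, is also consistent with the paper's presentation.
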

In particular, if $\epsilon \leq \frac{1}{n}$, by Theorem~\ref{thm:herald ideal} we have
\begin{equation}
    \lim_{n\rightarrow\infty} h_n(\epsilon)\geq \lim_{n\rightarrow\infty}h_n(0) \left(\left(1+\frac{1}{n}\right)^{n+1} + \frac{1}{n}\right)= \frac{1}{4e}\approx 0.092.
\end{equation}
Then assuming the conjecture holds, we see that as long as $\epsilon$ is small relative to $n$, the heralding rates cannot get too small even for large $n$. 
Further, if Conjecture~\ref{conj:herald} holds, we have for all $n\geq 3$
\begin{equation}
    h_n(\epsilon)\geq \frac{1}{4e}.
\end{equation}

Recall from our earlier discussion of error thresholds that choosing $n$ and $\epsilon$ appropriately is also important to ensure that the distillation protocols reduce error rates. 
Here we note that the numerical threshold values obtained in Figure~\ref{fig:epsilon_critical} are all far greater than $1/n$. 
Thus it appears that if $\epsilon < 1/n$, we obtain both a large heralding rate and a distillation protocol that improves the output error rate.

\subsection{Numerics}\label{sec:numerics}
In this section, we give further numerics for our protocols. The details of how our numerical simulations are performed is discussed in App.~\ref{sec:simulation appendix}, with data provided in App.~\ref{sec:numerics appendix}.

First, we consider the question of optimal distillation protocols given an error rate $\epsilon$. In particular, as discussed in Section~\ref{sec:iterated}, we expect the heralding rates of our protocols to be reasonably large, with linearly growing resource requirements. 
The limitation is instead the growth of the error rate: protocols with large $n$ give the smallest output error rate when $\epsilon$ is small, but perform worse for large $\epsilon$. 
Thus, in Figure~\ref{fig:opt_n}, we consider all protocols with $U=F_n$, $3\leq n\leq 16$, and for each $\epsilon$ determine the value of $n$ with $e_n(\epsilon)$ as small as possible. 
We are most interested in $\epsilon\leq 0.15$, as discussed above. 
As expected, we find that the largest protocol is optimal for small $\epsilon$, and as $\epsilon$ grows the optimal protocol becomes smaller and smaller. 
Further, in the OBB model, there is a large plateau for $\epsilon$ roughly between $0.1$ and $0.4$ in which the $6$-photon protocol is optimal. 
We expect this is due to the fact for $n=6$, the set of ideal patterns is strictly smaller than the set of symmetry-preserving patterns, as discussed in Section~\ref{sec:symmetry body} and Appendix~\ref{sec:fourier symm}. 
In fact, for both error models and reasonably small values of $\epsilon$, 
the dominant protocols seem to have $n=6,12,15$, all of which are not prime powers and have a set of ideal patterns strictly smaller than the set of symmetry-preserving patterns \cite{tichy_sampling_2015} (recall Theorem~\ref{thm:suppression body}). 
(Note we exclude $n=16$ from the discussion here, as by Theorem~\ref{thm:error rate} it is guaranteed to be optimal for sufficiently small $\epsilon$.) 
The strong performance of $n=6,12,15$ is unsurprising, since a smaller set of ideal patterns leads to a smaller heralding rate: see the discussion below Conjecture~\ref{conj:herald term bound}. 
As explained in more detail below, this should generally reduce the error rate as well. 
We also note that these seemingly optimal protocols have $n$ divisible by $3$. 
Understanding this divisibility condition may explain why there is no point at which $n=14$ is preferable to $n=12$ or $n=15$, even though all three have fewer ideal patterns than symmetry-preserving patterns. 
In Figure~\ref{fig:opt_n_both}, we repeat this analysis while including the Hadamard protocols as well. 
We find that for reasonably small values of $\epsilon$, the same Fourier protocols ($n$ not a prime power and divisible by $3$) are generally preferred. 
(The fact that $H_{16}$ is preferred over $F_{16}$ in the OBB model seems to be related to the discussion of the $n=8$ case below.) 
We conjecture that if we included all $n\leq 18$ in the analysis, we would see the optimal protocol transition directly from $F_{18}$ to $F_{15}$ or $F_{12}$. 

\begin{figure}
    \centering
    \includegraphics[width=\columnwidth]{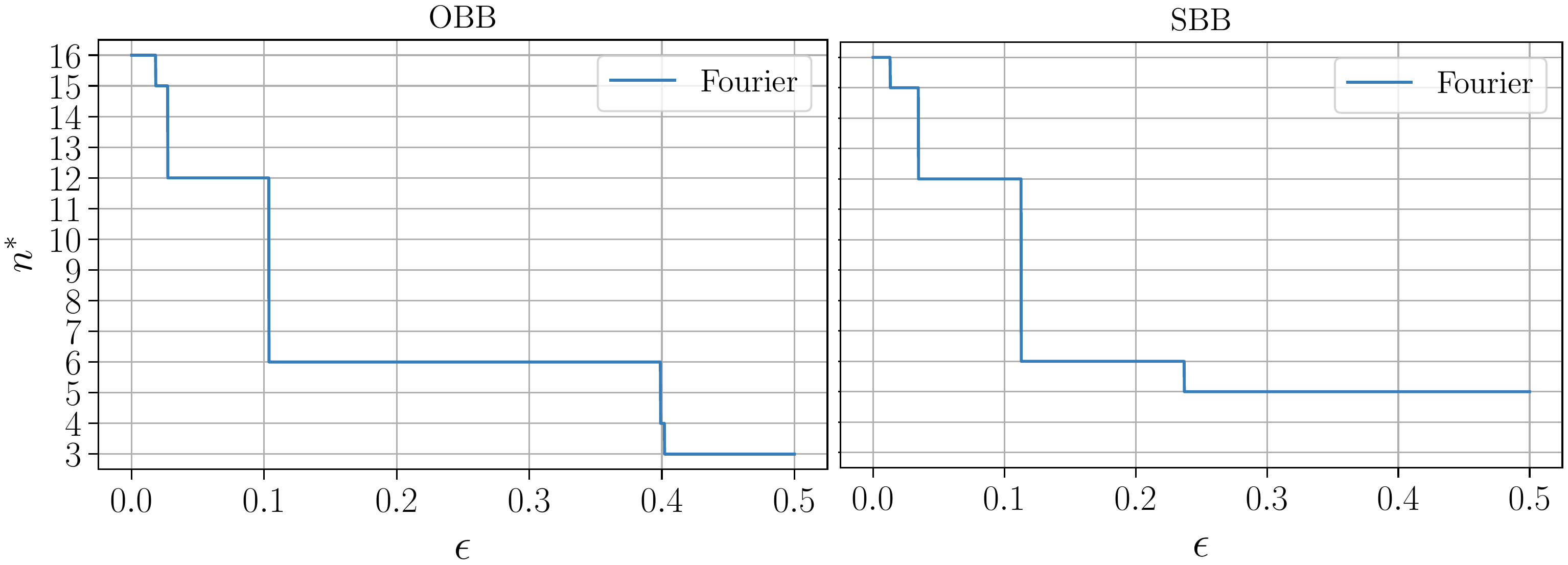}
    \caption{Given an initial error rate $\epsilon$, we plot the value $n^*$ of $n$ for which the protocol with $U=F_n$ gives the greatest error reduction (this does not take into account heralding rates or expected resource counts). We considered protocols up to size $n=16$. As expected, for small $\epsilon$ we have $n^*=16$, since $e_n(\epsilon)\approx \epsilon/n$.
    }
    \label{fig:opt_n}
\end{figure}

\begin{figure}
    \centering
    \includegraphics[width=0.5\columnwidth]{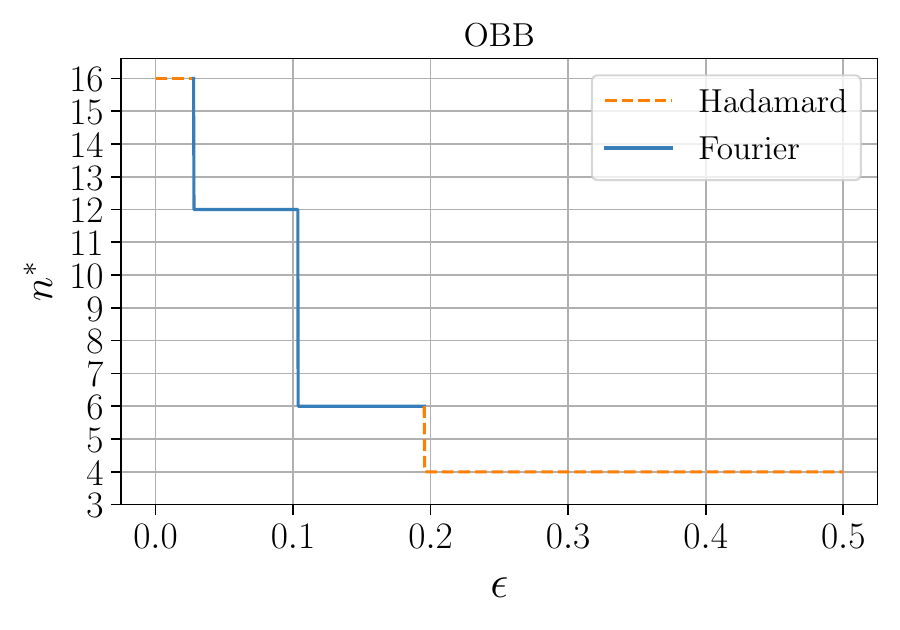}
    \caption{We extend the plot of Figure~\ref{fig:opt_n} (OBB) to include the Hadamard protocols $U=H_4, H_8, H_{16}$. 
    The different line styles indicate whether the protocol of interest is Fourier or Hadamard. Note that in the SBB case, the plot is unchanged from the one in Figure~\ref{fig:opt_n} (i.e., for the SBB error model, the Fourier protocol always has the greatest error reduction).}
    \label{fig:opt_n_both}
\end{figure}

In the preceding analysis, we compared protocols with different values of $n$. 
Next, we fix $n=8$ and numerically investigate the heralding and error rates of the distillation protocols using different unitaries and error models. 
We choose to study $n=8$ here because it is a power of $2$, allowing for comparison of corresponding Fourier and Hadamard distillation protocols. 
(We include a third protocol corresponding to $F_{(4,2)}$ in Appendix~\ref{sec:fourier abelian}.) 
The $n=4$ case was originally presented in \cite{marshall_distillation_2022}, 
so $n=8$ is the smallest new case relevant to both protocols. 
We note that similar results to the below hold for the $n=4$ case. 
For this discussion, we note the following formula for the output error rate, given in Appendix~\ref{sec:app error}: 
\begin{equation}\label{eq:error explicit}
    e_n(\epsilon) = \dfrac{\ole_n(\epsilon)}{h_n(\epsilon)} = \dfrac{\sum_{k=1}^n \binom{n}{k}\epsilon^k (1-\epsilon)^{n-k} e_n(\Phi_k)h_n(\Phi_k)}{\sum_{k=0}^n \binom{n}{k}\epsilon^k (1-\epsilon)^{n-k} h_n(\Phi_k)}. 
\end{equation}
We recall that $e_n(\Phi_k)$ is a conditional probability: the probability that, if a state with $k$ distinguishability errors leads to an ideal pattern, it outputs a non-ideal photon. 
Recalling Theorems \ref{thm:error rate}, \ref{thm:herald first order}, and \ref{thm:herald ideal}, the quantities $h_8(\Phi_0), h_8(\Phi_1), e_8(\Phi_1)$ are fixed and we have the first-order approximations 
\begin{equation}
    e_8(\epsilon) = e_8(\Phi_1)\epsilon + O(\epsilon^2), \,\,\,\, h_8(\epsilon) = h_8(\Phi_0)(1-7\epsilon) + O(\epsilon^2),
\end{equation}
regardless of the error model or the choice of $U\in\{F_8, H_8\}$. 
Then we expect much of the difference between protocols and models to be explainable via the second-order terms, which are determined by $h_8(\Phi_2)$, $e_8(\Phi_2)$ in addition to the above fixed quantities. 
In particular, $e_8(\Phi_2)$ and $h_8(\Phi_2)$ affect the numerator of \eqref{eq:error explicit} to second order; $h_8(\Phi_2)$ also affects the denominator, but only to third order. 
We give values for these second-order quantities in Table~\ref{table:8}. 
One may think of $\Phi_2$ in the OBB model as an average of all states with $2$ ``different" errors; in the SBB model, $\Phi_2$ is an average of states with $2$ of the ``same" error. 

\begin{table}[h]
\begin{tabular}{|l|l|l|}
\hline
              & $h_8(\Phi_2)$ & $e_8(\Phi_2)$ \\ \hline
Fourier, OBB  & $0.040$       & $0.377$       \\ \hline
Fourier, SBB  & $0.042$       & $0.328$       \\ \hline
Hadamard, OBB & $0.038$       & $0.338$       \\ \hline
Hadamard, SBB & $0.076$       & $0.338$       \\ \hline
\end{tabular}
\caption{We plot the values of $h_8(\Phi_2)$ and $e_8(\Phi_2)$ for the variant protocols and error models considered here. These are the dominant values that differ between cases, and they determine the behavior of $h_8(\epsilon)$, $e_8(\epsilon)$ up to second order. \label{table:8}}
\end{table}

In Figure~\ref{fig:n_8_dist}, we plot the heralding rates and output error rates in the Fourier and Hadamard cases, using both the OBB and SBB error models. 
We are particularly interested in the small $\epsilon$ regime. 
We note from the plot that $3$ of the $4$ cases considered have very similar heralding rates $h_8(\epsilon)$. 
The Hadamard case with SBB error model, however, has a significantly larger heralding rate. 
This is clearly reflected in the values of $h_8(\Phi_2)$ in Table~\ref{table:8}, with the Hadamard SBB case nearly doubling the other values. 
Since the only second-order effect of $h_8(\Phi_2)$ on \eqref{eq:error explicit} is in the numerator, an increase in this quantity should increase the overall error rate. 
This explains why this case has significantly worse error rate than the others, as seen in Figure~\ref{fig:n_8_dist}. Intuitively, the Hadamard protocol does not notice two of the ``same" (SBB) errors as often, so we are more likely to have distinguishable output photons. 

\begin{remark}
We reiterate the above observation: even though a large heralding rate reduces the overall resource requirements, it can also lead to larger error rates. This is what makes our protocols notable, especially as $n$ increases: they maintain a nontrivial heralding rate while arbitrarily decreasing the error rate. 
\end{remark}

Next, we fix an error model and compare error rates $e_n(\epsilon)$ for $F_n$ and $H_n$. 
For SBB, the Fourier case $U=F_8$ has smaller output error rate, translating to better performance. 
This is explained by the analysis above. 
When we consider the OBB error model, however, we see that the Hadamard protocol $U=H_8$ has the smaller error rate and better performance, with both $h_8(\Phi_2)$ and $e_8(\Phi_2)$ decreasing in this case. 
Then for a \emph{fixed} $n=2^r$, we cannot conclude whether Fourier or Hadamard unitaries are preferable in general: this depends strongly on the error model relevant to the experiment in question. 
Note that this is in contrast with the discussion of Figures~\ref{fig:opt_n} and \ref{fig:opt_n_both} above, in which we found that Fourier protocols often perform better when we are free to choose the value of $n$. 

Similarly, we may fix the unitary (Fourier or Hadamard) and compare the error rates $e_n(\epsilon)$ under the two error models. From Figure~\ref{fig:n_8_dist}, we see that the Hadamard protocol performs better with OBB errors, whereas the Fourier protocol performs better with SBB errors. 
In the Hadamard case, this may be explained by the increased heralding rate under the SBB model, as discussed above. 
For $U=F_8$, however, the heralding rate does not significantly differ between the two error models. 
(In fact, the protocol with smaller error rate $e_n(\epsilon)$ has slightly \emph{larger} heralding rate!) 
Then the discrepancy is caused by the change in $e_8(\Phi_2)$, which is proportionally more significant. 
Intuitively, for the $F_8$ protocol, states with $2$ 
errors give ideal output patterns at roughly the same rate regardless of error model, but when heralding occurs and the errors are ``different," the error photons are more likely to end up in the output mode. 

In summary, we note that Fourier protocols, especially those with $n$ divisible by $3$ and not a prime power, seem to be the most effective at reducing the error rate. 
For a fixed $n = 2^r$, however, whether $F_n$ or $H_n$ performs better seems to be highly dependent on the error model. 
Finally, the Hadamard protocols seem to struggle with filtering out SBB errors, while the Fourier protocols are equally capable of filtering out both types but, when heralding occurs, are more likely to direct OBB errors to the output mode. 

\begin{figure}
    \centering
    \includegraphics[width=\columnwidth]{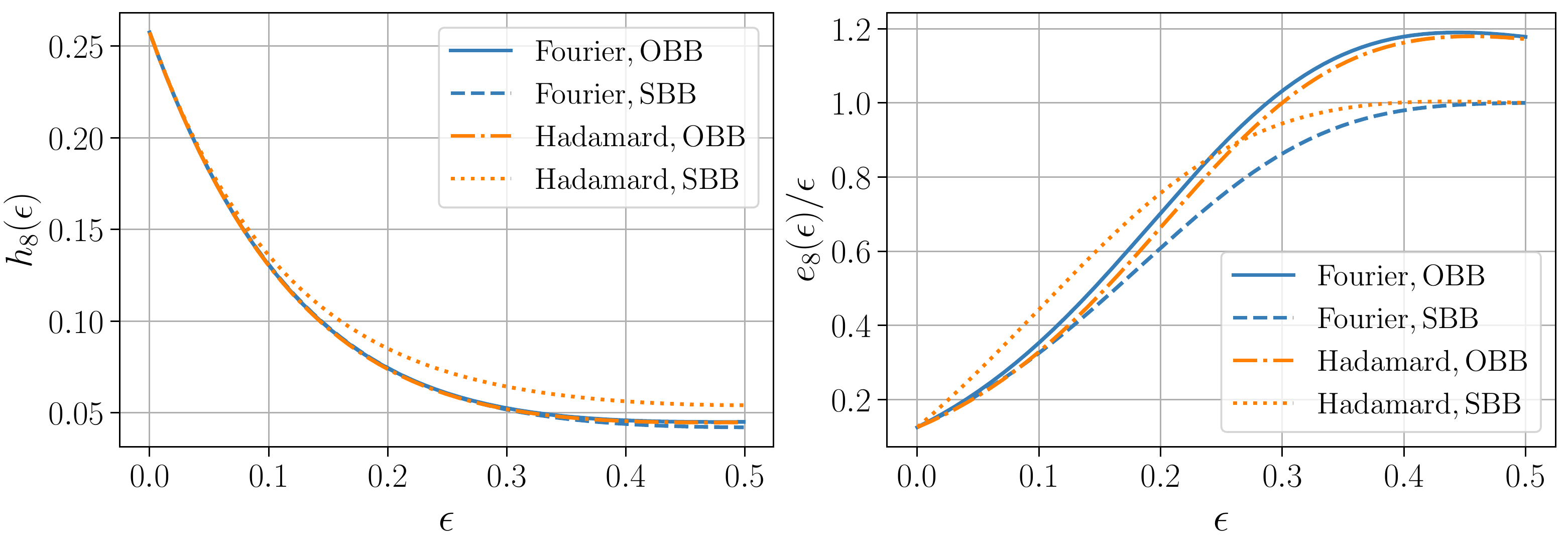}
    \caption{Heralding rate (left) and output error rate (right) for $n=8$ Fourier and Hadamard protocols, for both OBB and SBB noise models.}
    \label{fig:n_8_dist}
\end{figure}

\subsection{Incorporation of photon loss}\label{sec:loss}
We now discuss the incorporation of photon loss models into our analysis of the Fourier and Hadamard distillation protocols. 
We begin with a brief discussion of loss models. 

We initially consider the \emph{uniform beamsplitter loss model}, in which our linear optical circuit is made up of lossy beamsplitters, and each photon has probability $\lambda$ of loss at each beamsplitter. 
We model this as a photon loss channel with loss probability $\lambda$, occurring independently on each of the two relevant modes, just before the beamsplitter. 
We recall, however, that symmetric loss commutes through linear optics \cite{oszmaniec_classical_2018}. 
For the unitary $U=H_n$, we give an explicit recursive circuit decomposition in Figure~\ref{fig:hadamard-circuit}, and note that the described loss channels may be commuted to the end of the circuit. 
In particular, since each path from an input mode to an output mode involves exactly $\log_2 n$ beamsplitters, the loss probability on each mode is $\Lambda =1-(1-\lambda)^{\log_2 n}$. 
For simplicity, we consider the same loss model in the Fourier case. 
This may likely be derived as above; for example, when $n$ is a power of $2$, a circuit decomposition similar to that in Figure~\ref{fig:hadamard-circuit} is given in \cite{barak2007quantum}, and the same analysis holds. 
Then, going forward, we will use the \emph{simplified loss model} depicted in Figure~\ref{fig:loss-model}, in which partially distinguishable photons are given as input to the circuit, the linear optical unitary $\hat{U}$ is applied, each photon independently undergoes a loss channel with probability $\Lambda = 1-(1-\lambda)^{\log n}$, and the final $n-1$ modes are measured. 
We note that in this model, the probability that no photons are lost is $(1-\Lambda)^n = (1-\lambda)^{n\log n}$.

\begin{figure}
    \centering
    \includegraphics[width=0.65\columnwidth]{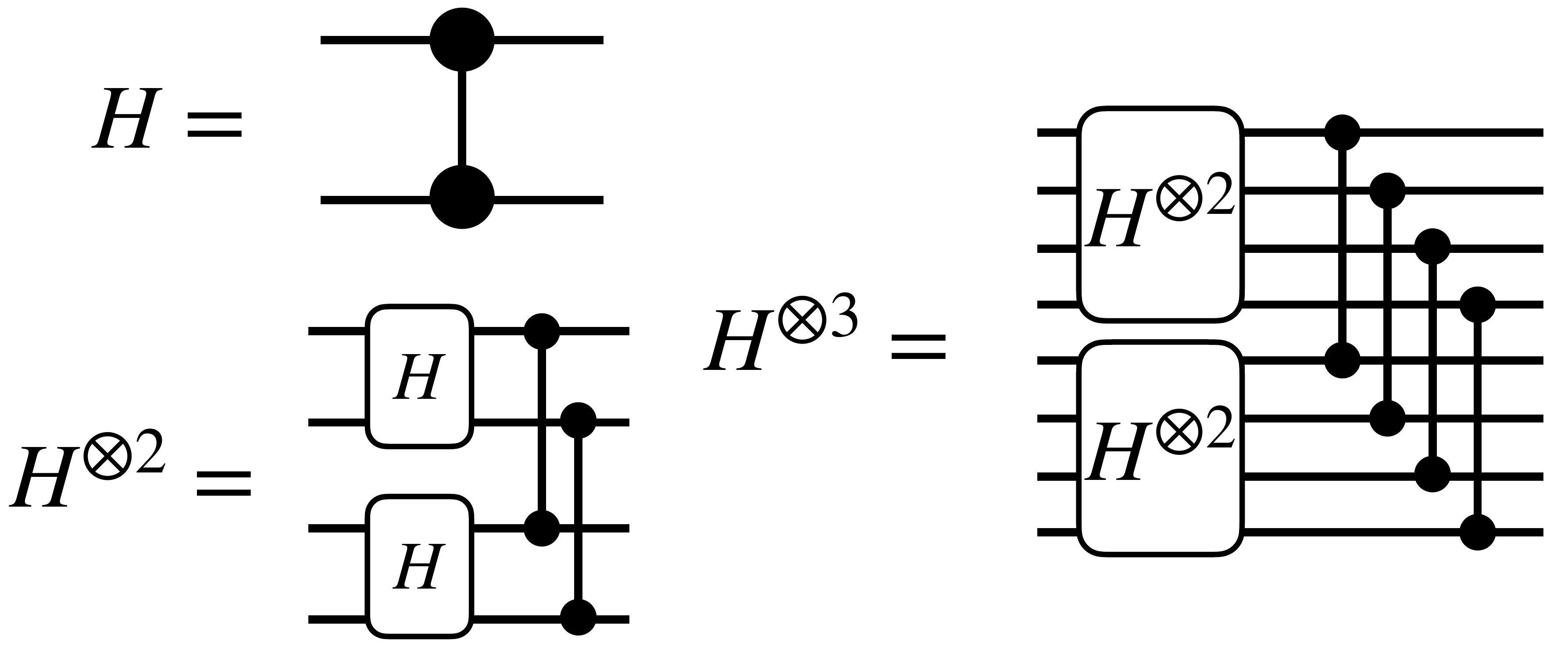}
    \caption{Illustration of construction of the circuit implementing the $n\times n$ Hadamard unitaries $H_{n} = H^{\otimes r}$, for $r=1,2,3$, where $n=2^r$. It generalizes in the obvious way. The circuit components are 50:50 beamsplitters. One can see directly that each mode interacts with $r$ beamsplitters.}
    \label{fig:hadamard-circuit}
\end{figure}

\begin{figure}
    \centering
    \includegraphics[width=0.65\columnwidth]{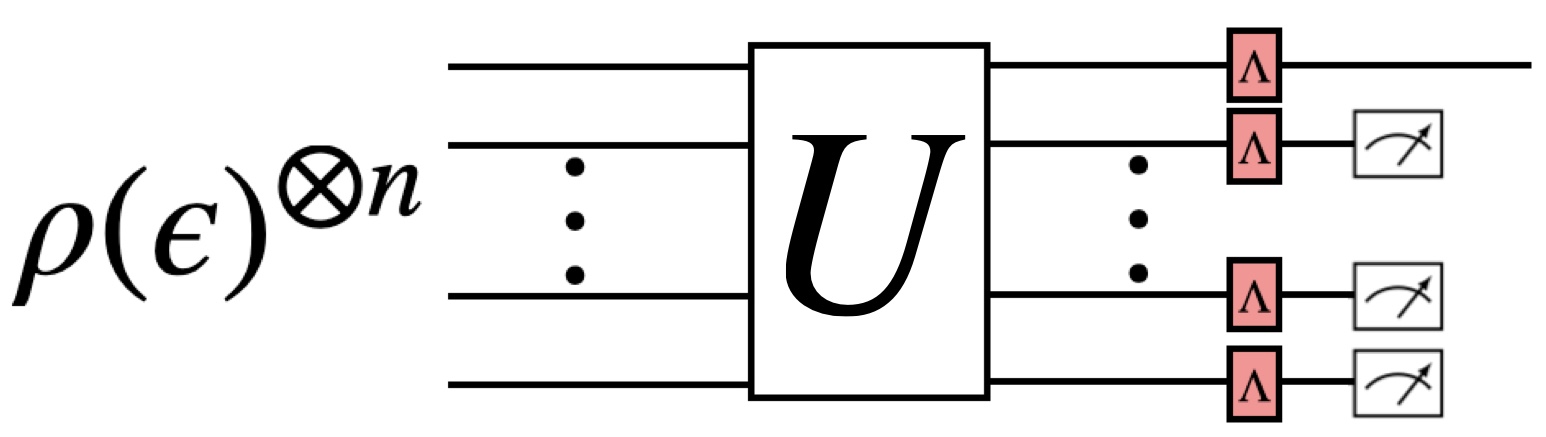}
    \caption{Illustration of our error model. As in the rest of the paper, $n$ partially distinguishable photons, each with internal state $\rho(\epsilon)$, are input into $n$ different external modes, and we apply the linear optical unitary $\hat{U}$ corresponding to an $n\times n$ unitary $U$. Unique to this section, we then apply a photon loss channel, with loss probability $\Lambda = 1-(1-\lambda)^{\log n}$, to each mode. We then perform PNRD on the final $n-1$ modes and post-select as usual. 
    For the Hadamard circuits of Figure~\ref{fig:hadamard-circuit}, this is equivalent to the uniform beamsplitter loss model with probability $\lambda$ of loss at each beamsplitter. 
    }
    \label{fig:loss-model}
\end{figure}

We begin by discussing the heralding rate, which we now write as $h_n(\epsilon; \lambda)$ to allow for dependence on the loss rate, with $h_n(\epsilon) = h_n(\epsilon; 0)$. 
Heralding requires the detection of $n-1$ photons in the latter $n-1$ modes; in particular, heralding is only possible if at most $1$ photon has been lost. 
We have the corresponding decomposition
\begin{equation}\label{eq:lossy decomposition eq}
    h_n(\epsilon; \lambda) = (1-\Lambda)^n h_n(\epsilon) + n\Lambda (1-\Lambda)^{n-1}h_n^{(1)}(\epsilon),
\end{equation}
where $h_n(\epsilon) = h_n(\epsilon; 0)$ encapsulates the heralding probability when no photons are lost and $h_n^{(1)}(\epsilon)$ is the heralding probability when exactly $1$ photon is lost (as mentioned above, $h_n^{(k)}(\epsilon)=0$ for $k>1$). 
Note that we have two possibilities for pre-loss output patterns contributing to $h_n^{(1)}(\epsilon)$: either we have an ideal pattern and the single photon in mode $0$ is lost, which occurs with probability $\frac{1}{n}h_n(\epsilon)$; or we have a \emph{non-ideal} pattern (with no photons in mode 0), and a photon in a nonzero mode is lost, and the resulting pattern on the final $n-1$ modes matches an ideal pattern. We write the latter probability as $c_n(\epsilon)$. In either case, no photons exit mode 0. 
From this description, we decompose 
$h_n^{(1)}(\epsilon)= \frac{1}{n}h_n(\epsilon) + c_n(\epsilon)$, 
and substituting into \eqref{eq:lossy decomposition eq} gives
\begin{equation}\label{eq: lossy decomposition detailed}
    h_n(\epsilon; \lambda) = (1-\Lambda)^{n-1} h_n(\epsilon) + n\Lambda (1-\Lambda)^{n-1}c_n(\epsilon).
\end{equation}

\begin{theorem}\label{thm:loss}
Let all notation be as above. For the 
protocol corresponding to $U=F_{(n_1, \dots, n_\ell)}$, $n=n_1\cdots n_\ell$, we have
\begin{equation}\label{eq:lossy decomposition ineq}
    h_n(\epsilon; \lambda)\geq (1-\Lambda)^{n-1}h_n(\epsilon) = (1-\lambda)^{(n-1)\log n} h_n(\epsilon).
\end{equation}
If $\epsilon=0$, the bound is tight: 
\begin{equation}
    h_n(0; \lambda)= (1-\Lambda)^{n-1} h_n(0) = (1-\lambda)^{(n-1)\log n} h_n(0).
    \label{eq:loss_herald_no_err}
\end{equation}
\end{theorem}
\begin{proof}
The first claim follows by bounding $c_n(\epsilon)\geq 0$ in \eqref{eq: lossy decomposition detailed}. 
For the second claim, we argue that $c_n(0) = 0$. 
We recall that when $\epsilon=0$, the (pre-loss) output patterns satisfy certain symmetry constraints, given in Section~\ref{sec:distillation}. 
In particular, in the notation used above, we have $\sum_i g_i \equiv 0 \mod n$ in the Fourier case and $\bigoplus_i g_i = 0$ in the Hadamard case. 
(We have similar restrictions for general $F_{(n_1, \dots, n_\ell)}$, discussed in Appendix~\ref{sec:fourier abelian}.) 
For a non-ideal pattern to contribute to $c_n$, there must exist two symmetry-preserving patterns of $n$ photons in $n$ modes, one ideal and one non-ideal, both of which may give the same $(n-1)$-photon pattern after a loss. 
The symmetry conditions given above make this impossible: given a pattern of $n-1$ photons in $n$ modes, there is a unique mode $j$ such that adding a photon to mode $j$ results in a symmetry-preserving pattern. 
This is clear for the Fourier and Hadamard cases given above and is discussed for the general case in Appendix~\ref{sec:fourier abelian}.
\end{proof}
This may be used to obtain an upper bound on the resources required for distillation in the presence of loss, namely
\begin{equation}\label{eq:lossy resource estimate}
    \dfrac{n}{h_n(\epsilon; \lambda)} \leq \dfrac{n}{(1-\lambda)^{(n-1)\log n}h_n(\epsilon)}\approx \dfrac{4n}{(1-\lambda)^{(n-1)\log n}},
\end{equation}
where the approximation follows from Theorem~\ref{thm:herald ideal}, as in Theorem~\ref{thm:resources}. 
With a nonzero loss rate $\lambda$ per beamsplitter, the heralding rate is exponentially suppressed in $n\log n$, leading to exponentially growing resource requirements. 
Given a fixed amount of resources and error rates $\epsilon,\lambda$, this effectively puts an upper bound on the maximum feasible size $n$ of a distillation protocol. 
For moderate values of $n$, however, the effect is not so drastic. 
We show numerical scaling of the heralding probability in Fig.~\ref{fig:herald_loss} for $n=8$, where we see that even a high loss rate of $\Lambda = 0.1$ only cuts the heralding probability in half.
As a further example, we consider distillation with $n=16$. 
Without loss, distillation would require $4$ runs on average, with a total expected cost of approximately $64$ photons. 
With a loss rate of $\lambda = 0.01$ ($\Lambda \approx 0.04$), 
the above implies that distillation would require an average of $7.3$ runs, with a total expected cost of approximately $117$ photons. 

\begin{figure}
    \centering
    \includegraphics[width=\columnwidth]{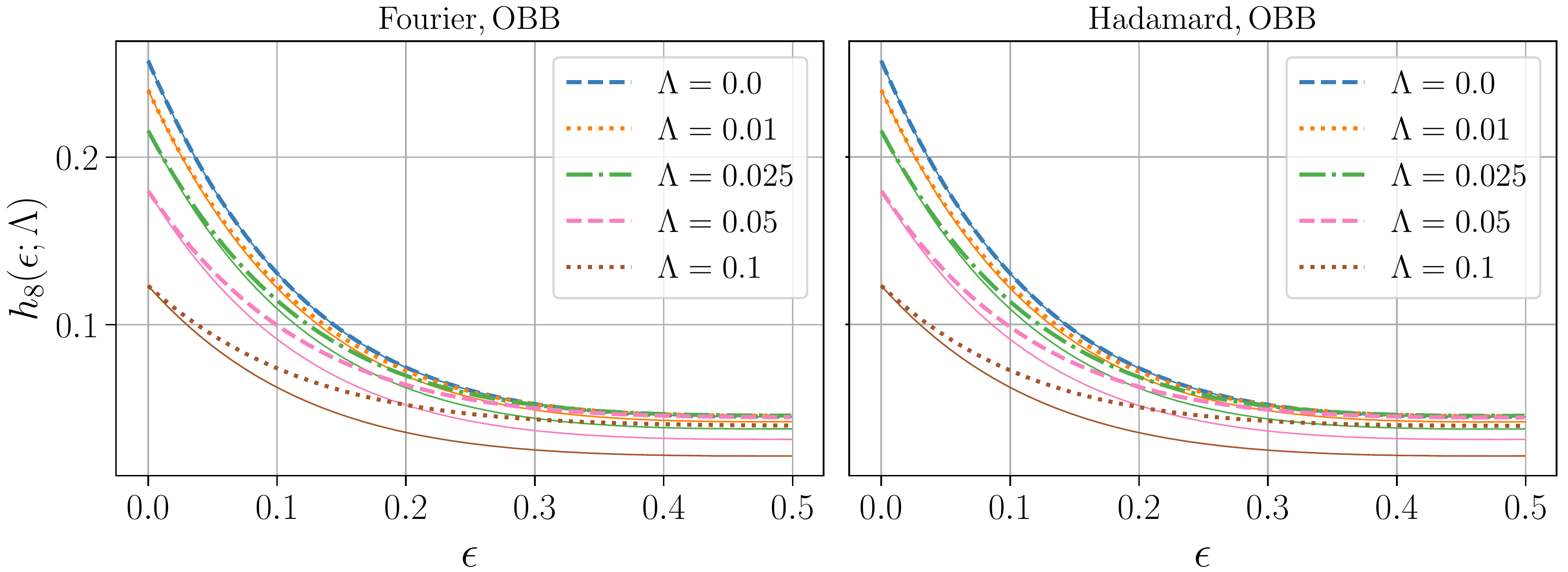}
    \caption{Heralding probability with photon loss in the Fourier (left) and Hadamard (right) protocols, for $n=8$, with the OBB error model. The SBB curves look similar. 
    Note that the parameter in the legend is $\Lambda = 1-(1-\lambda)^{\log n}$. 
    The thin solid lines represent the lower bound, Eq.~\eqref{eq:lossy decomposition ineq}, and notice they intersect the data curves at $\epsilon=0$ as per Eq.~\eqref{eq:loss_herald_no_err}. We see the lower bound is tighter for smaller $\Lambda$.}
    \label{fig:herald_loss}
\end{figure}

We now briefly discuss the output error rate. 
In the presence of photon loss, the definition of output error rate given in Section~\ref{sec:distillation} is no longer well-defined, as it assumes there is always $1$ output photon. 
In the more general setting, it is more natural to consider the 
\emph{output fidelity} $f_n(\epsilon; \lambda)$ to be the probability that, given the heralding of an ideal pattern, there is a single output photon with internal state $\ket{\xi_0}$. 
(In other words, we count the probability of outputs with no photon loss \emph{and} an ideal output photon.) 
The output error rate is then $e_n(\epsilon; \lambda) = 1-f_n(\epsilon;\lambda)$. 
We note that $f_n(\epsilon; \lambda)$ is straightforwardly computable in terms of the quantities we have already discussed. In particular, since successful output requires no photon loss, 
\begin{align}
    f_n(\epsilon;\lambda) &= P(\textnormal{no loss}|\textnormal{herald})f_n(\epsilon; 0) \label{eq:lossy fidelity def}
    \\&= \dfrac{(1-\Lambda)^{n}h_n(\epsilon; 0)}{h_n(\epsilon; \lambda)}f_n(\epsilon; 0)
    \\&= \dfrac{(1-\Lambda)f_n(\epsilon; 0)}{1+n\Lambda c_n(\epsilon)/h_n(\epsilon)},\label{eq:lossy fidelity quotient}
\end{align}
where the second equality follows from Bayes' theorem and the third follows from 
expanding $h_n(\epsilon; \lambda)$ according to \eqref{eq: lossy decomposition detailed}. 
To estimate the output fidelity, we recall that $h_n(0)\approx 1/4 \neq 0$ (see Theorem~\ref{thm:herald ideal}) and $c_n(0)=0$ (see the proof of Theorem~\ref{thm:loss}), and therefore $c_n(\epsilon)/h_n(\epsilon) = O(\epsilon)$. Expanding \eqref{eq:lossy fidelity quotient} as a Maclaurin series in $\epsilon$, we obtain
\begin{equation}\label{eq:lossy fidelity}
    f_n(\epsilon; \lambda) = \dfrac{(1-\Lambda)f_n(\epsilon; 0)}{1+n\Lambda c_n(\epsilon)/h_n(\epsilon)} = (1-\Lambda)f_n(\epsilon; 0) + O(n\Lambda\epsilon).
\end{equation}
Then up to \emph{second-order} corrections involving both a photon loss and a distinguishability error, the output fidelity is simply the output fidelity in the lossless case multiplied by the probability that the output photon is not lost. 

We may also phrase \eqref{eq:lossy fidelity def} as follows. Given successful heralding, at most one photon can be lost. 
If there is one loss, then there is no photon in the output mode. 
If there is no loss, then we reduce to the lossless case, Theorem~\ref{thm:error rate}. 

\begin{figure}[th]
    \centering
    \includegraphics[width=0.6\columnwidth]{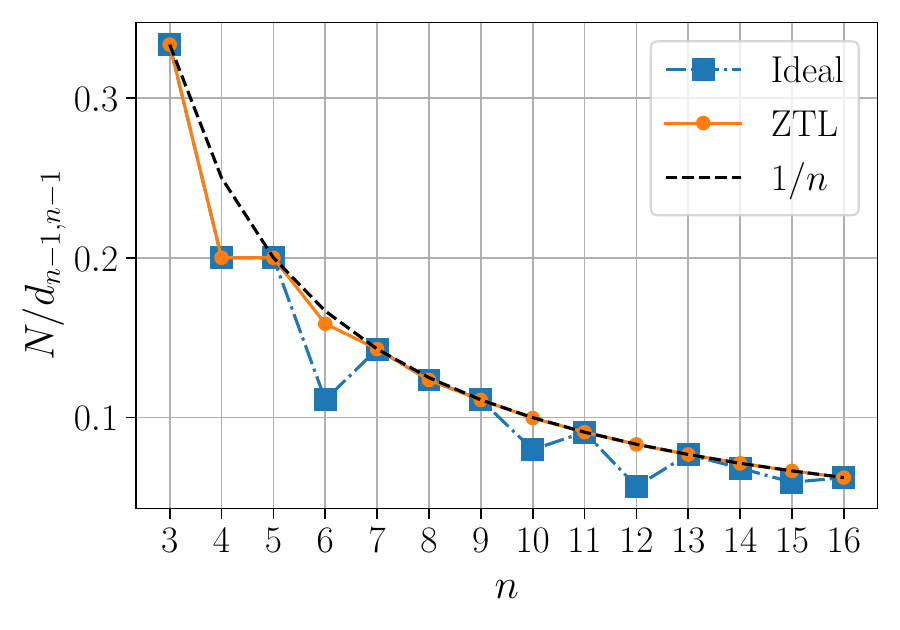}
    \caption{For the Fourier matrices $F_n$, we plot the ratio of the total number $N$ of ideal patterns (respectively, ZTL patterns with 1 photon in mode 0) to the total number of possible patterns $d_{n-1, n-1}$ (dimension of the space of $n-1$ photons in $n-1$ modes). As expected from \cite{tichy2010zero}, it scales like $1/n$.  We can also see here that for prime powers, the ZTL patterns match the ideal ones (see Th.~\ref{thm:suppression body}), but for non-prime powers there is additional suppression.}
    \label{fig:pattern_ration}
\end{figure}

\section{Discussion}

We have provided a set of protocols for reducing photonic distinguishability errors, using $n$ photons to reduce the error rate by a factor of $n$. 
This generalizes the work of  \cite{sparrow_quantum_2017, marshall_distillation_2022}. Our protocols, based on the Hadamard and Fourier unitaries (or more generally the Fourier transform on any finite abelian group, as discussed in Appendix~\ref{sec:fourier abelian}), 
can scale to arbitrarily many photons (in contrast to \cite{marshall_distillation_2022}) 
and retain a high heralding probability near $1/4$ even for large $n$ (in contrast to \cite{sparrow_quantum_2017}). These protocols are efficient in the sense that they require approximately $4n$ photons to reduce the error rate by a factor of $n$. This is a major improvement over the previous state-of-the-art protocol of \cite{marshall_distillation_2022}, which requires $O(n^2)$ photons to obtain similar error reduction. 
We consider a detailed example in Remark~\ref{remark:efficiency}, demonstrating the greatly reduced resource costs in our setting. 
Further, to reduce the error rate by more than a factor of $4$ without exponentially scaling costs, the protocols of \cite{marshall_distillation_2022} require an iterative scheme involving active feed-forwarding and quantum memory during distillation, which can potentially introduce new errors and negate the benefits of distillation. 
These additional assumptions are not required for our new protocols, which are therefore significantly more resource-efficient \emph{and} potentially simpler to implement experimentally. 

We note that a major potential application of distillation protocols is to fault-tolerant linear optical quantum computation, for example fusion-based quantum computation (FBQC). Distinguishability errors may often lead to measurement errors in FBQC and similar paradigms, and therefore reducing the distinguishability error rates can improve the thresholds for other errors. Increasing the photon loss threshold is of particular importance, since loss tends to be the dominant source of noise. 
In turn, these improved loss thresholds may result in significantly decreased resource requirements for FBQC. 
Since our distillation protocols are efficient, source-agnostic, and require only standard linear optical hardware, this motivates the use of distillation to improve distinguishability error rates beyond even those obtained by state-of-the-art single photon sources. 
We discuss these considerations in greater detail in Section~\ref{sec:error correction}.

In this work, we established the theory for these new and efficient distillation protocols. We proved and conjectured 
various results, including 
exact formulas and lower bounds for heralding rates in Sections~\ref{sec:heralding} and \ref{sec:symmetry body}, 
characterization of the effect of photon loss in Section~\ref{sec:loss}, 
and discussion of resource costs and optimal choices of distillation protocols in Sections~\ref{sec:iterated} and \ref{sec:numerics}. 
Much of this analysis was enabled by the symmetries present in the Fourier and Hadamard matrices \cite{dittel2018totally} and, more generally, the Fourier transform associated with a finite abelian group. This theoretical perspective explains earlier work on distillation \cite{marshall_distillation_2022}, relates it to the study of suppression laws, and provides insight into the efficiency of the new distillation protocols. 

We also show in App.~\ref{sec:haar} that while distillation is \textit{possible} using almost any unitary (in particular for Haar random unitaries), it does not yield an efficiently scaling protocol. 
First we observe that in the Haar case (as with the Fourier and Hadamard cases), the heralding rate for the ideal patterns asymptotically approaches 1/4 (the `ideal' patterns in the Haar case are typically just all possible patterns, since there is no symmetry). 
However, in the Haar case, post-selecting on all ideal patterns increases the error (approximately doubling it).
Instead, Haar random unitaries can be used for distillation only when one chooses the post-selection patterns as a small subset of ideal patterns with a high amount of constructive interference (high probability). 
Although the Haar case does not scale efficiently, these observations help to highlight how the Hadamard/Fourier protocols achieve efficiency; the symmetry causes a concentration on a relatively small number of possible patterns, as demonstrated in Fig.~\ref{fig:pattern_ration}. 
This figure shows that in the Fourier case, the total proportion of ideal patterns (out of all possible ones) scales like $1/n$, as expected from analysis in Ref.~\cite{tichy2010zero}. Remarkably, the weight of these patterns remains approximately constant, at $1/4$, as proven in Theorem~\ref{thm:herald ideal}. 

We proved in Theorem~\ref{thm:suppression body} that for $n$ a prime power, the Fourier protocols (like the Hadamard protocols \cite{crespi2015suppression}) have ideal patterns exactly characterized by symmetry, 
resolving an open problem due to \cite{tichy2010zero}. 
However, not all suppression laws come from symmetry conditions \cite{bezerra2023families}. In particular, 
our numerical analysis in Section~\ref{sec:numerics} indicates that the optimal distillation protocols likely correspond to $F_n$ where $n$ is \emph{not} a prime power. (See Figure~\ref{fig:opt_n}.) In those cases, again by Theorem~\ref{thm:suppression body}, the ideal patterns are a strict subset of those determined by the known symmetry conditions. 
Further, there seem to be additional properties determining which $n\neq p^r$ lead to optimal distillation protocols: for example, $F_n$ with $n$ divisible by $3$ seem to outperform other non-prime-power $n$. 
Thus we believe that future work on distillation should focus on understanding and taking advantage of the extra suppression occurring in these cases. 

\section{Acknowledgements}
We are grateful for support from the NASA SCaN program, from DARPA under IAA 8839, Annex 130, and from  NASA Ames Research Center.
J.M. is thankful for support from NASA Academic Mission Services, Contract No. NNA16BD14C. 
N.A. is a KBR employee working under the Prime Contract No. 80ARC020D0010 with the NASA Ames Research Center. 
We also thank Vaclav Kotesovec for pointing out the proof of Theorem~\ref{thm:herald ideal} Part~\ref{part:herald ideal 2} discussed in Appendix~\ref{sec:maple appendix} and Bojko Bakalov for suggesting the study of general Fourier transforms on finite abelian groups, as discussed in Appendix~\ref{sec:fourier abelian}. 
The United States Government retains, and by accepting the article for publication, the publisher acknowledges that the United States Government retains, a nonexclusive, paid-up, irrevocable, worldwide license to publish or reproduce the published form of this work, or allow others to do so, for United States Government purposes.

\bibliography{refs, report}

\appendix

\section{Symmetry}\label{sec:symmetry}

\subsection{General case}\label{sec:symmetry general}

In this section, we review an argument due to \cite{dittel2018totally} to show that certain types of symmetries of a unitary $U$ determine the transition probabilities of the associated linear optical unitary $\hat{U}$. 
Suppose we have an $n\times n$ unitary $U$. 
As above, we write $\hat{U}\ket{\psi}$ to denote the action of the corresponding linear optical unitary on the state $\ket{\psi}\in\mc{H}^{\otimes n}$. 
In this section, we only consider PNRD. As shorthand, we write $\bra{s_0, \dots, s_{n-1}}$ for the projector onto the subspace of states with $s_j$ photons in mode $j$. 
(Regardless of internal states.) 
Thus we will be interested in the transition probabilities $\bra{s_0, \dots, s_{n-1}}\hat{U}\ket{\psi}$. 

Suppose there is some $n\times n$ permutation matrix $P$ such that $UP = DU$, where $D$ is a diagonal matrix with diagonal entries $d_0, \dots, d_{n-1}$. (By unitarity, all $|d_i| = 1$.) 

\begin{remark}
Before continuing, we emphasize that the linear optical unitary $\hat{P}$ is a permutation of the \emph{external modes} of the input state, acting by $P^{\otimes n}\otimes I^{\otimes n}$ on $\mc{H}_{\textnormal{ext}}^{\otimes n}\otimes \mc{H}_{\textnormal{int}}^{\otimes n}$. 
These should not be confused with the permutations of the \emph{photons} discussed in Section~\ref{sec:linear optics}. 
\end{remark}

For simplicity, we begin with a special case in which the input state is fully indistinguishable and there are no internal modes. 
We have
\begin{equation}
    \hat{U}\ket{1,\dots, 1} = \hat{U}\hat{P}\hat{P}^\dagger \ket{1,\dots, 1} = \hat{D}\hat{U}\ket{1,\dots, 1},
\end{equation}
since $\ket{1, \dots,  1}$ is unaffected by permutations of the modes. 
Then for any $s_0, \dots, s_{n-1}$, we have
\begin{equation}
    \bra{s_0\cdots s_{n-1}}\hat{U}\ket{1, \dots,  1} = \bra{s_0\cdots s_{n-1}}\hat{D}\hat{U}\ket{1, \dots,  1} = d_0^{s_0}\cdots d_{n-1}^{s_{n-1}}\bra{s_0\cdots s_{n-1}}\hat{U}\ket{1, \dots,  1}.
\end{equation}
Since all $d_i$ are nonzero, we conclude that either 
\begin{equation}
    d^s = 1\textnormal{ or }\bra{s_0\cdots s_{n-1}}\hat{U}\ket{1, \dots,  1}=0,
\end{equation}
where $d^s := d_0^{s_0}\cdots d_{n-1}^{s_{n-1}}.$  
Thus, the output patterns arising with nonzero probability must satisfy $d^s = 1$. 
These precisely correspond to the known suppression laws for the Fourier and Hadamard cases \cite{tichy2010zero, crespi2015suppression}, discussed below. 

Note that the only property of $\ket{1, \dots,  1}$ we needed was its invariance under certain permutations of its modes. More generally, suppose that $\ket{\psi}\in\mc{H}^{\otimes n}$ is a general pure state of $n$ photons in $n$ external modes, potentially distinguishable. 
Assume that $P\ket{\psi} = e^{i\phi}\ket{\psi}$. Then the same argument gives
\begin{equation}
    \bra{s_0\cdots s_{n-1}}\hat{U}\ket{\psi} = \bra{s_0\cdots s_{n-1}}\hat{U}\hat{P}\hat{P}^\dagger\ket{\psi}
    = e^{-i\phi}\bra{s_0\cdots s_{n-1}}\hat{D}\hat{U} \ket{\psi}
    = e^{-i\phi} d^s \bra{s_0\cdots s_{n-1}}\hat{U} \ket{\psi}.
\end{equation}
We conclude that either $d^s = e^{i\phi}$ or $\bra{s_0\cdots s_{n-1}}\hat{U}\ket{\psi} = 0$. 
We summarize in the following lemma.  
\begin{lemma}\cite{dittel2018totally}\label{lem:suppression}
Let $U$, $P$, and $D$ be $n\times n$ unitary matrices, with $P$ a permutation matrix and $D$ diagonal with diagonal entries $d_0, \dots, d_{n-1}$. 
Suppose $UP = DU$. 
Let $\ket{\psi}$ be a (pure) state of $n$ photons in $n$ modes satisfying $\hat{P}\ket{\psi} = e^{i\phi}\ket{\psi}$. 
If $\bra{s_0, \dots, s_{n-1}}\hat{U}\ket{\psi}\neq 0$, we have
\begin{equation}
    d^s = e^{i\phi}.
\end{equation}
\end{lemma}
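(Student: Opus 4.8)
The plan is to formalize the computation sketched immediately before the statement; the content is a short calculation, and the only care needed concerns how the ``hatted'' operators interact with the PNRD projectors and with the internal degrees of freedom.

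First I would record that the map $T\mapsto \hat T = T^{\otimes n}\otimes I^{\otimes n}$ on operators on $\h{ext}$ (as defined in Section~\ref{sec:linear optics}) is multiplicative, $\widehat{AB} = \hat A\hat B$, and compatible with adjoints, since it is built factor-by-factor on $\hn = \h{ext}^{\otimes n}\otimes\h{int}^{\otimes n}$. Applying this to the hypothesis $UP = DU$ yields the operator identity $\hat U\hat P = \hat D\hat U$ on $\hn$, and unitarity of $P$ gives $\hat P\hat P^\dagger = I$.

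Second I would compute the action of $\hat D = D^{\otimes n}\otimes I^{\otimes n}$ on a PNRD subspace. Writing $\bra{s_0,\dots,s_{n-1}}$ for the projector onto the span of all states (with arbitrary internal content) having $s_j$ photons in external mode $j$, I claim $\hat D$ acts there as the scalar $d^s := d_0^{s_0}\cdots d_{n-1}^{s_{n-1}}$. Indeed $\hat D$ is diagonal in the first-quantization external basis: a basis vector $\ket{m_0,\dots,m_{n-1}}_{1Q}$ lying in this subspace is an eigenvector of $D^{\otimes n}$ with eigenvalue $\prod_i d_{m_i} = \prod_j d_j^{s_j} = d^s$, where I used that $s_j$ equals the number of $i$ with $m_i = j$; this is independent of the chosen basis vector and of the internal tensor factor. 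Hence $\bra{s_0,\dots,s_{n-1}}\hat D = d^s\bra{s_0,\dots,s_{n-1}}$ as operators.

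Third I would combine these. From $\hat P\ket{\psi} = e^{i\phi}\ket{\psi}$ we get $\hat P^\dagger\ket{\psi} = e^{-i\phi}\ket{\psi}$, and inserting $I = \hat P\hat P^\dagger$ gives
\begin{equation}
    \bra{s_0,\dots,s_{n-1}}\hat U\ket{\psi} = \bra{s_0,\dots,s_{n-1}}\hat U\hat P\hat P^\dagger\ket{\psi} = e^{-i\phi}\bra{s_0,\dots,s_{n-1}}\hat D\hat U\ket{\psi} = e^{-i\phi}d^s\,\bra{s_0,\dots,s_{n-1}}\hat U\ket{\psi}.
\end{equation}
If $\bra{s_0,\dots,s_{n-1}}\hat U\ket{\psi}\neq 0$, dividing by it gives $e^{-i\phi}d^s = 1$, i.e.\ $d_0^{s_0}\cdots d_{n-1}^{s_{n-1}} = e^{i\phi}$; unitarity of $D$ ensures each $d_j$ is a phase, so $d^s\neq 0$ and the cancellation is legitimate. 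I do not expect a genuine obstacle here: the only subtle points are that $\bra{s_0,\dots,s_{n-1}}$ denotes a projector onto a possibly large subspace rather than a single bra vector, and that $\ket{\psi}$ may carry internal degrees of freedom on which $\hat D$ acts trivially — both are dispatched by the second step above.
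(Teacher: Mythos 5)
Your proposal is correct and follows essentially the same route as the paper: insert $\hat P\hat P^\dagger=I$, use $\hat U\hat P=\hat D\hat U$ together with the eigenvalue relation for $\ket{\psi}$, and read off that $\hat D$ acts as the scalar $d^s$ on the PNRD subspace, forcing either $d^s=e^{i\phi}$ or a vanishing amplitude. Your second step merely makes explicit a point the paper leaves implicit (that $\hat D$ is scalar on each fixed-occupation subspace regardless of the internal tensor factor), which is a fine addition but not a different argument.
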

We now apply this result to the calculation of certain transition probabilities. 
Let 
\begin{equation}\label{def:S}
    \mc{S} = \{\ket{s_0, \dots, s_{n-1}}: \sum s_j = n, d^s = 1\}
\end{equation}
be the set of output patterns arising with nonzero probability from $P$-symmetric input. As before, we have that $d^s := d_0^{s_0}\cdots d_{n-1}^{s_{n-1}}$. We decompose the state space into eigenspaces for $P$ with eigenvalues $e^{i\phi}$. 
For any $\ket{\psi'}$ with eigenvalue $e^{i\phi}\neq 1$ and $\ket{s}\in \mc{S}$, the lemma implies $\bra{s}\hat{U}\ket{\psi'} = 0$. 
Then, for arbitrary $\ket{\eta}$, decompose 
\begin{equation}
    \ket{\eta} = \braket{\eta_+}{\eta}\ket{\eta_+} + \ket{\eta_\perp},
\end{equation}
where $\ket{\eta_+}$ is the unit vector projection of $\ket{\eta}$ into the $+1$-eigenspace and $\ket{\eta_\perp}$ collects the terms from other eigenspaces. We have, for $\ket{s}\in\mc{S}$,
\begin{equation}\label{eq:reduce to symm}
    \bra{s}\hat{U}\ket{\eta} =
    \bra{s}\hat{U}\left(\braket{\eta_+}{\eta}\ket{\eta_+} + \ket{\eta_\perp}\right)
    =\braket{\eta_+}{\eta}\bra{s}\hat{U}\ket{\eta_+}.
\end{equation}
More generally, consider an abelian group $G$ of $n\times n$ permutation matrices, where each $P\in G$ has corresponding diagonal matrix $D_P$ with $UP = D_P U$. 
(In other words, writing $UPU^\dagger = D_P$, we see that $G$ is a group of simultaneously diagonalizable permutation matrices, and the change of basis is given by $U$.) 
Let $d_{P,i}$ index the diagonal entries of $D_P$, with $d_{P}^s$ extending the $d^s$ notation above. 
We consider the set of output patterns that occur with nonzero probability for \emph{all} $P\in G$:
\begin{equation}
    \mc{S}_G = \{\ket{s}: \forall P\in G, d_P^s = 1\}.
\end{equation}
We may generalize the above argument to obtain the following, Part \ref{step:overlap} of Theorem~\ref{thm:symm body}: 
\begin{lemma}\label{lem:symmetrization}
Let all notation be as above. Let $\ket{\eta}$ be an arbitrary pure state, and let $\ket{\eta_+}$ be its projection into the joint $+1$-eigenspace of all $P\in G$, normalized to be a unit vector (if nonzero). 
For $\ket{s}\in \mc{S}_G$, we have
\begin{equation}
    \bra{s}\hat{U}\ket{\eta} = \braket{\eta_+}{\eta}\bra{s}\hat{U}\ket{\eta_+}.
\end{equation}
Further, if $\ket{s}\not\in\mc{S}_G$ and $\braket{\eta_+}{\eta}\neq 0$, then $\bra{s}\hat{U}\ket{\eta}=0$. 
\end{lemma}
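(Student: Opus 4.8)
The plan is to simultaneously diagonalize the commuting unitaries $\{\hat{P}:P\in G\}$ and decompose the state space into their joint eigenspaces, indexed by the characters of $G$. Concretely, write $\mc{H}^{\otimes n}=\bigoplus_{\chi\in\widehat{G}}V_\chi$, where $V_\chi$ is the subspace on which each $\hat{P}$ acts as the scalar $\chi(P)$ (equivalently, the image of the projector $\Pi_\chi=\tfrac{1}{|G|}\sum_{P\in G}\overline{\chi(P)}\,\hat{P}$); this decomposition is legitimate because $\hat{P}=P^{\otimes n}\otimes I^{\otimes n}$, so the $\hat{P}$ commute ($G$ being abelian) and are unitary. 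The trivial character $\chi_0$ selects exactly the joint $+1$-eigenspace, i.e.\ the $G$-symmetric subspace, and by definition the orthogonal projection of $\ket{\eta}$ onto it is $\braket{\eta_+}{\eta}\,\ket{\eta_+}$. A preliminary observation I would record: for any pattern $s=(s_0,\dots,s_{n-1})$ the map $\chi_s\colon P\mapsto d_P^s=\prod_i d_{P,i}^{s_i}$ is a character of $G$, since $U(PP')=(UP)P'=D_PD_{P'}U$ forces $D_{PP'}=D_PD_{P'}$, both diagonal, hence $d_{PP'}^s=d_P^s d_{P'}^s$; and $\ket{s}\in\mc{S}_G$ iff $\chi_s=\chi_0$.

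Next I would rerun the single-symmetry computation behind Lemma~\ref{lem:suppression}, but tracked across the decomposition. For $\ket{v}\in V_\chi$ and $P\in G$, the operator identity $\hat{U}\hat{P}=\hat{D}_P\hat{U}$ (recorded in the derivation preceding Lemma~\ref{lem:suppression}, coming from $UP=D_PU$) gives $\bra{s}\hat{U}\ket{v}=\overline{\chi(P)}\,\bra{s}\hat{U}\hat{P}\ket{v}=\overline{\chi(P)}\,d_P^s\,\bra{s}\hat{U}\ket{v}$, so either $\chi(P)=d_P^s$ or $\bra{s}\hat{U}\ket{v}=0$; letting $P$ range over $G$, the functional $\bra{s}\hat{U}$ annihilates $V_\chi$ unless $\chi=\chi_s$. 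Now if $\ket{s}\in\mc{S}_G$ then $\chi_s=\chi_0$, so only the $V_{\chi_0}$-component of $\ket{\eta}$ survives: $\bra{s}\hat{U}\ket{\eta}=\bra{s}\hat{U}\,\Pi_{\chi_0}\ket{\eta}=\braket{\eta_+}{\eta}\,\bra{s}\hat{U}\ket{\eta_+}$, which is the first claim (when $\braket{\eta_+}{\eta}=0$ the identity reads $0=0$, as the $G$-symmetric component of $\ket{\eta}$ then vanishes). If instead $\ket{s}\notin\mc{S}_G$ then $\chi_s\neq\chi_0$, so $\bra{s}\hat{U}$ kills $V_{\chi_0}\ni\ket{\eta_+}$; equivalently, since $\ket{\eta_+}$ is a $+1$-eigenvector of every $\hat{P}$ and $d_P^s\neq1$ for some $P\in G$, Lemma~\ref{lem:suppression} applied to $\ket{\eta_+}$ gives directly $\bra{s}\hat{U}\ket{\eta_+}=0$ — the second claim, the hypothesis $\braket{\eta_+}{\eta}\neq0$ being present only so that $\ket{\eta_+}$ is defined.

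I expect the only real work to be bookkeeping rather than a new idea: pinning down the convention so that $\hat{U}\hat{P}=\hat{D}_P\hat{U}$ holds as an operator identity (it is essentially stated in the text above), checking that the $\hat{P}$ genuinely commute and that $\chi_s$ is a bona fide $U(1)$-valued homomorphism, observing that the eigenspace decomposition lives purely on the external tensor factors and tensors trivially against $\h{int}^{\otimes n}$ so nothing changes in the presence of internal modes, and disposing of the degenerate case $\ket{\eta_+}=0$. Beyond that, everything reduces to simultaneous diagonalization plus the one-permutation argument already carried out for Lemma~\ref{lem:suppression}.
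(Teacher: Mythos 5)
Your argument is essentially identical to the paper's: decompose $\hn$ into joint eigenspaces of the abelian group $G$, apply the single-permutation suppression lemma (Lemma~\ref{lem:suppression}) on each eigenspace, and conclude that only the trivial ($G$-symmetric) component survives when $\ket{s}\in\mc{S}_G$; your extra observation that $P\mapsto d_P^s$ is a character is a harmless refinement. Your reading of the second claim as $\bra{s}\hat{U}\ket{\eta_+}=0$ also matches what the paper actually establishes, since it too invokes Lemma~\ref{lem:suppression}, which applies to the eigenvector $\ket{\eta_+}$ rather than to $\ket{\eta}$ itself.
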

\begin{proof}
Since $G$ is abelian, we may decompose $\hn$ into a direct sum of joint eigenspaces for $G$. We index these eigenspaces $V_\alpha$ by functions $\alpha: G\rightarrow\mathbb{C}^*$, such that for $P\in G$ and $\ket{\eta_\alpha}\in V_\alpha$, we have 
\begin{equation}
    P\ket{\eta_\alpha} = \alpha(P) \ket{\eta_\alpha}.
\end{equation}
Note that if $\alpha\neq \beta$, then $V_\alpha$ is orthogonal to $V_\beta$. 
In particular, given $\ket{\eta}$ as above, we may decompose
\begin{equation}
    \ket{\eta} = \sum_\alpha c_\alpha \ket{\eta_\alpha},
\end{equation}
where the set of all nonzero $\ket{\eta_\alpha}$ is an orthonormal set and $c_\alpha = \braket{\eta_\alpha}{\eta}$. 
The term $\ket{\eta_+}$ above corresponds to $\alpha$ trivial, $\alpha(P) = 1$ for all $P\in G$. 
Now consider $\ket{s}\in\mc{S}_G$, so that all $d_P^s = 1$. 
If $\alpha$ is nontrivial, find $P$ such that $\alpha(P)\neq 1$. 
We have $d_P^s = 1\neq \alpha(P)$, so by Lemma~\ref{lem:suppression}, $\bra{s}\hat{U}\ket{\eta_\alpha} = 0$. 
This gives the first claim; the second is a direct consequence of Lemma~\ref{lem:suppression}. 
\end{proof}
In other words, to calculate the output probabilities of patterns in $\mc{S}_G$, 
it suffices to consider the projection of $\ket{\eta}$ into the space of $G$-symmetric vectors. 
Since $\ket{1, \dots, 1}$ is fully symmetric, we observe the following: 
\begin{lemma}\label{lem:ideal}
Let all notation be as above. Let $\ket{s}$ satisfy $\bra{s}\hat{U}\ket{1, \dots, 1}\neq 0$. 
Then $\ket{s}\in \mc{S}_G$. 
\end{lemma}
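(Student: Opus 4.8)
The plan is to recognize that $\ket{1,\dots,1}$ — the Fock state with one photon in each of the $n$ modes — is not merely $G$-symmetric but is fixed \emph{exactly} (not just up to phase) by every mode permutation, and then to feed it into Lemma~\ref{lem:suppression}; equivalently, this lemma is the contrapositive of the ``further'' clause of Lemma~\ref{lem:symmetrization} specialized to $\ket{\eta} = \ket{1,\dots,1}$.

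First I would verify the invariance claim. In second quantization a permutation matrix $P$ corresponding to $\sigma \in S_n$ acts by $\cre_j \mapsto \cre_{\sigma(j)}$, so that $\hat{P}\bigl(\cre_0\cre_1\cdots\cre_{n-1}\vac\bigr) = \cre_{\sigma(0)}\cdots\cre_{\sigma(n-1)}\vac = \cre_0\cdots\cre_{n-1}\vac$, since creation operators in distinct modes commute. Hence $\hat{P}\ket{1,\dots,1} = \ket{1,\dots,1}$ for every $P \in G$, so $\ket{1,\dots,1}$ lies in the joint $+1$-eigenspace of all of $G$ (in the notation of Lemma~\ref{lem:symmetrization}, it is its own normalized projection $\ket{\eta_+}$, with $\braket{\eta_+}{\eta} = 1$).

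Then, for each $P \in G$ separately, I would apply Lemma~\ref{lem:suppression} with $\ket{\psi} = \ket{1,\dots,1}$ and $\phi = 0$: the hypothesis $\bra{s}\hat{U}\ket{1,\dots,1} \neq 0$ forces $d_P^{s} = d_{P,0}^{s_0}\cdots d_{P,n-1}^{s_{n-1}} = 1$. Since this holds for every element of $G$, the definition of $\mc{S}_G$ gives $\ket{s} \in \mc{S}_G$. There is no genuine obstacle here: the only points needing care are confirming that $\ket{1,\dots,1}$ is fixed with trivial phase (immediate from commutativity of the creation operators) and observing that verifying $d_P^s = 1$ for each group element individually is precisely what membership in the intersection $\mc{S}_G$ requires. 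The content of the lemma is simply that, for fully indistinguishable input, ideal patterns are automatically symmetry-preserving — which is why the ideal patterns in Protocol~\ref{proto:distillation} always form a subset of $\mc{S}_G$.
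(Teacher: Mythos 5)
Your proof is correct and follows essentially the same route as the paper, which presents the lemma as an immediate consequence of Lemma~\ref{lem:suppression} (applied to each $P\in G$ with $e^{i\phi}=1$) together with the observation that $\ket{1,\dots,1}$ is fixed by every mode permutation. The only thing you make more explicit than the paper is the verification, via commutativity of the creation operators, that the fixing is exact rather than merely up to phase, which is a harmless and correct elaboration.
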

In particular, the \emph{ideal patterns} discussed in Section~\ref{sec:distillation} satisfy $\ket{s} = \ket{s_0, \dots, s_{n-1}}$, with $s_0 = 1$ and $\bra{s}\hat{U}\ket{1, \dots, 1}\neq 0$. 
Then the ideal patterns are a subset of $\mc{S}_G$; to calculate $\bra{s}\hat{U}\ket{\eta}$ for an ideal pattern $\ket{s}$, it suffices to consider $\bra{s}\hat{U}\ket{\eta_+}$. 

For later use, we consider a special case. Assume that for all $P\in G$, we have $\bra{\eta}P\ket{\eta}\in\{0,1\}$, so that each permutation either fixes $\ket{\eta}$ or takes it to an orthogonal state. 
Let $K$ be the \emph{stabilizer} of $\ket{\eta}$, the subgroup of $G$ with $P\ket{\eta} = \ket{\eta}$ for $P\in K$. 
Letting $G/K$ be the set of left cosets of $K$ in $G$, 
we have
\begin{equation}
    \ket{\eta_+} = \dfrac{1}{\sqrt{|G/K|}}\sum_{g\in G/K} g \ket{\eta}.
\end{equation}
Note that 
\begin{equation}
    \braket{\eta_+}{\eta} = \dfrac{1}{\sqrt{|G/K|}} = \sqrt{|K|/|G|}\geq \dfrac{1}{\sqrt{|G|}}.
\end{equation}

\subsection{Fourier case}\label{sec:fourier symm}
As an important example, we consider the Fourier case, $U=F_n$. We use the notation $\omega = e^{2\pi i/n}$. 
Take the permutation matrix $P$ to correspond to the cyclic permutation sending column $j$ to column $j+1$ (modulo $n$). 
We get $UP = DU$, where $d_i = \omega^{i}$. 
The group $G$ of symmetries is the cyclic group generated by $P$. 
Then the set $\mc{S} = \mc{S}_G$ consists of $\ket{s_0, \dots, s_{n-1}}$ with
\begin{equation}
    1 = \prod_i \omega^{-i s_i} = \omega^{-\sum_i i s_i}, 
\end{equation}
or equivalently $\sum_i i s_i \equiv 0 \mod n$. This is the Zero Transmission Law (ZTL) for the Fourier transform \cite{tichy2010zero}. 
We call the patterns in $\mc{S}$ \emph{ZTL patterns}. 
This is typically rephrased as in Section~\ref{sec:distillation} above, by letting $g_0, \dots, g_{n-1}$ be the weakly decreasing sequence for which $s_i$ counts the number of $j$ with $g_j = i$.  
The ZTL then translates to $\sum_i g_i\equiv 0\mod n$. 

As an example, consider $n=2$, so that $U=F_2 = H_2 = H$. Only the patterns $s = (2,0), (0,2)$ satisfy the ZTL: this is precisely the HOM effect. 
(Although note that neither of these are ideal patterns, since neither has $s_0=1$.) 
Further, Lemma~\ref{lem:suppression} explains why $\hat{H}$ must map the symmetric states $\ket{1,1}$ and $\frac{1}{\sqrt{2}}(\ket{2,0} + \ket{0,2})$ to $\textnormal{span}\{\ket{2,0},\ket{0,2}\}$, 
while mapping the antisymmetric state $\frac{1}{\sqrt{2}}(\ket{2,0} - \ket{0,2})$ to $\ket{1,1}$ (the only non-ZTL pattern of $2$ photons in $2$ modes). 
Thus the ZTL may be viewed as a direct extension of the HOM effect. 

By Lemma~\ref{lem:ideal}, the ideal patterns must be a subset of $\mc{S}$. 
However, as discussed in Section~\ref{sec:distillation} above, there exist ZTL patterns $\ket{s}$ with $s_0 = 1$ that are \emph{not} ideal patterns. 
(This is discussed in the original paper \cite{tichy2010zero} without the $s_0=1$ restriction.) 
In other words, these patterns satisfy the required symmetry conditions but still cannot be obtained from the input state $\ket{1,\dots, 1}$. 
The first example arises for $n=6$, where there are $6$ ZTL patterns $\ket{s}$ with $s_0 = 1$ that are not ideal patterns. These are: 
\begin{equation}
    (1, 0, 1, 1, 2, 1),
 (1, 0, 2, 0, 1, 2),
 (1, 1, 0, 1, 1, 2),
 (1, 1, 2, 1, 1, 0),
 (1, 2, 1, 0, 2, 0),
 (1, 2, 1, 1, 0, 1).
\end{equation}
In the following section, we characterize exactly when this occurs. 

\subsubsection{Fourier case for prime powers}\label{sec:fourier prime power}
In this section, we prove the following theorem: 
\begin{theorem}\label{thm:fourier prime power}
Let $n\geq 2$ be a positive integer. 
Let $g_0, \dots, g_{n-1}$ be a list of integers in $0, \dots, n-1$, and let $\ket{s} = \ket{s_0, \dots, s_{n-1}}$ be the corresponding Fock state, with $s_i$ counting the number of $j$ with $g_j = i$. 
Let $A$ be the $n\times n$ matrix whose $i$th row is the $g_i$th row of $\sqrt{n}F_n$. 
\begin{enumerate}
    \item Further assume $n=p^r$, where $p$ is a prime. Then $\perm(A)\neq 0$ if and only if $\sum_i g_i \equiv 0 \mod n$. \label{part:fourier prime appendix part 1}
    \item Instead assume that $n$ is \emph{not} a prime power. Then there exists a choice of $g_0, \dots, g_{n-1}$ satisfying $\sum_i g_i \equiv 0 \mod n$ and $s_0 = 1$ such that $\perm(A) = 0$.  \label{part:fourier prime appendix part 2}
\end{enumerate}
\end{theorem}
By the permanent formulation of boson sampling \cite{aaronson_arkhipov}, we see that 
for $n$ a prime power and $\ket{s}$ satisfying the Zero Transmission Law, we have $\bra{s}U\ket{1, \dots, 1} \neq 0$. 
We immediately obtain
\begin{corollary}
If $n = p^r$, where $p$ is a prime, then for the distillation protocol corresponding to $U=F_n$, all ZTL patterns with $s_0=1$ are ideal patterns. 
If $n$ is not a prime power, then there exist ZTL patterns with $s_0=1$ that are \emph{not} ideal patterns. 
\end{corollary}
This gives Theorem~\ref{thm:suppression body}. 
Before the proof, we briefly introduce some additional notation. 
Let $x_0, \dots, x_{n-1}$ be indeterminates; for an $n$-tuple $s = (s_0, \dots, s_{n-1})$, let $x^s = x_0^{s_0} x_1^{s_1}\cdots x_{n-1}^{s_{n-1}}$. 
We will use the \emph{generic circulant matrix} $C$, the $n\times n$ matrix with $(i,j)$ entry equal to $x_{i+j}$ (where the indices are taken modulo $n$). 
We now prove Theorem~\ref{thm:fourier prime power}, a direct application of results of \cite{thomas2004number, colarte2019coefficients}. 
We note that these works make no explicit reference to boson sampling or the matrix $A$ defined above: instead, their results are related to our setting by \eqref{eq:circulant det} below. 
\begin{proof}
Let all notation be as in the theorem.  
The determinant of the circulant matrix, $\det(C)$, is a polynomial in $x_0, \dots, x_{n-1}$, and the coefficient of $x^s$ in $\det(C)$ is \cite{thomas2004number}
\begin{equation}\label{eq:circulant det}
    [x^s] \det(C) = \sum_{\sigma\in S_n} \omega^{\sum_{j=0}^{n-1}j g_{\sigma(j)}} = \perm(A). 
\end{equation}
We note that only the first equality is observed in \cite{thomas2004number}, in the proof of the main theorem; the second equality of \eqref{eq:circulant det} is immediate from the definition of the permanent. 
Now, \cite{thomas2004number} shows that for $n$ a prime power, $\sum_i g_i \equiv 0\mod n$ if and only if \eqref{eq:circulant det} is nonzero. This proves Part~\ref{part:fourier prime appendix part 1} of the theorem. 
For Part~\ref{part:fourier prime appendix part 2}, suppose that $n$ is not a prime power. 
Then Theorem 3.5 of \cite{colarte2019coefficients} explicitly constructs a list of nonnegative integers $g_0, \dots, g_{n-1}$ satisfying:   
\begin{enumerate}
    \item $\sum_i g_i \equiv 0\mod n$,
    \item the coefficient of $x^s$ in $\det(C)$ is $0$. 
\end{enumerate}
The first condition is the ZTL; the second shows that the corresponding permanent vanishes, by \eqref{eq:circulant det}. 
Then we must only show that there is some $g_i$ with multiplicity $1$, i.e., the corresponding $s_j$ is equal to $1$. 
Once this is proven, then by the cyclic symmetry of the Fourier transform we obtain an example with $s_0=1$. 
(More explicitly, we may note that if 
we perform the transformation $g_i\mapsto g_i + 1$ for all $i$, the condition $\sum_i g_i \equiv 0\mod n$ is unchanged, and \eqref{eq:circulant det} will differ only by a factor of $\omega^{\sum_{j=0}^{n-1} j} = \pm 1$. This operation on the $g_i$ corresponds to cyclically permuting the $s_j$; thus, we may do this repeatedly, without changing $|\perm(A)|$, until we obtain an example with $s_0 = 1$.)  
It remains to show the relevant properties of the $g_i$. 
For this, we must give more details of the construction of \cite{colarte2019coefficients}. 
We leave the proof that the state we construct has vanishing permanent to \cite{colarte2019coefficients}, which uses Corollary 6 of \cite{malenfant2015matrix}. 
Suppose that $n=q_1 q_2$, where $q_1,q_2\geq 2$ are relatively prime. By Bezout's theorem (exchanging the roles of $q_1,q_2$ if necessary), we find positive integers $c_1,c_2$ with  
$c_1 q_1 = 1+c_2 q_2$, where 
$1\leq c_1 < q_2$ and $1\leq c_2 < q_1$. 
We then define 
\begin{align*}
    M_1 &= c_2 q_2 - 1
    \\ M_0 &= n - c_2 q_2 - 2
    \\ A_{1} &= c_2 q_2 - c_2 c_1 + 1
    \\ A_{2} &= n - c_2 q_2
    \\ A_{3} &= n - c_2 q_2 + c_1 c_2.
\end{align*}
We may easily verify that $A_1>1$ and 
\begin{align*}
    A_3 - A_2 &= c_1 c_2 > 0,
    \\ A_2 - A_1 &= (q_2 - c_1)(q_1 - c_2) > 0,
\end{align*}
so we have $1 < A_1 < A_2 < A_3$. 
The state is then constructed by: 
\begin{align*}
    &g_0 = g_1 = \dots = g_{M_0-1} = 0,
    \\&g_{M_0} = g_{M_0+1} = \dots = g_{M_0+M_1-1} = 1,
    \\&g_{n-3} = A_1, \,\,g_{n-2} = A_2, \,\,g_{n-1} = A_3,
\end{align*}
where we note $M_0 + M_1 = n-3$. 
In particular, there are $M_0$ photons in mode $0$, $M_1$ photons in mode $1$, and additional photons in the three distinct modes  $A_1,A_2,A_3>1$. 
Then we have all $s_{A_j} = 1$, $j=1,2,3$. 
This completes the proof. 
\end{proof}

We now give the $n=6$ example of the construction of \cite{colarte2019coefficients}. 
We write $q_1=3$, $q_2 = 2$, so that $6 = q_1 q_2$. Note that with $c_1 = c_2 = 1$, we have $c_1 q_1 = 1+c_2 q_2 = 3$. 
We obtain $M_0=2, M_1=1, A_1 = 2, A_2 = 4, A_3 = 5$. 
This gives 
\begin{equation}
    g_0 = g_1 = 0, g_2 = 1, g_3 = 2, g_4 = 4, g_5 = 5,
\end{equation}
with corresponding Fock state $\ket{s} = \ket{2,1,1,0,1,1}$. 
We cyclically permute this state to see that $\ket{1,1,0,1,1,2}$ is another such example, now with $1$ photon in the $0$th mode. 
We note that this is one of the examples given above. 
As another example, for $n=18$ we obtain $\ket{s} = \ket{1,0,0,0,7,8,0,0,0,1,0,0,0,1,0,0,0,0}$. 

\begin{remark}
We note that the proof of Theorem 3.5 of \cite{colarte2019coefficients} has a typographical error in its first paragraph. 
In particular, the $a_i$ there (equivalent to our $g_i$) are said to satisfy $\sum_i i a_i \equiv 0\mod n$ and $\sum_i a_i = n$, which are in fact the conditions on our $s_i$. 
These conditions are not used in the proof, and from context it is clear that this was a simple typographical error: elsewhere in ibid., the notation $a_i$ is used for the $g_i$ and they are said to satisfy the usual ZTL $\sum_i a_i\equiv 0\mod n$; and the notation $\alpha_i$ or $M_i$ is used for our $s_i$. 
(See the discussion between equation (3) and Proposition 3.2 of ibid.) 
The proof of the theorem reduces to checking the conditions given in Corollary 6 of \cite{malenfant2015matrix}, which is in fact stated in terms of our $g_i$. 
\end{remark}

To obtain \eqref{eq:circulant det}, one simply notes that the eigenvalues of the generic circulant matrix $C$ have the form $\sum_j \omega^{ij}x_j$, so that 
\begin{equation}
    \det(C) = \prod_i \sum_j \omega^{ij}x_j.
\end{equation}
For terms involving $x^s$, one needs to sum over all possible products of the form 
\begin{equation}
    \omega^{0j_0}x_{j_0}\omega^{1j_1}x_{j_1}\cdots\omega^{(n-1)j_{n-1}}x_{j_{n-1}}
\end{equation}
in which there are $s_0$ copies of $x_0$, $s_1$ copies of $x_1$, and so on. 
The possible values of the sequence $j_k$ are precisely the permutations of $g_0, \dots, g_{n-1}$, giving \eqref{eq:circulant det}. 

We now note further consequences of \eqref{eq:circulant det}. 
Since $\det(C)$ must be a polynomial in the $x_i$ with integer coefficients, we immediately see that $\perm(A)$ is an integer. Then the weights $\bra{s}F_n\ket{1, \dots, 1}$ are all integers up to a known normalization factor of $n^{n}(s_0!)\cdots (s_{n-1}!)$. (Recall that we defined $A$ in terms of $\sqrt{n}F_n$, explaining the factor of $n^n$. The factorials come from the permanent formulation of boson sampling \cite{aaronson_arkhipov}.) 
Further, we note that for a fixed $n$, \eqref{eq:circulant det} allows one to compute \emph{all} $\bra{s}U\ket{1, \dots, 1}$, generally expressed in terms of permanents of $n\times n$ matrices, through the calculation of the determinant of a single $n\times n$ matrix (with symbolic entries). 
Eq. \eqref{eq:circulant det} also allows for the connection of boson sampling with the literature on the coefficients of $\det(C)$, which offers various combinatorial formulas such as those in Ref. \cite{malenfant2015matrix}, which were instrumental in proving Theorem~\ref{thm:fourier prime power}. 
We note that the relevant permanents $\perm(A)$ were previously related with $\det(C)$ in \cite{kocharovsky2015exact}, in the context of the three-dimensional Ising model. 

\subsection{Hadamard case}\label{sec:hadamard symm}
We now consider the Hadamard case. We let $n=2^r$ and 
identify $\mc{H}_{\textnormal{ext}} = \mathbb{C}^n$ with $(\mathbb{C}^2)^{\otimes n}$ by representing kets by binary strings. 
We take $U = H_n = H^{\otimes r}$, where $H$ is the $2\times 2$ Hadamard matrix. 
Let $X$ and $Z$ be the $2\times 2$ Pauli matrices. 
Of course, $H$ satisfies $HX = ZH$. 
This is, in fact, a special case of the framework above, since $X$ is a permutation matrix and $Z$ is diagonal. 
Following \eqref{def:S}, we find that in the case $n=2$, $\mc{S}$ consists of patterns $\ket{s_0, s_1}$ with $s_0 + s_1 = 2$ and $(-1)^{s_1} = 1$. In other words, $s_0, s_1\in\{0,2\}$, recovering the classic Hong-Ou-Mandel effect. 

We may extend this analysis to obtain symmetries of $H_n$ by taking tensor products. 
We consider 
the following symmetry group $G$ of $U$: 
\begin{equation}
    G = \{X^{i_1}\otimes X^{i_2}\otimes \cdots\otimes X^{i_r}: i_j\in\{0,1\}\}.
\end{equation}
These satisfy
\begin{equation}
    H_n (X^{i_1}\otimes X^{i_2}\otimes \cdots\otimes X^{i_r}) = (Z^{i_1}\otimes\cdots\otimes Z^{i_r}) H_n.
\end{equation}
We will compute $S_G$, defined as above. 
Letting $X_i$ and $Z_i$ be the actions of $X$ and $Z$ on the $i$th tensor factor, we note that 
$G$ is generated by $X_1, \dots, X_r$ and 
$H_n X_i = Z_i H_n$. 
To proceed, we will write down the map $\mathbb{H}_{\textnormal{ext}} = \mathbb{C}^n\rightarrow (\mathbb{C}^2)^{\otimes n}$ explicitly: 
for $k = \sum_{j=0}^{n-1} c_j 2^j$, we map $\ket{k}\mapsto \ket{c_{0}, \dots, c_{n-1}}$. 
Then $Z_i\ket{k} = (-1)^{c_i}\ket{k}$. 
We now consider what the $X_i$ symmetry tells us about 
$\ket{s} = \ket{s_0, \dots, s_n}\in\mc{H}^{\otimes n}_{\textnormal{ext}}$. 
As discussed in Section~\ref{sec:distillation}, we convert to "first quantization," letting $g_0, \dots, g_{n-1}$ be the weakly increasing sequence with $s_0$ copies of $0$, $s_1$ copies of $1$, etc. 
The $g_j$ can be viewed as the list of modes occupied by the $n$ photons. 
Let $(g_j)_i$ be the coefficient of $2^i$ in the binary expansion of $s_j$. 
By Lemma~\ref{lem:suppression}, we see that $\ket{s}\in S_G$ only if $\sum_j (g_j)_i \equiv 0 \mod 2$. 
(In other words, $(g_0)_i\oplus \cdots \oplus (g_{n-1})_i = 0$, where $\oplus$ represents the XOR of two bits.) 
Since we may consider symmetries arising from all $X_i$, this must hold for all $i$. 
Then we recover the characterization of $\mc{S}_G$ discussed in \cite{crespi2015suppression} and Section~\ref{sec:distillation} above: extending $\oplus$ to represent the bitwise XOR operation, $\ket{s}\in S_G$ if and only if $g_0\oplus\cdots\oplus g_{n-1} = 0$. 

Unlike the case of the Fourier transform, in the Hadamard case the symmetries given fully characterize the ideal patterns. In other words, the ideal patterns for $U=H_n$ are the elements of $\mc{S}_G$ with $s_0 = 1$. (This is proven in \cite{crespi2015suppression} without the additional condition $s_0=1$.)

\section{Detailed calculations}\label{sec:calculations}
Here we discuss the calculations for heralding probabilities and error rates in detail. 
Let $U$ be the $n$-mode Fourier or Hadamard matrix, with corresponding symmetry group $G$. 
Consider the URS model with parameter $R$, or the limiting OBB model. 
Recall the decomposition \eqref{eq:error decomp} of the $n$-photon state with error rate $\epsilon$: 
\begin{align}
    \rho^{(n)}(\epsilon) = \sum_{k=0}^n \binom{n}{k}\epsilon^k(1-\epsilon)^{n-k} \Phi_k, 
\end{align}
where $\Phi_k$ is the (normalized) uniform mixture of all states
\begin{equation}\label{eq:terms}
    \cre_0[\xi_{j_0}] \cdots \cre_{n-1}[\xi_{j_{n-1}}]\vac = \ket{1, \dots, 1}_{2Q}\otimes\ket{\xi_{j_0},\dots,\xi_{j_{n-1}}}
\end{equation}
(recalling \eqref{eq:internal fock experiment}) with $k$ distinguishability errors, meaning that $k$ of the indices $j_0, \dots, j_{n-1}$ are nonzero. 
For example, in the $R=1$ (SBB) case, $\Phi_k$ has $\binom{n}{k}$ terms, corresponding to a choice of $k$ distinguishable photons. 
For general $R$, $\Phi_k$ has $\binom{n}{k}k^R$ terms; first we choose the $k$ distinguishable photons, then each one has $R$ choices of internal state. 
In the OBB limit, since each photon has a unique error state, $\Phi_k$ again has $\binom{n}{k}$ terms to consider, corresponding to a choice of $k$ distinguishable photons. 

In the following subsections, we will consider detailed computations with the heralding and error rates, of both theoretical and computational importance. 
For practical purposes, we will see that in order to calculate these rates as an \emph{exact} function of $\epsilon$, one must only compute, for all $\ket{\eta}$ of the form \eqref{eq:terms}, $h_n(\ketbra{\eta})$ and $e_n(\ketbra{\eta})$. 

\subsection{Heralding rate}\label{sec:heralding appendix}

First we discuss the heralding rate. 
As discussed in Section~\ref{sec:heralding}, we can calculate the heralding rate via
\begin{equation}
    h_n(\epsilon) = \sum_{k=0}^n \binom{n}{k}\epsilon^k (1-\epsilon)^{n-k} h_n(\Phi_k).
\end{equation}
Since $\Phi_k$ is a uniform mixture, $h_n(\Phi_k)$ is simply the average heralding rate for states of the form \eqref{eq:terms} with $k$ distinguishability errors. 
We discuss the efficient calculation of the $h_n(\Phi_k)$ via simulation in Appendix~\ref{sec:simulation appendix}. 

As observed in Lemma~\ref{lem:ideal}, the ideal patterns are a subset of those in $\mc{S}_G$ with $s_0 = 1$. 
By Lemma~\ref{lem:symmetrization}, then, we have
\begin{equation}\label{eq:h symm}
    h_n(\ketbra{\eta}) 
    = \sum_{\ket{s}\text{ ideal}} \left| \bra{s}\hat{U}\ket{\eta}\right|^2 
    = |\braket{\eta_+}{\eta}|^2 \sum_{\ket{s}\text{ ideal}} \left| \bra{s}\hat{U}\ket{\eta_+}\right|^2 = h_n(\ketbra{\eta_+}).
\end{equation}
This gives the heralding rate statement in Part \ref{step:symm} of Theorem~\ref{thm:symm body}. 
To prove the corresponding statement in Part~\ref{step:symm2}, we consider the sum over ideal patterns. 
In the Hadamard case, the ideal patterns are exactly characterized by the conditions $\ket{s}\in\mc{S}_G$, $s_0 = 1$. 
In the Fourier case, this no longer holds in general, although in many cases, the patterns satisfying $\bra{s}\hat{U}\ket{\eta}\neq 0$ and $s_0 = 1$ are guaranteed to be ideal. 
This includes when $\ket{\eta}$ has $0$ or $1$ distinguishability errors, as in Appendix~\ref{sec:one} below.  This also holds when $n$ is a prime power, as proven in Appendix~\ref{sec:fourier prime power}. 
(Also see the discussion of Remark~\ref{remark:fourier prime assumption}.) 
Thus, going forward, we will assume that for all patterns $\ket{s}\in\mc{S}_G$ with $s_0 = 1$ and $\bra{s}\hat{U}\ket{\eta} \neq 0$, we have $\ket{s}$ ideal. 
Then from this assumption and Lemma~\ref{lem:suppression}, we may include extra vanishing terms to extend the sum to all $\ket{s}$ with $s_0 = 1$: 
\begin{align}
    h_n(\ketbra{\eta}) &= |\braket{\eta_+}{\eta}|^2 \sum_{\ket{s}: s_0=1} \left| \bra{s}\hat{U}\ket{\eta_+}\right|^2 
    \\&= |\braket{\eta_+}{\eta}|^2 \left| (\bra{1}\otimes I)\hat{U}\ket{\eta_+}\right|^2
    \\&= |\braket{\eta_+}{\eta}|^2 \bra{\eta_+}\hat{U}^\dagger \left(\ketbra{1}\otimes I\right)\hat{U}\ket{\eta_+}.\label{eq:heralding as operator}
\end{align}
This allows for the computation of heralding probabilities without needing to check for ideal patterns. 
This gives the heralding rate statement in Part~\ref{step:symm2} of Theorem~\ref{thm:symm body}. 

We may further simplify as follows. 
We have $\bra{1} = \bvac a_0$, where $a_0 = \sum_i a_0[\xi_i]$ is the external annihilation operator (which is summed over all $\xi_i$ so that it annihilates photons regardless of distinguishability). 
Then $\ketbra{1}\otimes I = \cre_0 (\ketbra{0}\otimes I) a_0$. 
Since the entries in the first row of $U$ are uniformly equal to $1/\sqrt{n}$, we have
\begin{equation}
    \hat{U}^\dagger \cre_0 = \frac{1}{\sqrt{n}}\sum_i \cre_i \hat{U}^\dagger
\end{equation}
and thus
\begin{equation}
    h_n(\ketbra{\eta}) = \frac{1}{n}|\braket{\eta_+}{\eta}|^2 \sum_{i,j} \bra{\eta_+}\cre_i \hat{U}^\dagger (\ketbra{0}\otimes I) \hat{U} a_j \ket{\eta_+}. 
\end{equation}
We note that $\hat{U}^\dagger (\ketbra{0}\otimes I) \hat{U}$ is a product of three operators, each of which acts in a ``linear optical" way, transforming each photon independently (and all acting only on external modes). 
For $\hat{U}, \hat{U}^\dagger$ this is by definition; meanwhile, $\ketbra{0}\otimes I$ acts on each creation operator by $\cre_i\mapsto \delta_{i\neq 0}\cre_i$. 
Then the product has the same linear optical form. 
By the unitarity of $U$ and (again) the fact that the entries in its first row are uniformly equal to $1/\sqrt{n}$, we may write
\begin{equation}
    \hat{U}^\dagger (\ketbra{0}\otimes I) \hat{U} = \hat{\Pi},
\end{equation}
recalling the notation $\hat{T}$ for an operator (not necessarily unitary) acting independently on each photon, as introduced in Section~\ref{sec:linear optics}. 
Here, the operator $\Pi$ on $\mc{H}$ is defined by $\Pi = I - \frac{1}{n}\mathbf{1}$, where 
$\mathbf{1}$ satisfies $\mathbf{1}\cre_i = \left(\sum_{j=0}^{n-1} \cre_j\right)\mathbf{1}$. 
In particular, $\frac{1}{n}\mathbf{1}$ is the projection into the mode-symmetric subspace of $\mc{H}$, and $\Pi$ is the complementary projection. 
We have
\begin{equation}\label{eq:heralding simplified fully}
    h_n(\ketbra{\eta}) = \frac{1}{n}|\braket{\eta_+}{\eta}|^2 \sum_{i,j} \bra{\eta_+}\cre_i \hat{\Pi} a_j \ket{\eta_+}. 
\end{equation}

\subsubsection{Ideal heralding}\label{sec:ideal herald}

We now compute $h_n(0)$ by considering the special case of \eqref{eq:heralding simplified fully} with $\ket{\eta} = \ket{\eta_+} = \ket{1, \dots, 1}_{2Q}$, a perfectly indistinguishable Fock state. More generally, these results apply to any pure tensor of $\ket{1, \dots, 1}_{2Q}$ with an internal state. This will be used in Appendix~\ref{sec:one} below. 
We also note that the above assumptions involving the relationship between ideal and symmetry-preserving patterns are not needed here. In fact, we will only assume that $U$ is an $n\times n$ unitary matrix with all entries in its first row equal to $1/\sqrt{n}$. 
By definition, the ideal patterns are those with $s_0 = 1$ and $\bra{s}\hat{U}\ket{\eta} \neq 0$, so we may immediately express the heralding rate in the form \eqref{eq:heralding as operator}, then simplify as above to obtain \eqref{eq:heralding simplified fully}. 

Now, we write 
\begin{equation}\label{eq:herald proof 1}
    a_j \ket{1, \dots, 1} =  \prod_{r\neq j}\cre_r \vac.
\end{equation}
Applying $\hat{\Pi}$, we obtain 
\begin{equation}\label{eq:herald proof 2}
    \hat{\Pi}a_j \ket{1, \dots, 1} = \prod_{r\neq j} \left(\cre_r - \frac{1}{n}\sum_{v}\cre_v\right)\vac.
\end{equation}
Then the heralding rate becomes
\begin{equation}
    h_n(0) = \frac{1}{n}\sum_{i,j} \bvac\prod_{r'\neq i} a_{r'} \prod_{r\neq j}\left(\cre_r - \frac{1}{n}\sum_{v}\cre_v\right)\vac.
\end{equation}
We note that the terms in this sum are invariant up to relabeling of the modes. Then each term may be reduced to one of two cases, $i=j=0$ or $i=1, j=0$. 
The former occurs $n$ times and the latter $n^2 - n = n(n-1)$ times. 
We obtain
\begin{align}
    h_n(0) =&  \bvac\prod_{r'\neq 0} a_{r'} \prod_{r\neq 0}\left(\cre_r - \frac{1}{n}\sum_{v}\cre_v\right)\vac \label{eq:herald i=j}
    \\ &+ (n-1)\bvac\prod_{r'\neq 1} a_{r'} \prod_{r\neq 0}\left(\cre_r - \frac{1}{n}\sum_{v}\cre_v\right)\vac.\label{eq:herald i!=j}
\end{align}
Labeling the terms in \eqref{eq:herald i=j} and \eqref{eq:herald i!=j} as $B_n$ and $L_n$ respectively, we have
\begin{lemma}
\begin{align}
    B_n &= \left(\frac{-1}{n}\right)^{n-1}(n-1)! \,\sum_{t=0}^{n-1}  \frac{(-n)^t}{t!}
    \\ L_n &= \left(\frac{-1}{n}\right)^{n-1}(n-1)! \,\sum_{t=0}^{n-1}  (n-t-1) \frac{(-n)^t}{t!}.
\end{align}
\end{lemma}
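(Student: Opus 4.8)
The plan is to recognize both $B_n$ (the term in \eqref{eq:herald i=j}) and $L_n$ (the term in \eqref{eq:herald i!=j}) as permanents of small matrices and then evaluate those permanents in closed form. The starting observation is that $\cre_r - \frac1n\sum_v\cre_v = \sum_{v=0}^{n-1}M_{vr}\,\cre_v$ with $M_{vr} = \delta_{vr} - \frac1n$, i.e.\ the coefficient matrix is $M = I_n - \frac1n J_n$ ($J_n$ the all-ones matrix). By the standard correspondence between linear-optical amplitudes and permanents (as used for Theorem~\ref{thm:fourier prime power} above, cf.\ \cite{aaronson2011computational}), for disjoint index sets $P,R$ of equal size one has $\bvac\big(\prod_{p\in P}a_p\big)\big(\prod_{r\in R}\sum_v M_{vr}\cre_v\big)\vac = \perm\big((M_{pr})_{p\in P,\,r\in R}\big)$. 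Applying this to \eqref{eq:herald i=j} (where the annihilated modes are $\{1,\dots,n-1\}$ and the factor labels are $\{1,\dots,n-1\}$) gives $B_n = \perm(I_{n-1}-\frac1n J_{n-1})$, the submatrix of $M$ deleting row $0$ and column $0$; applying it to \eqref{eq:herald i!=j} gives $L_n = (n-1)\,\perm(N)$, where the prefactor $(n-1)$ is the multiplicity already recorded in \eqref{eq:herald i!=j} and $N$ is the submatrix of $M$ with rows $\{0,2,3,\dots,n-1\}$ and columns $\{1,2,\dots,n-1\}$.

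Next I would prove and use the elementary identity $\perm(aI_k + bJ_k) = \sum_{j=0}^{k}\frac{k!}{j!}\,a^j b^{k-j}$: expand $\prod_i\big(a\,\delta_{i,\sigma(i)} + b\big)$ over the choice, for each row $i$, of the $a\delta$-summand (which forces $\sigma$ to fix $i$) or the $b$-summand, and count pairs (permutation $\sigma$, distinguished subset of its fixed points of size $j$) as $\binom{k}{j}(k-j)!$. Taking $k=n-1$, $a=1$, $b=-\frac1n$ and pulling $(-1/n)^{n-1}$ out of the sum (using $(-1/n)^{-j}=(-n)^j$) rewrites $B_n = \sum_{j=0}^{n-1}\frac{(n-1)!}{j!}(-1/n)^{n-1-j}$ as $(-1/n)^{n-1}(n-1)!\sum_{j=0}^{n-1}\frac{(-n)^j}{j!}$, which is the claimed formula for $B_n$.

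For $L_n$ I would exploit that $N$ coincides with $I_{n-1}-\frac1n J_{n-1}$ in every row except one: the row of $N$ indexed by mode $0$ has all entries $-\frac1n$, whereas the corresponding row of $I_{n-1}-\frac1n J_{n-1}$ is that row plus $e_1^{\mathrm{T}}$, the first standard basis vector. Multilinearity of the permanent in that single row gives $\perm(N) = \perm(I_{n-1}-\frac1n J_{n-1}) - \perm(N')$, where $N'$ has that row replaced by $e_1^{\mathrm{T}}$; a Laplace expansion of $\perm(N')$ along this row collapses it to $\perm(I_{n-2}-\frac1n J_{n-2})$. Hence $L_n = (n-1)\big[\perm(I_{n-1}-\tfrac1n J_{n-1}) - \perm(I_{n-2}-\tfrac1n J_{n-2})\big]$. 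Substituting the $\perm(aI+bJ)$ formula into both terms (the second evaluating, by the same factoring, to $(-1/n)^{n-2}(n-2)!\sum_{t=0}^{n-2}\frac{(-n)^t}{t!}$) and pulling the common factor $(-1/n)^{n-1}(n-1)!$ out front (using $(n-1)(-1/n)^{n-2}(n-2)! = (-n)(-1/n)^{n-1}(n-1)!$) turns $L_n$ into $(-1/n)^{n-1}(n-1)!\big[(n-1)\sum_{t=0}^{n-1}\frac{(-n)^t}{t!} + n\sum_{t=0}^{n-2}\frac{(-n)^t}{t!}\big]$. Finally, $\sum_{t=0}^{n-1}(n-1-t)\frac{(-n)^t}{t!} = (n-1)\sum_{t=0}^{n-1}\frac{(-n)^t}{t!} - \sum_{t=1}^{n-1}\frac{(-n)^t}{(t-1)!}$, and a shift of the index in the last sum shows that this equals the bracket above; this yields the stated formula for $L_n$.

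I expect the only genuine difficulty to be careful bookkeeping rather than any new idea: correctly tracking which modes index rows and which factor labels index columns of each permanent, especially for $L_n$, where the row set $\{0,2,\dots,n-1\}$ and column set $\{1,2,\dots,n-1\}$ differ in one element each, so that $N$ is not symmetric and the single anomalous row must be isolated via multilinearity. A secondary point requiring attention is the last algebraic step, verifying that the two sums produced telescope to exactly $\sum_t(n-1-t)\frac{(-n)^t}{t!}$; checking $n=3$ and $n=4$ (and that $B_n+L_n$ then recovers $h_3(0)=1/3$, $h_4(0)=1/4$) is a convenient guard against sign or multiplicity slips.
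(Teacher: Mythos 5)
Your proposal is correct; I verified the permanent identifications, the closed form $\perm(aI_k+bJ_k)=\sum_{j=0}^{k}\frac{k!}{j!}a^jb^{k-j}$, the multilinearity/Laplace reduction of $\perm(N)$, and the final index-shift identity, and the result reproduces $B_3=5/9$, $L_3=-2/9$, $h_3(0)=1/3$. The underlying combinatorics is the same as the paper's for $B_n$: the paper never names a permanent, but its expansion of $\prod_{r>0}(\cre_r-\frac1n\sum_v\cre_v)\vac$ and the count $c_t=\binom{n-1}{t}(n-1-t)!(-1/n)^{n-1-t}$ is precisely your fixed-point count proving $\perm(I_{n-1}-\frac1n J_{n-1})$. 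Where you genuinely diverge is $L_n$: the paper handles it by the ``similar'' direct count (the factor $r=1$ may not contribute its diagonal term and mode $0$ must be supplied by an off-diagonal term, giving $\binom{n-2}{t}(n-1-t)!$ in place of $\binom{n-1}{t}(n-1-t)!$, which lands immediately on $\sum_t(n-1-t)\frac{(-n)^t}{t!}$), whereas you isolate the single anomalous row by multilinearity and a Laplace expansion, reducing $L_n$ to $(n-1)[\perm(I_{n-1}-\frac1n J_{n-1})-\perm(I_{n-2}-\frac1n J_{n-2})]$ and then recombining. Your route is a bit longer and needs the extra telescoping identity at the end, but it is more modular (only the symmetric permanent ever has to be evaluated) and would generalize more cleanly to other off-diagonal terms $\langle i\neq j\rangle$. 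One wording slip: the amplitude-to-permanent correspondence you invoke does not require $P$ and $R$ to be \emph{disjoint} (indeed $P=R$ in the $B_n$ application); it only requires them to have equal size, with $P$ a set of distinct modes.
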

Note that the lemma immediately gives
\begin{equation}
    h_n(0) = B_n + L_n = \left(\frac{-1}{n}\right)^{n-1}(n-1)! \,\sum_{t=0}^{n-1}  (n-t) \frac{(-n)^t}{t!},
\end{equation}
which proves Theorem~\ref{thm:herald ideal}. 
In the remainder of this section, we will prove the lemma, which is a straightforward counting argument. We focus on $B_n$, with $L_n$ being similar. 

We view $B_n$ as the inner product of $\cre_1\cdots \cre_{n-1}\vac$ and $\prod_{r>0}\left(\cre_r - \frac{1}{n}\sum_{v}\cre_v\right)\vac$. 
In particular, we expand the latter and sum the coefficients involving exactly one $\cre_{r'}$ for all $r'>0$ (and no $\cre_0$). 
Each term in the expansion is a product of some number $t$ of the $\cre_r$ and $n-1-t$ other $\cre_v$ chosen from some $\frac{1}{n}\sum_v \cre_v$. 
These terms vanish in the inner product unless the $\cre_r$ all have $r>0$ and the $\cre_v$ involve the remaining $n-1-t$ nonzero indices. 
Then we expand as follows, where $t$ is the number of $\cre_r$ used, as above, and $c_t$ is the number of such nonvanishing terms: 
\begin{align}
    B_n &= \bvac a_1\cdots a_{n-1}\prod_{r>0}\left(\cre_r - \frac{1}{n}\sum_{v}\cre_v\right)\vac
    \\&= \bvac a_1\cdots a_{n-1}\sum_{t=0}^{n-1} \left(\dfrac{-1}{n}\right)^{n-1-t} c_t \cre_1\cdots \cre_{n-1}\vac
    \\&= \sum_{t=0}^{n-1} \left(\dfrac{-1}{n}\right)^{n-1-t} c_t.
\end{align}
We note there are $\binom{n-1}{t}$ ways to choose $t$ distinct values of $\cre_r$, then $(n-1-t)!$ ways to choose which factors the remaining $\cre_v$ come from. 
This gives 
\begin{equation}
    \left(\dfrac{-1}{n}\right)^{n-1-t} c_t = \left(\dfrac{-1}{n}\right)^{n-1-t}\binom{n-1}{t}(n-1-t)! = \left(\dfrac{-1}{n}\right)^{n-1-t} \dfrac{(n-1)!}{t!} = \left(\dfrac{-1}{n}\right)^{n-1} (n-1)! \dfrac{(-n)^t}{t!},
\end{equation}
proving the result for $B_n$. 

\subsubsection{Ideal heralding asymptotics}\label{sec:maple appendix}
We now prove Theorem~\ref{thm:herald ideal} Part~\ref{part:herald ideal 2} giving the asymptotic formula for $h_n(0)$, following the proof of Kotesovec. We let $W(z)$ be the Lambert $W$ function, the principal branch of the multi-valued solution to the equation $W(z)e^{W(z)} = z$. 

First, by Theorem~\ref{thm:herald ideal} Part~\ref{part:herald ideal 1}, we have
\begin{equation}
    n^{n-1} h_n(0) = (-1)^{n-1} (n-1)! \sum_{t=0}^{n-1}(n-t)\dfrac{(-n)^t}{t!}.
\end{equation}
For $n\geq 1$, elementary algebraic manipulation recovers the formula for the exponential generating function of $-1/(1-W(-z))$ given on the OEIS \cite{oeis}. 
In other words, expanding
\begin{equation}
    \dfrac{-1}{1-W(-z)} = \sum_{n\geq 0} w_n z^n,
\end{equation}
we have
\begin{equation}
     h_n(0) = \dfrac{n!}{n^{n-1}}w_n.
\end{equation}
By Stirling's formula, this gives
\begin{equation}\label{eq:hw}
    h_n(0) \sim \dfrac{\sqrt{2\pi}n^{3/2}}{e^n} w_n.
\end{equation}
Now, following Kotesovec, we use the Maple library \emph{gdev}. Specifically, we use the procedure \emph{equivalent}, which applies saddle point methods to determine the asymptotic behavior of coefficients of generating functions \cite{salvy1991examples}. 
From the command 
\begin{equation}
    equivalent(-1/(1-LambertW(-z)), z,n,1),
\end{equation}
we find
\begin{equation}
    w_n \sim \dfrac{e^n}{4\sqrt{2\pi}n^{3/2}}.
\end{equation}
Then by \eqref{eq:hw}, 
\begin{equation}
    h_n(0)\sim \dfrac{\sqrt{2\pi}n^{3/2}}{e^n} \cdot \dfrac{e^n}{4\sqrt{2\pi}n^{3/2}} = \frac{1}{4},
\end{equation}
giving Theorem~\ref{thm:herald ideal} Part~\ref{part:herald ideal 2}.

\subsection{Error rate}\label{sec:app error}

We now discuss the calculation of error rates. 
The error rate is a conditional probability: the probability that the output photon is distinguishable, given successful heralding. 
In our setting, this is calculated as follows. 
First (extending notation to arbitrary input states as usual), we may write the conditional probability $e_n(\rho)$ as a quotient of $\overline{e}_n(\rho)$, the probability of successful heralding \emph{and} distinguishable output, divided by $h_n(\rho)$, the probability of successful heralding. 
In particular, we have
\begin{equation}\label{eq:error and herald}
    e_n(\epsilon) = \dfrac{\overline{e}_n(\epsilon)}{h_n(\epsilon)}.
\end{equation}
The denominator may be calculated as in Appendix~\ref{sec:heralding appendix}, so we focus on $\overline{e}_n(\epsilon)$ here. 
As with the other notation, we extend the notation $\ole_n$ to arbitrary input states and write
\begin{equation}\label{eq:error and herald decomposed}
    \ole_n(\epsilon) = \sum_{k=1}^n \binom{n}{k}\epsilon^k (1-\epsilon)^{n-k} \ole_n(\Phi_k)
\end{equation}
(noting that $\ole_n(\Phi_0) = 0$). 
Of course, $\ole_n(\Phi_k)$ is the average of $\ole_n(\ketbra{\eta})$ over all $\ket{\eta}$ of the form \eqref{eq:terms} with $k$ distinguishability errors. 
Let $\ket{\eta}$ be such a term. 
To calculate $\ole_n(\ketbra{\eta})$, 
we perform an internal-external measurement on $\hat{U}\ket{\eta}$ (see Section~\ref{sec:linear optics}). 
That is, we project onto the states \eqref{eq:internal fock} and 
calculate the probability that the external Fock state is ideal and the photon in the output mode $0$ has internal state $\ket{\xi_i}$, $i>0$. 

As above, by Lemma~\ref{lem:symmetrization} (or rather, its slight extension to internal-external measurements) we may write
\begin{equation}
    \ole_n(\ketbra{\eta}) = |\braket{\eta_+}{\eta}|^2 \ole_n(\ketbra{\eta_+}).
\end{equation}
Dividing both sides by $h_n(\ketbra{\eta})$,  applying \eqref{eq:h symm}, and rewriting $\ole_n(\rho) = e_n(\rho) h_n(\rho)$, we obtain
\begin{equation}\label{eq:symmetrized error rate}
    e_n(\ketbra{\eta}) = e_n(\ketbra{\eta_+}).
\end{equation}
This completes Part~\ref{step:symm} of Theorem~\ref{thm:symm body}. 

Similarly to the heralding rate, we may rewrite the expression for $\ole_n(\ketbra{\eta})$ in a way that removes the burden of checking that the output pattern satisfies the appropriate symmetries: 
\begin{equation}
    \ole_n(\ketbra{\eta}) = |\braket{\eta_+}{\eta}|^2 \sum_{i\geq 1} |(\bra{0}\otimes I) a_0[\xi_i] \hat{U} \ket{\eta_+}|^2.
\end{equation}
This requires the assumption that the relevant patterns in $\mc{S}_G$ are ideal, as discussed in Remark~\ref{remark:fourier prime assumption} and Appendix~\ref{sec:heralding appendix}. 
This completes the proof of Part~\ref{step:symm2} of Theorem~\ref{thm:symm body}. 

Again writing $\ole_n(\Phi_k) = e_n(\Phi_k)h_n(\Phi_k)$, we note that we have the following expression for the error rate: 
\begin{equation}\label{eq:error expanded}
    e_n(\epsilon) = \dfrac{\sum_{k=1}^n \binom{n}{k}\epsilon^k (1-\epsilon)^{n-k} e_n(\Phi_k)h_n(\Phi_k)}{\sum_{k=0}^n \binom{n}{k}\epsilon^k (1-\epsilon)^{n-k} h_n(\Phi_k)} 
    = \dfrac{\sum_{k=1}^n \binom{n}{k}\epsilon^k (1-\epsilon)^{n-k} e_n(\Phi_k)\widetilde{h}_n(\Phi_k)}{1 + \sum_{k=1}^n \binom{n}{k}\epsilon^k (1-\epsilon)^{n-k} \widetilde{h}_n(\Phi_k)},
\end{equation}
where $\widetilde{h}_n(\Phi_k) = h_n(\Phi_k)/h_n(\Phi_0)$. These expressions $\widetilde{h}_n(\Phi_k)$ are given conjectured bounds in Conjecture~\ref{conj:herald term bound}. 
Thus the error rate can be described in terms of these relatively well understood ratios of heralding rates and the quantities $e_n(\Phi_k)$. 
We discuss the efficient calculation of these quantities via simulation in Appendix~\ref{sec:simulation appendix}. 

\subsection{First order approximations}\label{sec:one}

We now return to the setting of the URS (or OBB) model of photon distinguishability. 
We assume that $\epsilon$ is independent of $n$ and consider approximations of $h_n(\epsilon)$ and $e_n(\epsilon)$ up to first order in $\epsilon$. 
As discussed in Section~\ref{sec:heralding}, we have 
\begin{equation}
    h_n(\epsilon) = h_n(0) + \epsilon n (h_n(\Phi_1) - h_n(0)) + O(\epsilon^2),
\end{equation}
where $\Phi_1$ is the evenly weighted probabilistic mixture of all terms of the form \eqref{eq:internal fock experiment} involving exactly one distinguishability error. 
Further, from \eqref{eq:error expanded}, we see that 
\begin{equation}\label{eq:error first order expansion}
    e_n(\epsilon) = \epsilon n e_n(\Phi_1)\widetilde{h}_n(\Phi_1) + O(\epsilon^2).
\end{equation}
Then the first-order approximations require only knowledge of $h_n(0)$, discussed above, and $h_n(\Phi_1)$, $e_n(\Phi_1)$. 
Further, we will find that all terms of $\Phi_1$ exhibit the same behavior with regard to distillation protocols. 
Without loss of generality, we will consider 
\begin{equation}
    \ket{\eta} = \cre_0[\xi_{1}]\cre_1[\xi_0]\cdots \cre_{n-1}[\xi_{0}]\vac.
\end{equation}

We will consider $U = F_n$ or $U=H_n$. 
Note that we will \emph{not} place any additional assumptions on the relationship between ideal patterns and those in $\mc{S}_G$, so we allow all $n = 2^r$ in the Hadamard case and all $n\geq 3$ in the Fourier case. 
More generally, we may consider any $n\times n$ unitary $U$ with first row identically equal to $1/\sqrt{n}$ (so that Appendix~\ref{sec:ideal herald} applies) and with a symmetry group $G$ of order $n$ acting transitively on the $n$ modes. 
In particular, as discussed in Appendix~\ref{sec:fourier abelian}, these results hold when $U$ is the Fourier transform corresponding to any abelian group $G$ of order $n$. 
(Note that $F_n$ and $H_n$ are special cases.) 
In particular, for $P\in G$ and $\ket{\eta}$ as above, all $P\ket{\eta}$ are mutually orthogonal, 
corresponding to permuting the single ``distinguishable" photon to each of the modes $0, \dots, n-1$. 
By the discussion in Section~\ref{sec:symmetry general}, we obtain the associated $G$-symmetrization
\begin{equation}\label{eq:eta plus 1}
    \ket{\eta_+} = \dfrac{1}{\sqrt{n}}\sum_{i=0}^{n-1} \cre_0[\xi_0]\cdots \cre_{i-1}[\xi_0]\cre_i[\xi_1]\cre_{i+1}[\xi_0]\cdots \cre_{n-1}[\xi_0]\vac,\textnormal{ with }\braket{\eta_+}{\eta} = \frac{1}{\sqrt{n}}.
\end{equation}
We note that $\ket{\eta_+}$ is invariant under \emph{arbitrary} permutations of the external modes, not just those in $G$. 
In fact, by direct computation we see that
\begin{align}\label{eq:pure tensor 1 error}
    \ket{\eta_+} &= \textnormal{symm}\left(\ket{0, 1, \dots, n-1}_{1Q}\right)\otimes \textnormal{symm}\left(\ket{\xi_1, \xi_0, \dots, \xi_0}\right)
    \\&= \ket{1, \dots, 1}_{2Q}\otimes\left(\ket{\xi_1, \xi_0, \dots, \xi_0} + \ket{\xi_0, \xi_1, \xi_0, \dots, \xi_0} + \dots + \ket{\xi_0, \dots, \xi_0, \xi_1}\right).
\end{align}
In other words, $\ket{\eta_+}$ is a perfectly indistinguishable state, a pure tensor of the Fock state $\ket{1, \dots, 1}_{2Q}$ and some symmetric internal state. 
Then the results of Appendix~\ref{sec:ideal herald} apply here, and we have 
\begin{equation}
    h_n(\ketbra{\eta}) = \frac{1}{n}h_n(\ketbra{\eta_+}) = \frac{1}{n}h_n(0).
\end{equation}
Further, since all terms of $\Phi_1$ have the same symmetrization (up to relabeling of the $\ket{\xi_i}$, $i>0$), we have 
\begin{equation}\label{eq:heralding phi1}
    h_n(\Phi_1) = \frac{1}{n}h_n(0),
\end{equation}
or equivalently $\widetilde{h}_n(\Phi_1) = 1/n$. 
This gives Theorem~\ref{thm:herald first order}. 

Next we consider the output error rate. In this setting, we may characterize $e_n(\ketbra{\eta_+})$ as the probability that, given successful heralding of a single output photon, the output photon has internal state $\ket{\xi_1}$. Since $\ket{\eta_+}$ is a pure tensor as described above, and linear optics and PNRD only act on the external part of the state, the post-heralding state is still a pure tensor with the same internal part given in \eqref{eq:pure tensor 1 error}. 
This internal part is fully symmetric, so the probability of any given photon having internal state $\ket{\xi_1}$ is exactly $1/n$. 
Then by \eqref{eq:symmetrized error rate}, 
\begin{equation}
    e_n(\Phi_1) = e_n(\ketbra{\eta_+}) = \frac{1}{n}.
\end{equation}
Combined with \eqref{eq:error first order expansion} and \eqref{eq:heralding phi1}, this proves Theorem~\ref{thm:error rate}. 

\section{Fourier transform for finite abelian groups}\label{sec:fourier abelian}
In this section, we consider distillation protocols for a more general family of unitaries than the Fourier and Hadamard matrices above. 
In particular, we consider
\begin{equation}\label{eq:fourier general}
    F_{(n_1, \dots, n_\ell)} = F_{n_1}\otimes \cdots\otimes F_{n_\ell},
\end{equation}
where $n=n_1\cdots n_\ell$ and the $n_i\geq 2$ are arbitrary. 
We have $F_n = F_{(n)}$, by definition. Further, since $F_2 = H$, we have $H_{2^r} = F_{(2, 2, \dots, 2)}$. 
More generally, the unitary \eqref{eq:fourier general} corresponds to the Fourier transform of a finite abelian group $G$ of order $n = n_1 \cdots n_\ell$, namely
\begin{equation}
    G\cong \mathbb{Z}_{n_1} \times \cdots \times \mathbb{Z}_{n_\ell}.
\end{equation}
Here $\times$ is the direct product of groups and $\mathbb{Z}_m$ is the cyclic group of order $m$. 
In particular, $F_n$ corresponds to $\mathbb{Z}_n$ and $H_{2^r}$ corresponds to $\mathbb{Z}_2\times\cdots\times\mathbb{Z}_2$. 
We observe that as written, there are some redundancies for different choices of $n_1, \dots, n_\ell$; we discuss this further below.

Note that in Appendices~\ref{sec:ideal herald} and \ref{sec:one}, where we compute the $0$th and $1$st order terms of $h_n(\epsilon)$ and $e_n(\epsilon)$, very few properties of the unitaries $F_n$ and $H_n$ are used. 
In particular, to calculate $h_n(0)$, we observe in 
Appendix~\ref{sec:ideal herald} that we only need $U$ to be an $n\times n$ unitary matrix with entries in the first row identically equal to $1/\sqrt{n}$. 
To calculate $h_n(\Phi_1), e_n(\Phi_1)$ as in Appendix~\ref{sec:one}, we only need $U$ to have a symmetry group $G$ of order $n$ (in the sense of Appendix~\ref{sec:symmetry general}) that acts transitively on the $n$ modes. 
Then the following theorem implies that these results apply to general $F_{(n_1, \dots, n_\ell)}$ as well: 

\begin{theorem}\label{thm:abelian}
Let $n\geq 2$ be a positive integer with $n_1\cdots n_\ell = n$. 
We have the following: 
\begin{enumerate}
    \item With a suitable labeling of the modes (i.e., a change of basis via a permutation matrix), the first row of $U = F_{(n_1, \dots, n_\ell)}$ is identically equal to $1/\sqrt{n}$. 
    \item $F_{(n_1, \dots, n_\ell)}$ has a symmetry group $G$ isomorphic to $\mathbb{Z}_{n_1} \times \cdots \times \mathbb{Z}_{n_\ell}$. 
    \item $G$ transitively permutes the modes. 
\end{enumerate}
In particular, up to first order in $\epsilon$, the heralding and error rates of the distillation protocol for $U=F_{(n_1, \dots, n_\ell)}$ depend only on $n$. 
\end{theorem}
\begin{proof}
For each $i$ with $1\leq i\leq \ell$, we view $F_{n_i}$ as an operator on $\mathbb{C}^{n_i}$. 
Then $F_{(n_1, \dots, n_\ell)}$ is naturally an operator on $\mathbb{C}^{n_1}\otimes\cdots\otimes\mathbb{C}^{n_\ell}\cong \mathbb{C}^n$, which has basis $\ket{m_1}\otimes\cdots\otimes\ket{m_\ell}$ $(0\leq m_i < n_i)$.  
This may be identified with the usual basis for $\mathbb{C}^n$ by 
\begin{equation}\label{eq:fourier basis map}
    \ket{m_1}\otimes\cdots\otimes\ket{m_\ell}\mapsto \ket{m_1 + n_1 m_2 + n_1 n_2 m_3 + \dots + (n_1\cdots n_{\ell-1})m_\ell},
\end{equation}
where we interpret the ket on the right-hand side modulo $n$. 
In this basis, $\ket{0}$ corresponds to $\ket{0}^{\otimes \ell}$, so for $g = m_1 + n_1 m_2 + n_1 n_2 m_3 + \dots + (n_1\cdots n_{\ell-1})m_\ell$, it is clear that 
\begin{equation}
    \bra{0}F_{(n_1, \dots, n_\ell)}\ket{g} = \prod_i \bra{0}F_{n_i}\ket{m_i} = \prod_i \frac{1}{\sqrt{n_i}} = \frac{1}{\sqrt{n}},
\end{equation}
proving the first claim. 
For the second, we note that by Appendix~\ref{sec:symmetry}, each $F_{n_i}$ factor has a symmetry group $G_i\cong \mathbb{Z}_{n_i}$ with generating permutation matrices $P_i\in G_i$ and diagonal matrices $D_i$ satisfying $F_{n_i}P_i = D_i F_{n_i}$. 
As with the Hadamard case, this directly extends to the tensor product, with 
\begin{equation}\label{eq:symm tensor}
    \left(F_{n_1}\otimes \cdots\otimes F_{n_\ell}\right)\left(P_1^{q_1}\otimes\cdots\otimes P_{\ell}^{q_\ell}\right) = \left(D_1^{q_1}\otimes\cdots\otimes D_{\ell}^{q_\ell}\right)\left(F_{n_1}\otimes \cdots\otimes F_{n_\ell}\right).
\end{equation}
This proves the second claim, with $G = G_1\times\cdots\times G_\ell$. 
(Note this same reasoning gives a version of the ZTL for the general case, discussed below.) 
Finally, we observe that the action of $G$ on the modes is transitive. 
This is clear from the description $\ket{m_1}\otimes\cdots\otimes\ket{m_\ell}$ of the basis for $\mathbb{C}^n$. In particular, $G_i$ freely permutes the $i$th factor without affecting any of the others; then for arbitrary basis vectors $\ket{j_1}, \ket{j_2}$, one may find a suitable element $P\in G$ such that $P\ket{j_1} = \ket{j_2}$.
\end{proof}

We now give the analogue of the ZTL for general $F_{(n_1, \dots, n_\ell)}$. 
Let $g_0, \dots g_{n-1}\in\{0, \dots, n-1\}$ describe a Fock state $\ket{s}$ as usual, and by \eqref{eq:fourier basis map} identify each $g_j$ with a tuple $(m_1^{(j)}, m_2^{(j)}, \dots, m_\ell^{(j)})$. 
Consider \eqref{eq:symm tensor} with $q_i=1$ and all other $q_j=0$. 
Let $\omega_{n_i}$ be our usual choice of primitive $n_k$th root of unity. 
Recall from Appendix~\ref{sec:fourier symm} that $D_i$ is diagonal with $k$th diagonal entry $\omega_{n_i}^{k}$. 
Then 
\begin{align}
    \bra{s}\left(F_{n_1}\otimes \cdots\otimes F_{n_\ell}\right)\ket{1, \dots, 1} &= \bra{s}\left(F_{n_1}\otimes \cdots\otimes F_{n_\ell}\right)\left( I^{\otimes (i-1)} \otimes P_i \otimes I^{\ell-i}\right)\ket{1, \dots, 1}
    \\&= \bra{s}\left( I^{\otimes (i-1)} \otimes D_i \otimes I^{\ell-i}\right)\left(F_{n_1}\otimes \cdots\otimes F_{n_\ell}\right)\ket{1, \dots, 1}
    \\&= \omega_{n_i}^{-\sum_{j=0}^{n-1} m_i^{(j)}}\bra{s}\left(F_{n_1}\otimes \cdots\otimes F_{n_\ell}\right)\ket{1, \dots, 1},
\end{align}
so as before, for this quantity to be non-vanishing we require, for all $i$, 
\begin{equation}\label{eq:fourier ztl componentwise}
    \sum_{j=0}^{n-1} m_i^{(j)} \equiv 0 \mod n_i.
\end{equation}
(Note that the sum has $n$ terms, not $n_i$ terms.) 
This is the generalized Zero Transmission Law. 
Note that this is very natural in the tensor product basis, but less so in terms of the standard basis for $\mathbb{C}^n$ constructed in \eqref{eq:fourier basis map}. 
Later in this section we will discuss the compatibility of this result with the ZTL for $F_n$, using a different basis for $\mathbb{C}^n$. 

We note that in the case with all $n_i = 2$, we precisely recover the description of the ZTL in terms of decompositions into binary strings for the Hadamard unitaries, as discussed in \cite{crespi2015suppression} and Appendix~\ref{sec:hadamard symm} above. 
In particular, the decomposition of an arbitrary mode $g_j$ into the form $m_1^{(j)} + 2m_2^{(j)} + 2^2 m_3^{(j)} + \dots + 2^{\ell-1} m_\ell^{(j)}$, where all $m_i^{(j)}\in\{0,1\}$, is precisely its representation in base $2$, and \eqref{eq:fourier ztl componentwise} demands that all $\sum_j m_i^{(j)} \equiv 0\mod 2$. 

Note that for any $F_{(n_1, \dots, n_\ell)}$, if we are given the locations $g_0, \dots, g_{n-2}$ of only $n-1$ photons, there is a unique mode $g_{n-1}$ that completes the pattern to one satisfying the generalized ZTL. In particular, expressing each $g_j$ for $0\leq j < n-1$ in terms of the $m_i^{(j)}$ as above, \eqref{eq:fourier ztl componentwise} uniquely determines $m_i^{(n-1)}$ for all $i$. This then uniquely determines $g_{n-1}$ by \eqref{eq:fourier basis map}. 
This is used in the proof of Theorem~\ref{thm:loss}. 

We now discuss different ways of expressing the unitaries of \eqref{eq:fourier general}. 
We introduce an additional bit of notation: for $c$ relatively prime to $n_j$, define $F_{n_j}^{(c)}$ to have $(a,b)$ entry $\omega_{n_j}^{abc}$; in other words, we make a different choice of primitive $n_j$th root of unity, replacing $\omega_{n_j} = \exp(2\pi i/n_j)$ with $\omega_{n_j}^c = \exp(2\pi i c/n_j)$. 
These variants are not similar in general (in the technical sense that they are not equivalent up to a change of basis); however, we have $F_{n_j}^{(c)}Q = F_{n_j}$ for some permutation matrix $Q$. (The permutation taking column $c$ to column $1$, column $2c$ to column $2$, etc., with indices taken modulo $n_j$.) 
In terms of the corresponding linear optical unitary, as long as we only apply $\hat{F}_{n_j}^{(c)}$ to mode-symmetric states such as $\ket{1, \dots, 1}$, we see that the possible output patterns (and their amplitudes) are unchanged. 
In the context of Theorem~\ref{thm:abelian}, for the cyclic permutation $P_j$ generating $G_{i}$, we have $F_{n_j}^{(c)}Q^\dagger P_j Q = D_j F_{n_j}^{(c)}$. 
(Note $D_j$ is unchanged.) 
Then the corresponding symmetry group is still isomorphic to $G_j$, with the isomorphism given by $P\mapsto Q^\dagger P Q$, and we get exactly the same ZTL and suppression laws. 
For our purposes, then, we may view the $F_{n_j}^{(c)}$ as interchangeable as long as $c$ is relatively prime to $n_j$. 

Earlier, we commented that $F_{n_j}$ corresponded to the Fourier transform on $\mathbb{Z}_{n_j}$, which is a map between two vector spaces of dimension $n_j$. These spaces are isomorphic, but not canonically so; 
to write $F_{n_j}$ as a matrix over $\mathbb{C}^{n_j}$ requires one to arbitrarily choose a primitive $n_j$th root of unity. 
Different choices give the various $F_{n_j}^{(c)}$. 
Thus when discussing the correspondence between finite abelian groups and matrices for their discrete Fourier transforms, we will need to be flexible about the choice of root of unity. 
In particular, the unitary in \eqref{eq:fourier general} may be described by different choices of $n_1, \dots, n_\ell$, as long as we are comfortable with additional permutation matrices $Q$ as above. 
By the fundamental theorem of finite abelian groups, we may always express the finite abelian group $G$ in the form $G\cong \mathbb{Z}_{p_1^{r_1}}\otimes\cdots\otimes\mathbb{Z}_{p_t^{r_t}}$, where $p_1^{r_1}\geq\dots\geq p_t^{r_t}$ and the $p_i$ are (not necessarily distinct) primes. 
Correspondingly, up to a relabeling of the modes we may express $F_{(n_1, \dots, n_\ell)}$ as $F_{(p_1^{r_1}, \dots, p_{t}^{r_t})}Q$, where $Q$ is a permutation matrix, corresponding to potentially choosing different roots of unity as above. 

To be concrete, we consider the correspondence between $F_{n_1}\otimes F_{n_2}$ and $F_{n}$, where $n_1, n_2$ are relatively prime and $n=n_1 n_2$. (This gives the general case by induction.) 
By Bezout's Theorem, we find $c_1, c_2$ such that $c_1 n_1 + c_2 n_2 = 1$. 
We note that $c_1$ is uniquely determined modulo $n_2$ and $c_2$ is uniquely determined modulo $n_1$. 
We have a canonical ring isomorphism $\mathbb{Z}_{n_1}\times \mathbb{Z}_{n_2}\rightarrow \mathbb{Z}_n$ by 
\begin{equation}\label{eq:basis relpr}
    (k_1, k_2)\mapsto k_1 c_2 n_2 + k_2 c_1 n_1.
\end{equation}
The inverse is simply $k\mapsto (k\mod n_1, k\mod n_2)$.  
We then have $F_{n_1}^{(c_2)}\otimes F_{n_2}^{(c_1)}$ similar to $F_{n}$, 
with the change of basis simply being a relabeling of the modes according to \eqref{eq:basis relpr}, specifically 
$\ket{k_1}\otimes\ket{k_2}\leftrightarrow \ket{k_1 c_2 n_2 + k_2 c_1 n_1}$. 
(Note this is a different labeling from \eqref{eq:fourier basis map} in general!) 
In particular, this gives a \emph{canonical} way of identifying $F_{n_1}^{(c_2)}\otimes F_{n_2}^{(c_1)}$ and $F_{n}$. 
Further, as discussed above, the suppression laws are the same as for $F_{n_1}\otimes F_{n_2}$ when expressed in terms of the natural basis for $\mathbb{C}^{n_1}\otimes\mathbb{C}^{n_2}$. 
To express them in terms of the natural basis for $\mathbb{C}^n$ using \eqref{eq:basis relpr}, let $g_0, \dots, g_{n-1}$ be mode labels in $0, \dots, n-1$, with each $g_j = k_1^{(j)}c_2 n_2 + k_2^{(j)}c_1 n_1$. By \eqref{eq:fourier ztl componentwise}, we have $\sum_{j=0}^{n-1} k_i^{(j)} \equiv 0\mod n_i$. 
Then we see that
\begin{equation}
    \sum_{j=0}^{n-1} g_j = \left(\sum_{j=0}^{n-1} k_1^{(j)}\right)c_2 n_2 + \left(\sum_{j=0}^{n-1} k_2^{(j)}\right)c_1 n_1 \equiv 0\mod n.
\end{equation}
Then if we label the modes according to \eqref{eq:basis relpr}, we recover the standard ZTL for $F_n$. 

For example, we consider the case of $F_6$. 
Writing $n_1=2, n_2=3$, we have corresponding $c_1=-1, c_2=1$, so that $F_6$ and $F_2\otimes F_3^{(-1)}$ are similar, with both having the same suppression laws as $F_2\otimes F_3$. 
We may explicitly check that for $Q$ fixing column $0$ and permuting columns $1$ and $2$, 
we have $F_3^{(-1)} Q = F_3$. The similarity between $F_6$ and $F_2\otimes F_3^{(-1)}$ is seen by choosing $\ket{j}$ to correspond to the $j$th element of the sequence
\begin{equation}
    (0,0), (1,2), (0,1), (1,0), (0,2), (1,1).
\end{equation}

In summary, it suffices to consider the case of \eqref{eq:fourier general} where the $n_i$ are weakly decreasing prime powers. 
From this perspective, for each of $n=8,9,12$, we obtain one new protocol beyond the standard $F_n$ and $H_n$ (the latter only applicable when $n$ is a power of $2$), namely $F_{(4,2)}$, $F_{(3,3)}$, and $F_{(6,2)}$. 
For $n=16$, there are $5$ total options, with $3$ of them new: $F_{(8,2)}$, $F_{(4,4)}$, and $F_{(4,2,2)}$. 

In Fig.~\ref{fig:n_8_all_protocols}  we show the performance for all three $n=8$ protocols (i.e., including the $F_{(4,2)}$ protocol as well as $F_8=F_{(8)}$ and $H_8=F_{(2,2,2)}$ studied in the main text). Note that the $F_{(4,2)}$ protocol falls in between $F_8$ and $H_8$ in terms of performance, under both error models. Therefore, if one is agnostic to the noise model, it may in fact be advantageous to use the $F_{(4,2)}$ protocol (if one knows the noise model, then either $F_8$ or $H_8$ is optimal).

\begin{figure}
    \centering
    \includegraphics[width=0.75\columnwidth]{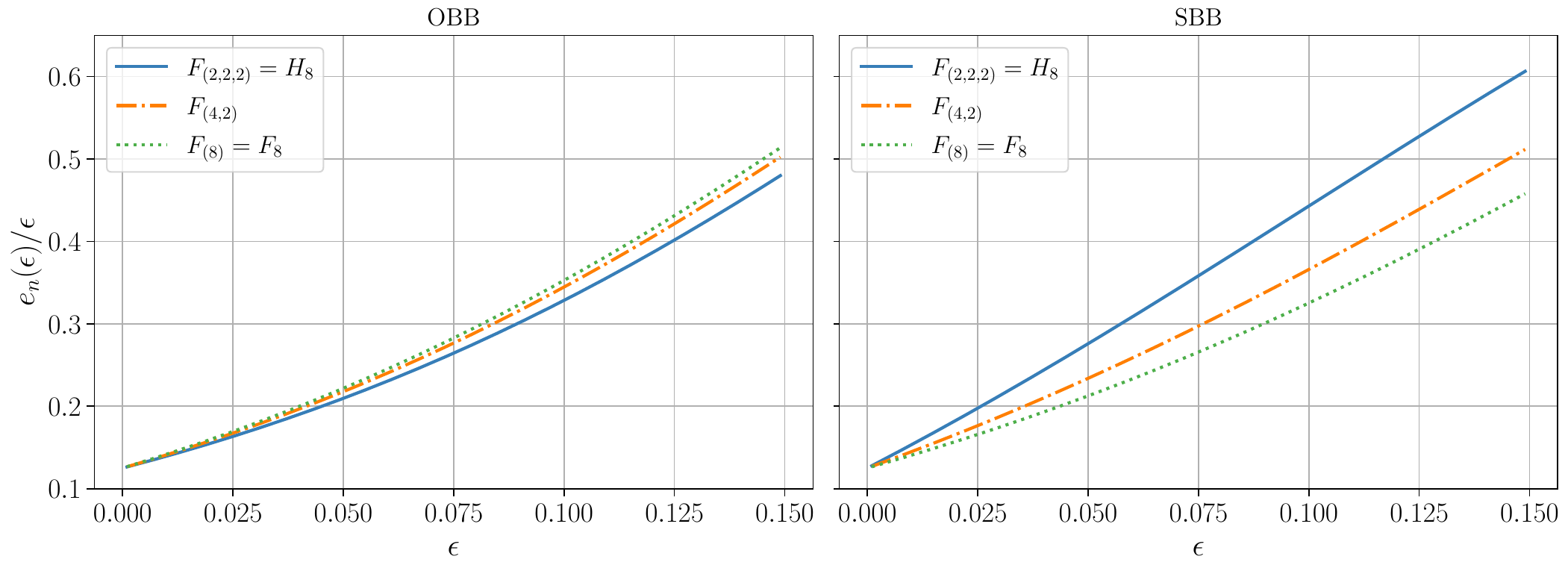}
    \caption{Performance comparison of all $n=8$ distillation protocols under the OBB and SBB noise models. The heralding rates $h_n(\epsilon)$ do not vary much between the models (see Fig.~\ref{fig:n_8_dist} for $F_8, H_8$).}
    \label{fig:n_8_all_protocols}
\end{figure}

\section{Haar random case}\label{sec:haar}

For comparison with the distillation schemes outlined in the main text and how they utilize interference to assist in distilling errors, we provide analysis in the case where random unitaries are considered for distillation. Interestingly, although these unitaries have no special structure, they can still be used for distillation purposes. However, as we will see, the protocols based around random matrices are significantly less 
resource efficient
than the Fourier protocol of the main text. This highlights the importance of the specific constructive and destructive interference in the latter scheme.

We consider generating Haar random unitaries from $U(n)$ (where the number of photons, $n$, is the same as the number of modes) and, as a warm up, running the same distillation protocols as outlined in this work. 
In particular, we evolve the all 1 state $|\bar{1}\rangle = \ket{1, \dots, 1}$ under a sampled random matrix, compute the ideal heralding patterns that have exactly 1 photon out in mode 0 (which as the matrix is random, it will typically include all possible patterns), and study how the heralding probabilities scale. By concentration of measure \cite{ledoux_concentration_2005_2}, for large enough $n$ we expect a typical sample to be well described by the average, and so we can use Haar averaging to compute quantities of interest.

First we can ask what the heralding rate $h_n(0)$ is. This is the probability that a single photon exits in mode 0, from the initial all 1 state. We compute this as follows:
\begin{equation}
    h_n(0) =  \int d\mu_{\hat{U}} \langle \bar{1}| \hat{U}^\dag \left( |1\rangle \langle 1|\otimes \mathbf{I}_{n-1,n-1} \right) \hat{U} |\bar{1}\rangle.
    \label{eq:haar_herald_integral}
\end{equation}
That is, the state $|\bar{1}\rangle$ is evolved under unitary $\hat{U}$, 
we compute the probability of PNRD giving an output pattern with a single photon in mode $0$, 
then we average over all unitaries. 
Note that here, the $\hat{U}$ are $d_{n, n }\times d_{n,n}$ matrices, where $d_{n,m}$ is the dimension of the space of $n$ photons in $m$ modes [$d_{n,m} = {n+m-1 \choose n}$]. The identity matrix $\mathbf{I}_{n-1,n-1}$ on $n-1$ modes and $n-1$ photons has dimension $d_{n-1, n-1}$.

Whilst the underlying linear optical unitary is drawn from a Haar random distribution on $U(n)$, it does not imply the $d_{n,n} \times d_{n,n}$ matrices $\hat{U}$ are Haar random in $U(d_{n,n})$. However, as shown in Ref.~\cite{saied2024advancing}, linear optical unitaries are a continuous 1-design. As such, we can still treat the integral in the above equation as a Haar random integral ($\int d\mu_U U X U^\dag = \mathrm{Tr}[X]\mathbb{I}/d$), which we can analytically evaluate as:
\begin{equation}
    h_n(0) = \frac{\mathrm{Tr}[\mathbf{I}_{n-1, n-1}]}{d_{n,n}} = \frac{d_{n-1, n-1}}{d_{n,n}} = \frac{1}{4}\frac{1}{1 - (2n)^{-1}} > 1/4.
    \label{eq:haar_herald}
\end{equation}
Of course this makes intuitive sense; for a random unitary, one expects any particular state is equally likely with probability $1/d_{n,n}$ (when averaged over unitaries), and the total number of states with 1 photon in mode 0 is $d_{n-1, n-1}$.
Notice that as with Theorem~\ref{thm:herald ideal} and Conjecture \ref{conj:herald} (pertaining to the Fourier and Hadamard matrices), this is decreasing monotonically in $n$ and tends to 1/4. We additionally conjecture that Eq.~\eqref{eq:haar_herald} is an upper bound for the same sized protocol with the Fourier matrices. See Fig.~\ref{fig:haar_vs_fourier_herald} for some numerics.

\begin{figure}
    \centering
    \includegraphics[width=0.6\columnwidth]{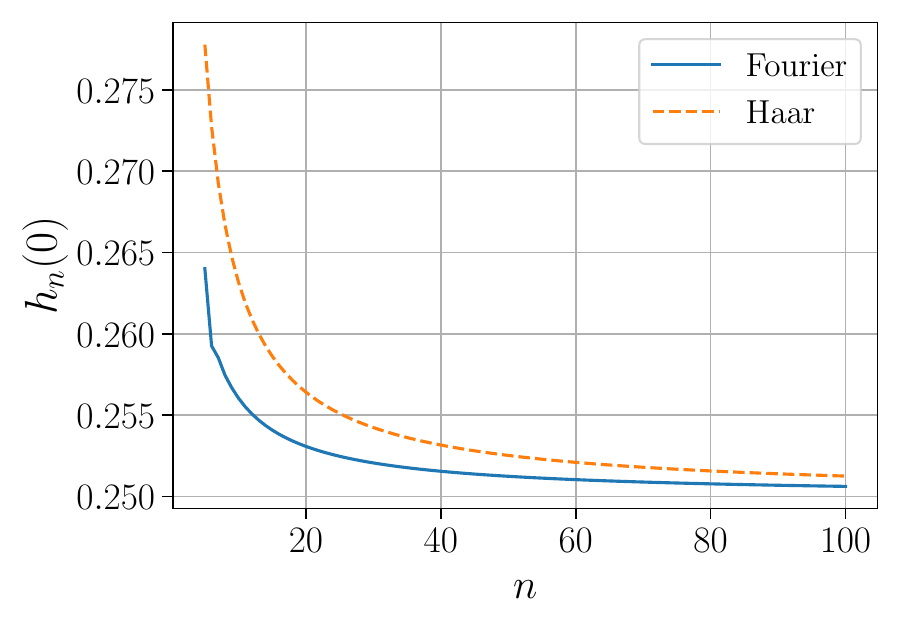}
    \caption{Haar (Eq.~\eqref{eq:haar_herald}) and Fourier (Th.~\ref{thm:herald ideal}) distillation schemes heralding rate at 0 error, as a function of $n$ (i.e., the probability a single photon exits from mode 0, in the absence of error).}
    \label{fig:haar_vs_fourier_herald}
\end{figure}

We can further calculate what happens at first order errors (where there are $n-1$ identical photons, and 1 in an orthogonal internal state). Now there are four integrals similar to Eq.~\eqref{eq:haar_herald_integral}; we care about getting 1 or 0 photons out starting with $n-1$ identical photons in $n$ modes, and conversely 0 or 1 photons out starting with 1 (error) photon. Multiplying these as written gives respectively the probability of observing an ideal photon out of the scheme, or an error photon, in the case of exactly one error. These can be computed in the same manner as Eq.~\eqref{eq:haar_herald_integral}, and combining the results gives two probabilities:
\begin{equation}
    P_{ideal} = \frac{d_{n-2, n-1}}{d_{n-1, n}} \frac{d_{1,n-1}}{d_{1,n}} = \frac{1}{4} \frac{(1-1/n)^2}{1 - \frac{3}{2n}},\;\;P_{err} = \frac{d_{n-1, n-1}}{d_{n-1, n}} \frac{d_{0,n-1}}{d_{1,n}} = \frac{1}{2n}.
\end{equation}
The first term corresponds to the probability an ideal photon exits mode 0, and the second term the probability the error (distinguishable) photon exits mode 0, given exactly 1 distinguishability error at the input.

We can compute the output fidelity of the scheme at first order using $h_n(0), P_{ideal}, P_{err}$:
\begin{equation}
    f_n(\epsilon):=1-e_n(\epsilon) \approx \frac{h_n(0)(1-\epsilon)^n + n\epsilon(1-\epsilon)^{n-1}P_{ideal}}{h_n(0)(1-\epsilon)^n + n\epsilon(1-\epsilon)^{n-1}(P_{ideal} + P_{err})}  \approx 1 - \frac{\epsilon}{2h_n(0)}.
    \label{eq:haar_fidel}
\end{equation}
Since $h_n(0) < 0.5$ for $n\ge 3$, we see the output error $e_n(\epsilon) \approx \epsilon/2h_n(0)$ has actually \textit{increased}, and for large enough $n$, $e_n(\epsilon) \approx 2\epsilon$. We verify this scaling numerically in Fig.~\ref{fig:haar_dist}.
Note that we numerically obtain similar scaling $e_n(\epsilon)\approx 2\epsilon$ in the Fourier case if we post-select on all possible outcomes with $1$ output photon, as opposed to only ideal patterns (see Fig.~\ref{fig:haar_dist}). 
Therefore, this result is more about the fact that we are post-selecting on all possible patterns, not necessarily related to interference or lack thereof in the Haar vs. Fourier case. 

\begin{figure}
    \centering
    \includegraphics[width=0.7\columnwidth]{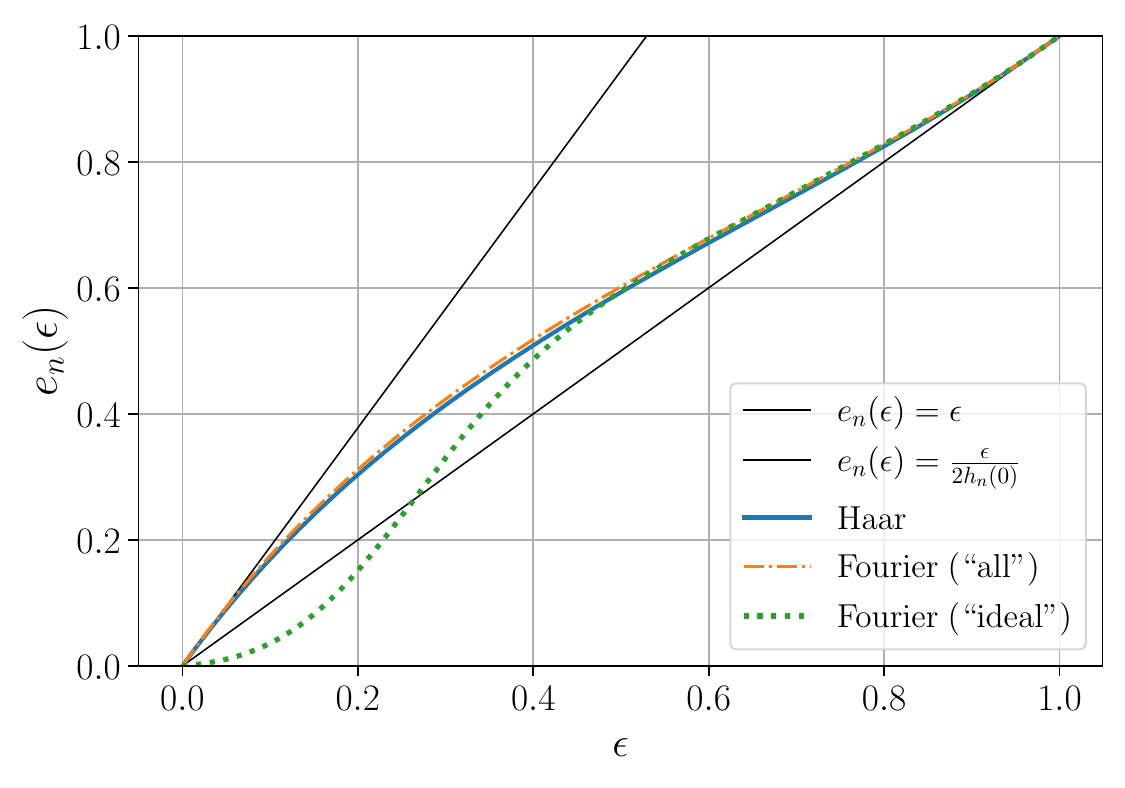}
    \caption{`Distillation' from a single instance drawn from the Haar random distribution for $n=9$, with the OBB error model, where we post-select on all possible patterns (solid blue line). Distillation is useful when $e_n(\epsilon) < \epsilon$ (lower solid black line), which is never the case here. For small $\epsilon \lesssim 0.1$ we see it follows the upper solid line,  Eq.~\eqref{eq:haar_fidel}. For comparison, we also include the case where we post-select on the ideal patterns in the Fourier protocol at $n=9$ (dotted line), which we see for $\epsilon\lesssim 0.25$ results in $e_n(\epsilon)<\epsilon$. If we run the Fourier protocol but allow one to post-select on all valid patterns (dash-dot), the scaling is similar to the Haar case.}
    \label{fig:haar_dist}
\end{figure}

A more interesting comparison is where we take a particular (random) unitary, and use only the highest weight heralding patterns resulting from the evolution of the initial all 1 state (such that there is 1 photon out in mode 0). The intuition here is that from a random linear optical unitary, one expects the probabilities to be roughly Porter-Thomas distributed, as demonstrated in Fig.~\ref{fig:haar_pt} (also see \cite{aaronson_arkhipov,nezami_permanent_2021_1, go_exploring_2024}). As such, there will be a set of patterns that will have a relatively high weight (i.e., above the mean probability, $1/d$), distributed according to $de^{-d p}$, where $p$ is the probability and $d=d_{n,n}$ (the expected number of states with probability greater than $p^*$ is $\approx de^{-dp^*}$). The high weight patterns undergo a degree of constructive interference, akin to the patterns used for Fourier/Hadamard distillation. As such, one may expect these patterns could be used for photon distillation, despite the underlying unitary being random.

\begin{figure}
    \centering
    \includegraphics[width=0.65\columnwidth]{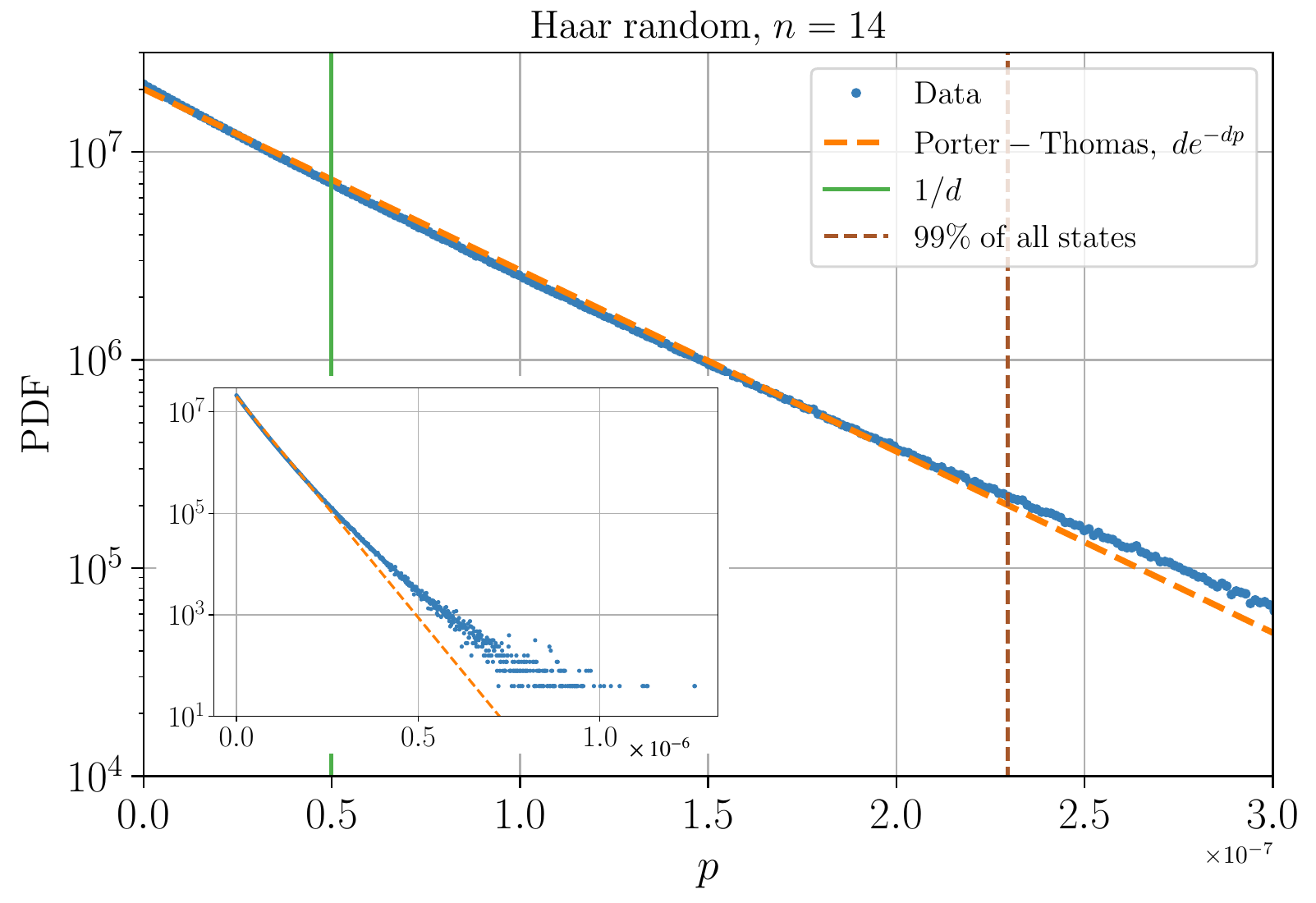}
    \caption{Probability density function (PDF) of the output probabilities $p$ from linear optical evolution of a sampled Haar random unitary in $U(n)$. Here the number of photons and modes is $n=14$, with initial state $|1\rangle^{\otimes n}$. We see the data points very closely follow the conjectured Porter-Thomas distribution. We also plot for reference the vertical solid line $p=1/d$ (where $d=d_{n,n}$ is the dimension), as well as the vertical dash line, left of which contains $99\%$ of all states, approximately where the data begins to diverge from Porter-Thomas (likely due to random sampling/small size effects). In the main figure we exclude the long sparse tail to focus on the majority of the data. The inset contains all data (the points that visibly diverge from the Porter-Thomas line account for less than 1\% of all states). The histogram data is constructed using 1000 equally sized bins over the range of probabilities $p$.}
    \label{fig:haar_pt}
\end{figure}

We provide numerics for this protocol in Fig.~\ref{fig:haar_high_weight}, which shows that if one post-selects on only the several highest-weighted patterns for that particular unitary, up to around $30\%$ of the total weight, 
distillation does work (i.e. it reduces the error, when the initial error $\epsilon$ is small). 
However, as can also be seen in the figure (right), the heralding probability decreases as we reduce the number of post-selection patterns and thus the error $\tilde{e}_n(\epsilon)$ (we put `tilde' above to differentiate this from the protocols of the main text). This is in contrast to the Fourier protocol in the main text, where at low error, the heralding probability is close to $1/4$ for all $n$, while the error reduces by a factor of $n$. Moreover, here the error reduction at small $\epsilon$ seems to be at most a factor of $\approx 2$ (i.e., at small $\epsilon$, $\tilde{e}_n(\epsilon) \approx \epsilon/2$). Again, we can contrast this to the Fourier protocol, which has $e_n(\epsilon) \approx \epsilon/n$.

We can provide some analytics on this protocol too. First, consider the case where one uses only the highest weight pattern for post-selection (this corresponds to the x-axis at 0 in Fig.~\ref{fig:haar_high_weight}). From the Porter-Thomas distribution, we can estimate the largest expected probability to be around $p^* = \frac{1}{d}\log d$ (i.e., the expected number of states with probability greater than $p^*$ is 1). However, since the fraction of states that have 1 photon out in mode 0 is $d_{n-1,n-1}/d_{n,n}$ (see Eq.~\eqref{eq:haar_herald}) we need to solve instead the following equation:
\begin{equation}
    N_{p>p^*} = d_{n,n}e^{-d_{n,n} p^*} =\frac{d_{n,n}}{d_{n-1,n-1}}\; \implies p^* = \frac{1}{d_{n,n}} \log d_{n-1,n-1}.
\end{equation}
This is the expected probability of the highest weight state with 1 photon out in mode 0. Whilst the distillation scheme that selects only the highest weighted pattern has the greatest error reduction, the heralding probability decreases with the dimension.

If instead we select the highest weighted $k$ patterns (with 1 photon out in mode 0), by similar reasoning as above, the cumulative probability is:
\begin{equation}
    \tilde{h}_n^{(k)}(0) = \frac{1}{d_{n,n}}\sum_{i=1}^k \log \frac{d_{n-1,n-1}}{i} = \frac{1}{d_{n,n}}(k \log d_{n-1, n-1} - \log k!) \approx r\frac{d_{n-1, n-1}}{d_{n,n}} (1 - \log r).
    \label{eq:general-haar-heralding}
\end{equation}
For the fraction of all possible patterns $r = k/d_{n-1,n-1}$ ``large enough", the equation can be re-written via Stirling's approximation as on the right hand side. For $r\rightarrow 1$ this reproduces $\tilde{h}_n^{(k)} \approx \frac{d_{n-1, n-1}}{d_{n,n}}$ as expected from Eq.~\eqref{eq:haar_herald} ($r=1$ is equivalent to post-selecting on all possible patterns).
We show how the analytic function compares to data for $n=7$ in Fig.~\ref{fig:haar_analytic}.

This analysis shows that distillation is not a unique phenomena, in fact, it is typically possible for an arbitrary unitary, so long as there are a subset of patterns that have relatively high constructive interference. However, distillation from random unitaries is generally much more expensive (compared to say the Fourier protocol) as shown in Fig.~\ref{fig:haar_high_weight} and the subsequent analysis above; the heralding rates are lower and the error is reduced by a lesser amount. The nature of the constructive interference in the Fourier (and Hadamard) unitaries is what allows for an efficient protocol, where the weight concentrates on a relatively small subset of all possible patterns, as per the zero-transmission laws, ultimately as a result of symmetry in the matrices (Sect.~\ref{sec:symmetry general}).

\begin{figure}
    \centering
    \includegraphics[width=\columnwidth]{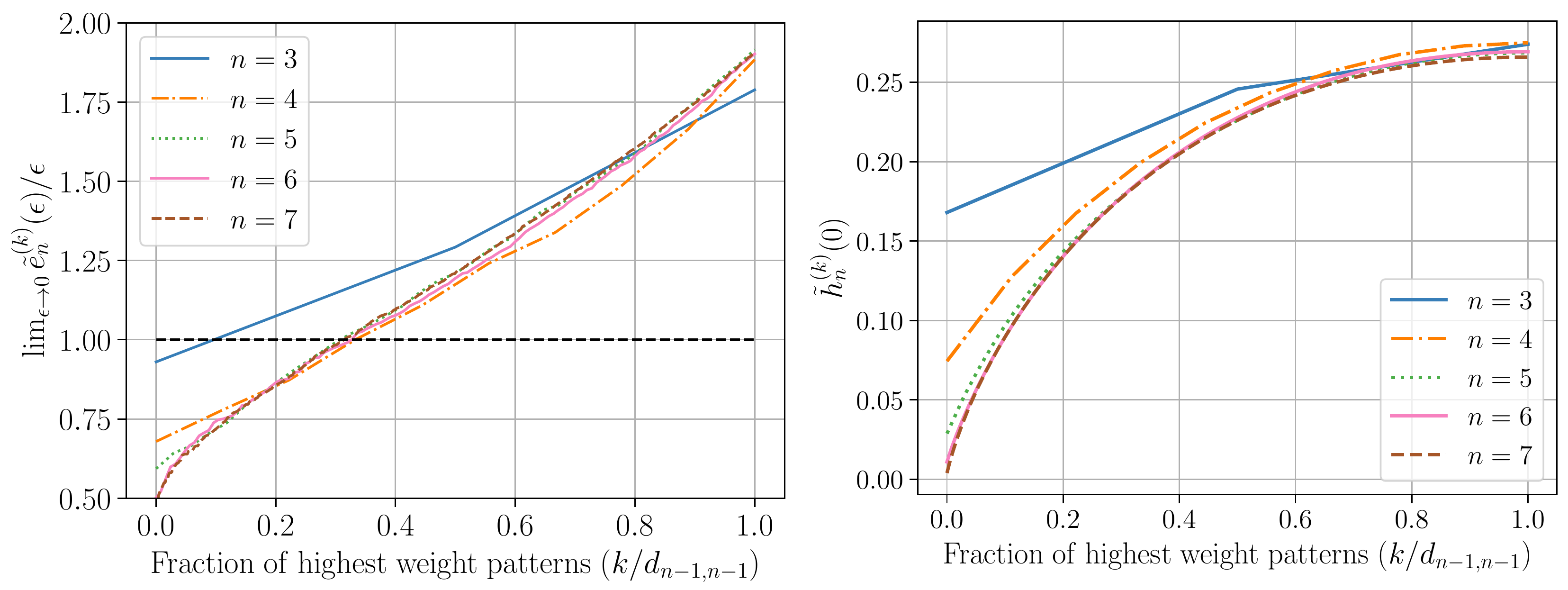}
    \caption{Distillation using Haar random unitaries, where post-selection is on a fraction of the highest weight (probability) output patterns (i.e., if one enumerated all states, those with the most constructive interference would be positioned towards the left on the x-axis). To show the data for different $n$ on a consistent axis, $x=0$ corresponds to including the single highest weight pattern, and $x=1$ to including all patterns. Each curve is an average of at least 100 Haar random unitaries. Left: Error reduction relative to the initial error $\epsilon$, at `small' $\epsilon$. We see for $n=4,5,6,7$, there is a distillation effect when the top $\lesssim 30\%$ of the highest weight patterns are used. As expected from Eq.~\eqref{eq:haar_fidel}, this tends to $\tilde{e}_n(\epsilon)/\epsilon \approx 2$ when all patterns are used. Right: Heralding rate at 0 error for the protocol that selects a certain percentage of the highest weight patterns. As expected from Eq.~\eqref{eq:haar_herald} it tends to a little over $1/4$, when using all possible patterns.}
    \label{fig:haar_high_weight}
\end{figure}

\begin{figure}
    \centering
    \includegraphics[width=0.75\columnwidth]{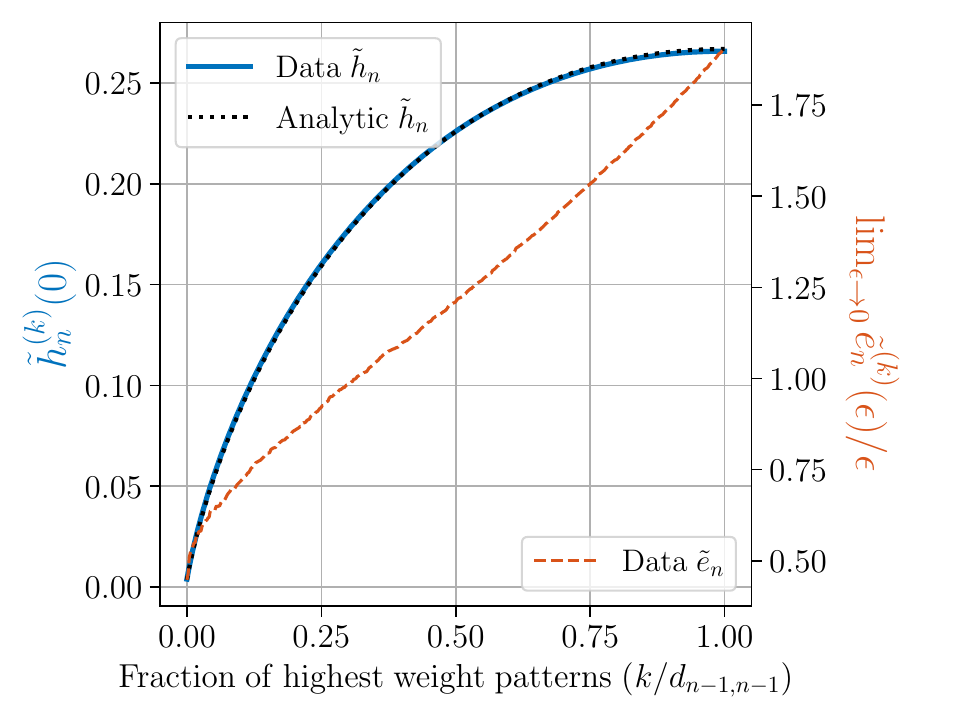}
    \caption{The same data as in Fig.~\ref{fig:haar_high_weight} for the $n=7$ case, with the analytically computed heralding rate, Eq.~\eqref{eq:general-haar-heralding} (dotted line). The dashed line (red) pertains to the right y-axis.}
    \label{fig:haar_analytic}
\end{figure}

\section{Simulations}\label{sec:simulation appendix}

Here we discuss how we performed our numerics for the data in the main text and appendix (e.g., Sect.~\ref{sec:numerics appendix}). 
For simplicity, we restrict our attention to the OBB and SBB error models. 
First we set up some basics. We call $d_{n,m}={n+m-1\choose n}$ the dimension of the Hilbert space of $n$ photons in $m$ modes.
To compute an evolved Fock state under linear optics, in dimension $d_{n,m}$, requires time scaling like $O(nd_{n,m})$ and space (i.e., memory) like $O(d_{n,m})$ \cite{marshall_simulation_2023}.
For an $n$ photon distillation protocol, the full state space is of dimension $d_{n,n}$; however, as discussed below, we only need to consider outcomes with 0 and 1 photon in mode 0, and hence the relevant dimension is $d_{n, n-1}+d_{n-1, n-1}$ (for $n=16$, this is around 220 million).
As the simulations are carried out by adding one photon at a time, as soon as a configuration has more than 2 photons in mode 0, we can discard it, thus constraining the dynamics in the relevant subspace.
We will now describe how we simulate distinguishability errors in this work.
Though not necessary, some additional details on this topic can be found in in Ref.~\cite{saied2024advancing}.

Naively to compute the distillation properties (i.e. $h_n(\epsilon), e_n(\epsilon)$) requires one to evaluate all $2^n$ initial states with $n-k$ `ideal' photons and $k$ `error' photons. 
(See the discussions of Appendix~\ref{sec:calculations}.) 
For example, for $n=3$ the relevant input states can be represented (1, 1, 1), (1, 1, $1'$), (1, $1'$, 1), ($1'$, 1, 1), (1, $1'$, $1'$), ($1'$, 1, $1'$), ($1'$, $1'$, 1), ($1'$,$1'$,$1'$), where $1$ indicates an `ideal' photon, and $1'$ an error. 
Since the errors do not interfere with the ideal photons, we can split the simulation into different subspaces, and combine results in post-processing.

In the SBB case each initial state splits into two independent simulations, one of $n-k$ identical photons, and another with $k$ identical photons, in the complementary positions (by symmetry of the states, this reduces the number of unique initial states to $2^{n-1}$). In each case we only care about getting 0 or 1 photon out in mode 0, since anything else would be heralded as a non-ideal pattern.
In particular, this is equivalent to projection of the full evolved state onto the subspace with 0 or 1 photons in mode 0. 
The two evolved (unnormalized) `states', $|\psi_{n-k}\rangle, |\phi_k\rangle$, can be combined to compute the relevant quantities.
Let's write each as a sum over configurations, where we separate out those that have 0 or 1 photon in mode 0:
\begin{equation}
\begin{split}
    & |\psi_{n-k}\rangle =  \sum_{r^{(0)}} b_r^{(0)} |r^{(0)}\rangle +  \sum_{r^{(1)}} b^{(1)}_r |r^{(1)}\rangle \\
    & |\phi_{k}\rangle =  \sum_{s^{(0)}} c_s^{(0)} |s^{(0)}\rangle +  \sum_{s^{(1)}} c^{(1)}_s |s^{(1)}\rangle, 
\end{split}
\end{equation}
where $|r^{(0)}\rangle$ ($|r^{(1)}\rangle$) is an $n-k$ photon configuration with 0 (1) photons in mode 0. Likewise for $|s^{(0)}\rangle, \ket{s^{(1)}}$, but for $k$ photons. 

As these two states do not interfere, we can classically combine their outcomes.
Note that only pairs of the form $(r^{(0)}, s^{(1)})$ and $(r^{(1)}, s^{(0)})$ can combine to give a heralded pattern with 1 photon out. In particular, we can compute two relevant probabilities
\begin{equation}
    P_{01}:=\sum_{r^{(0)}, s^{(1)}: r^{(0)} + s^{(1)} \in \mathcal{I}} |b_r^{(0)}|^2 | c_s^{(1)}|^2,\;\;P_{10} = \sum_{r^{(1)}, s^{(0)}: r^{(1)} + s^{(0)} \in \mathcal{I}} |b_r^{(1)}|^2 |c_s^{(0)}|^2,
    \label{eq:simprobs0}
\end{equation}
where $\mathcal{I}$ denotes the set of ideal heralding patterns, and $r+s$ is notation to add the length $n$ tuples element-wise (since PNRD does not distinguish between ideal and error photons).
The first term corresponds to the probability that an error photon exits mode 0, and the second term an ideal photon. Therefore, the heralding probability given this particular initial state is $P_{01}+P_{10}$.

The numerical time cost to compute these terms by checking membership of $\mathcal{I}$ for all pairs is respectively (order of)
\begin{equation}
    d_{n-k, n-1} d_{k-1, n-1},\,\, d_{n-k-1, n-1} d_{k, n-1},
\end{equation}
which in fact dominates the total simulation cost, i.e., this is generally more expensive than computing the evolved states themselves $|\psi_{n-k}\rangle, |\phi_k\rangle$ (discussed above).
Note, for the largest size we considered, $n=16$, we have $|\mathcal{I}|\approx 2^{22}$; however, membership can be checked in time $O(1)$ by hashing.
For $n=16$, computing these probabilities in the worst case requires around $2^{36}$ numerical evaluations, which is achievable on a single processor given a few hours.

If we further denote $P_{01}^{x_k}, P_{10}^{x_k}$ to be the two above probabilities given the initial state $x_k$ of $k$ errors (with ${n \choose k}$ total states for each $k=0, \dots, n$), we can compute the relevant rates as (see Eq.~\eqref{eq:error explicit}):
\begin{equation}
\begin{split}
    & h_n(\epsilon) = \sum_{k=0}^n \epsilon^k (1-\epsilon)^{n-k} \sum_{x_k} (P_{01}^{x_k} + P_{10}^{x_k}) \\
    & \bar{e}_n(\epsilon) = \sum_{k=0}^n \epsilon^k (1-\epsilon)^{n-k} \sum_{x_k} P_{01}^{x_k}.
    \label{eq:sim_probs}
\end{split}
\end{equation}

In the OBB case, since the $k$ errors are all independent, the statistics are classical. In particular, the probability to observe an output `error' pattern $s=(s_0, \dots, s_{n-1})$ from an initial state with $k=\sum_i s_i$ independent errors is given by the multinomial expression:
\begin{equation}
    p_{s} = \frac{k!}{n^k} \prod_{i}\frac{1}{s_i!}.
    \label{eq:multinomial}
\end{equation}
Since the final calculation only requires probabilities as per Eq.~\eqref{eq:simprobs0} (i.e., as opposed to the quantum amplitudes), we do not per se require any simulation for the $k$ error state; given the evolved $n-k$ identical photon state $|\psi_{n-k}\rangle$ as above, along with Eq.~\eqref{eq:multinomial}, we can compute Eqs.~\eqref{eq:simprobs0}, \eqref{eq:sim_probs}.

Thus far we have considered a brute force approach, where we evaluate all $2^n$ error configurations. However as remarked several times in this work, the systems of interest for distillation have a great deal of symmetry, which we can use to reduce the total number of unique error configurations.
Consider the Hadamard case first. Recall the discussion App.~\ref{sec:hadamard symm}, where it is noted the $2\times 2$ Hadamard matrix $H=F_2$ has the symmetry $HX=ZH$ where $X,Z$ are Pauli matrices (i.e., swapping the columns is equivalent to multiplication of row 1 by -1). 
This symmetry extends amongst the tensor factors of $H^{\otimes r}$ in the obvious way. Let's write these symmetries as before: $UP=DU$ where $P$ is a permutation matrix (swapping columns), and $D$ a diagonal matrix (of phases).
Notice that this implies initial states that are equivalent up to permutation, i.e., $|y_k\rangle = P|x_k\rangle$, will have the same output amplitudes, up to phase:
$\langle s|U|y_k\rangle = \langle s |UP|x_k\rangle =\langle s |DU|x_k\rangle = e^{i\phi} \langle s|U|x_k\rangle$. Since eventually we only care about the probabilities (Eq.~\eqref{eq:simprobs0}), the phase is unimportant for computing heralding rates for distillation.

We can find another kind of symmetry relevant for distillation; if there exists a column swap that is equivalent to swapping rows (apart from the 0th), the permuted input state will have the same output amplitudes on configurations with modes permuted. 
In other words: we look for cases in which swapping certain input modes is equivalent to swapping certain corresponding output modes. If we denote the column and row swap by matrices $C, R$ respectively, so that $UC=RU$, we note that for any pattern $s$ with $\bra{s}U\ket{\bar{1}}\neq 0$, 
its permuted version $C|s\rangle$ 
leads to the same amplitude: 
$0 \neq \langle \bar{1}|U|s\rangle = \langle \bar{1}|R U|s \rangle = \langle \bar{1}|UC|s\rangle$.
If such a transformation exists, this implies the relevant statistics for an input configuration $x_k$ will be the same as the version permuted by $C$. Intuitively this is saying that the amplitudes don't care precisely which mode a photon ends up in; they only care if the full configuration is an ideal pattern. Note that our definition of ideal pattern $s\in\mc{I}$ \emph{does} care that exactly $1$ photon is in the $0$th mode; therefore, to ensure this process maps ideal patterns to other ideal patterns, we must consider only $C$ and $R$ fixing the $0$th mode. 

We provide the symmetries for $H_4=H\otimes H$ as an example (we ignore normalization for convenience). Recall:
\begin{equation}
    H_4 = \left(
    \begin{array}{cccc}
        1 &  1 & 1 &  1\\
        1 &  -1 & 1 & -1 \\
        1 &  1 & -1 &  -1\\
        1 &  -1 & -1 & 1 
    \end{array}
    \right).
\end{equation}
There are three nontrivial symmetries of the usual $UP = DU$ form, arising from applying $HX=ZH$ on either or both tensor factors (i.e., $X\otimes I, I\otimes X, X\otimes X$ with $Z\otimes I, I\otimes Z, Z\otimes Z$). However there are three further symmetries of the form $RU = UC$ that can be seen, involving swapping columns (1,2), (1,3), and (2,3); these are respectively equivalent to row swaps (1,2), (2,3), (1,3). 
For $H_{16} = H_4\otimes H_4$, this results in only 32 unique error configurations (compared to $2^{16}$ naively).

In the Fourier case, $F_n$, as described in App.~\ref{sec:fourier symm} there is a cyclic symmetry of the form $F_nP=DF_n$ where $D$ is a diagonal matrix (of phases) and $P$ a cyclic permutation. This implies any initial states that are cyclically equivalent will have the same statistics as discussed above.
This reduces the number of unique states by around a factor of $n$.
We can find another symmetry to further reduce the terms, however, swapping both columns and rows similarly to the Hadamard case discussed above. Recall that the matrix elements of $F_n$ (again ignoring normalization) are of the form $\omega_n^{ij}$. 
For $r$ relatively prime to $n$, a unique multiplicative inverse $r^{-1}$ exists (modulo $n$). Therefore if we map $i\mapsto r^{-1} i, j\mapsto r j$, the matrix elements are unchanged: $ij = r^{-1} i r j \mod n$. This corresponds to sending column $j$ to column $r j$ (permuting modes of the input state), and then sending the output row  $i$ to $r^{-1} i$. (Note that we index the modes modulo $n$, so the $0$th row and column are untouched.) 
For $n=16$, these symmetries reduce the total number of unique errors from $2^{16}$ to 693. The sequence of integers counting the number of unique errors for different $F_n$ can be found in the OEIS, sequence number A002729 \cite{oeis2}.

\section{Numerics}\label{sec:numerics appendix}

Here we list the functions for the heralding rate $h_n(\epsilon)$ and  $\ole_n(\epsilon):=h_n(\epsilon)e_n(\epsilon)$ (see Eq.~\eqref{eq:error explicit} for reference), where the coefficients are given up to 6 decimal places (for some of the smaller protocols we write it as a fraction). The superscript denotes Hadamard ($H$) or Fourier ($F$) protocol, along with the error model (SBB or OBB). We go up to protocols of size $n=16$.

\let\clearpage\relax
\begin{equation*}
\begin{split}
  & h_{3}^{(F, OBB)}(\epsilon) = \frac{1}{3}(1-\epsilon)^{3} + \frac{1}{9}{3\choose 1}\epsilon^{1}(1-\epsilon)^{2} + \frac{2}{9}{3\choose 2}\epsilon^{2}(1-\epsilon)^{1} + \frac{2}{9}\epsilon^{3} \\ 
  & \bar{e}_{3}^{(F, OBB)}(\epsilon) = \frac{1}{27}{3\choose 1}\epsilon^{1}(1-\epsilon)^{2} + \frac{4}{27}{3\choose 2}\epsilon^{2}(1-\epsilon)^{1} + \frac{2}{9}\epsilon^{3} 
\end{split}
\end{equation*}

\begin{equation*}
\begin{split}
  & h_{4}^{(F, OBB)}(\epsilon) = \frac{1}{4}(1-\epsilon)^{4} + \frac{1}{16}{4\choose 1}\epsilon^{1}(1-\epsilon)^{3} + \frac{1}{12}{4\choose 2}\epsilon^{2}(1-\epsilon)^{2} + \frac{3}{32}{4\choose 3}\epsilon^{3}(1-\epsilon)^{1} + \frac{3}{32}\epsilon^{4} \\ 
  & \bar{e}_{4}^{(F, OBB)}(\epsilon) = \frac{1}{64}{4\choose 1}\epsilon^{1}(1-\epsilon)^{3} + \frac{5}{96}{4\choose 2}\epsilon^{2}(1-\epsilon)^{2} + \frac{9}{128}{4\choose 3}\epsilon^{3}(1-\epsilon)^{1} + \frac{3}{32}\epsilon^{4} 
\end{split}
\end{equation*}

\begin{equation*}
\begin{split}
  & h_{5}^{(F, OBB)}(\epsilon) = 0.264000(1-\epsilon)^{5} + 0.052800{5\choose 1}\epsilon^{1}(1-\epsilon)^{4} + 0.073600{5\choose 2}\epsilon^{2}(1-\epsilon)^{3}\\ & + 0.076800{5\choose 3}\epsilon^{3}(1-\epsilon)^{2} + 0.083200{5\choose 4}\epsilon^{4}(1-\epsilon)^{1} + 0.083200\epsilon^{5} \\ 
  & \bar{e}_{5}^{(F, OBB)}(\epsilon) = 0.010560{5\choose 1}\epsilon^{1}(1-\epsilon)^{4} + 0.039680{5\choose 2}\epsilon^{2}(1-\epsilon)^{3} + 0.051840{5\choose 3}\epsilon^{3}(1-\epsilon)^{2}\\ & + 0.066560{5\choose 4}\epsilon^{4}(1-\epsilon)^{1} + 0.083200\epsilon^{5} 
\end{split}
\end{equation*}

\begin{equation*}
\begin{split}
  & h_{6}^{(F, OBB)}(\epsilon) = 0.259259(1-\epsilon)^{6} + 0.043210{6\choose 1}\epsilon^{1}(1-\epsilon)^{5} + 0.034156{6\choose 2}\epsilon^{2}(1-\epsilon)^{4}\\ & + 0.032562{6\choose 3}\epsilon^{3}(1-\epsilon)^{3} + 0.029630{6\choose 4}\epsilon^{4}(1-\epsilon)^{2} + 0.028292{6\choose 5}\epsilon^{5}(1-\epsilon)^{1}\\ & + 0.028292\epsilon^{6} \\ 
  & \bar{e}_{6}^{(F, OBB)}(\epsilon) = 0.007202{6\choose 1}\epsilon^{1}(1-\epsilon)^{5} + 0.016872{6\choose 2}\epsilon^{2}(1-\epsilon)^{4} + 0.021451{6\choose 3}\epsilon^{3}(1-\epsilon)^{3}\\ & + 0.022085{6\choose 4}\epsilon^{4}(1-\epsilon)^{2} + 0.023577{6\choose 5}\epsilon^{5}(1-\epsilon)^{1} + 0.028292\epsilon^{6} 
\end{split}
\end{equation*}

\begin{equation*}
\begin{split}
  & h_{7}^{(F, OBB)}(\epsilon) = 0.258523(1-\epsilon)^{7} + 0.036932{7\choose 1}\epsilon^{1}(1-\epsilon)^{6} + 0.047072{7\choose 2}\epsilon^{2}(1-\epsilon)^{5}\\ & + 0.050204{7\choose 3}\epsilon^{3}(1-\epsilon)^{4} + 0.052988{7\choose 4}\epsilon^{4}(1-\epsilon)^{3} + 0.055079{7\choose 5}\epsilon^{5}(1-\epsilon)^{2}\\ & + 0.056660{7\choose 6}\epsilon^{6}(1-\epsilon)^{1} + 0.056660\epsilon^{7} \\ 
  & \bar{e}_{7}^{(F, OBB)}(\epsilon) = 0.005276{7\choose 1}\epsilon^{1}(1-\epsilon)^{6} + 0.019997{7\choose 2}\epsilon^{2}(1-\epsilon)^{5} + 0.027561{7\choose 3}\epsilon^{3}(1-\epsilon)^{4}\\ & + 0.034779{7\choose 4}\epsilon^{4}(1-\epsilon)^{3} + 0.041589{7\choose 5}\epsilon^{5}(1-\epsilon)^{2} + 0.048566{7\choose 6}\epsilon^{6}(1-\epsilon)^{1}\\ & + 0.056660\epsilon^{7} 
\end{split}
\end{equation*}

\begin{equation*}
\begin{split}
  & h_{8}^{(F, OBB)}(\epsilon) = 0.257446(1-\epsilon)^{8} + 0.032181{8\choose 1}\epsilon^{1}(1-\epsilon)^{7} + 0.039616{8\choose 2}\epsilon^{2}(1-\epsilon)^{6}\\ & + 0.042520{8\choose 3}\epsilon^{3}(1-\epsilon)^{5} + 0.044800{8\choose 4}\epsilon^{4}(1-\epsilon)^{4} + 0.046653{8\choose 5}\epsilon^{5}(1-\epsilon)^{3}\\ & + 0.048085{8\choose 6}\epsilon^{6}(1-\epsilon)^{2} + 0.049087{8\choose 7}\epsilon^{7}(1-\epsilon)^{1} + 0.049087\epsilon^{8} \\ 
  & \bar{e}_{8}^{(F, OBB)}(\epsilon) = 0.004023{8\choose 1}\epsilon^{1}(1-\epsilon)^{7} + 0.014948{8\choose 2}\epsilon^{2}(1-\epsilon)^{6} + 0.021359{8\choose 3}\epsilon^{3}(1-\epsilon)^{5}\\ & + 0.027055{8\choose 4}\epsilon^{4}(1-\epsilon)^{4} + 0.032378{8\choose 5}\epsilon^{5}(1-\epsilon)^{3} + 0.037566{8\choose 6}\epsilon^{6}(1-\epsilon)^{2}\\ & + 0.042951{8\choose 7}\epsilon^{7}(1-\epsilon)^{1} + 0.049087\epsilon^{8} 
\end{split}
\end{equation*}

\begin{equation*}
\begin{split}
  & h_{9}^{(F, OBB)}(\epsilon) = 0.256678(1-\epsilon)^{9} + 0.028520{9\choose 1}\epsilon^{1}(1-\epsilon)^{8} + 0.034616{9\choose 2}\epsilon^{2}(1-\epsilon)^{7}\\ & + 0.036834{9\choose 3}\epsilon^{3}(1-\epsilon)^{6} + 0.038728{9\choose 4}\epsilon^{4}(1-\epsilon)^{5} + 0.040313{9\choose 5}\epsilon^{5}(1-\epsilon)^{4}\\ & + 0.041613{9\choose 6}\epsilon^{6}(1-\epsilon)^{3} + 0.042628{9\choose 7}\epsilon^{7}(1-\epsilon)^{2} + 0.043305{9\choose 8}\epsilon^{8}(1-\epsilon)^{1}\\ & + 0.043305\epsilon^{9} \\ 
  & \bar{e}_{9}^{(F, OBB)}(\epsilon) = 0.003169{9\choose 1}\epsilon^{1}(1-\epsilon)^{8} + 0.012104{9\choose 2}\epsilon^{2}(1-\epsilon)^{7} + 0.017083{9\choose 3}\epsilon^{3}(1-\epsilon)^{6}\\ & + 0.021689{9\choose 4}\epsilon^{4}(1-\epsilon)^{5} + 0.025991{9\choose 5}\epsilon^{5}(1-\epsilon)^{4} + 0.030110{9\choose 6}\epsilon^{6}(1-\epsilon)^{3}\\ & + 0.034208{9\choose 7}\epsilon^{7}(1-\epsilon)^{2} + 0.038493{9\choose 8}\epsilon^{8}(1-\epsilon)^{1} + 0.043305\epsilon^{9} 
\end{split}
\end{equation*}

\begin{equation*}
\begin{split}
  & h_{10}^{(F, OBB)}(\epsilon) = 0.256038(1-\epsilon)^{10} + 0.025604{10\choose 1}\epsilon^{1}(1-\epsilon)^{9} + 0.025161{10\choose 2}\epsilon^{2}(1-\epsilon)^{8}\\ & + 0.026194{10\choose 3}\epsilon^{3}(1-\epsilon)^{7} + 0.027467{10\choose 4}\epsilon^{4}(1-\epsilon)^{6} + 0.028630{10\choose 5}\epsilon^{5}(1-\epsilon)^{5}\\ & + 0.029633{10\choose 6}\epsilon^{6}(1-\epsilon)^{4} + 0.030501{10\choose 7}\epsilon^{7}(1-\epsilon)^{3} + 0.031206{10\choose 8}\epsilon^{8}(1-\epsilon)^{2}\\ & + 0.031657{10\choose 9}\epsilon^{9}(1-\epsilon)^{1} + 0.031657\epsilon^{10} \\ 
  & \bar{e}_{10}^{(F, OBB)}(\epsilon) = 0.002560{10\choose 1}\epsilon^{1}(1-\epsilon)^{9} + 0.007890{10\choose 2}\epsilon^{2}(1-\epsilon)^{8} + 0.011238{10\choose 3}\epsilon^{3}(1-\epsilon)^{7}\\ & + 0.014343{10\choose 4}\epsilon^{4}(1-\epsilon)^{6} + 0.017258{10\choose 5}\epsilon^{5}(1-\epsilon)^{5} + 0.020034{10\choose 6}\epsilon^{6}(1-\epsilon)^{4}\\ & + 0.022780{10\choose 7}\epsilon^{7}(1-\epsilon)^{3} + 0.025578{10\choose 8}\epsilon^{8}(1-\epsilon)^{2} + 0.028491{10\choose 9}\epsilon^{9}(1-\epsilon)^{1}\\ & + 0.031657\epsilon^{10} 
\end{split}
\end{equation*}

\begin{equation*}
\begin{split}
  & h_{11}^{(F, OBB)}(\epsilon) = 0.255512(1-\epsilon)^{11} + 0.023228{11\choose 1}\epsilon^{1}(1-\epsilon)^{10} + 0.027306{11\choose 2}\epsilon^{2}(1-\epsilon)^{9}\\ & + 0.028891{11\choose 3}\epsilon^{3}(1-\epsilon)^{8} + 0.030293{11\choose 4}\epsilon^{4}(1-\epsilon)^{7} + 0.031500{11\choose 5}\epsilon^{5}(1-\epsilon)^{6}\\ & + 0.032535{11\choose 6}\epsilon^{6}(1-\epsilon)^{5} + 0.033413{11\choose 7}\epsilon^{7}(1-\epsilon)^{4} + 0.034138{11\choose 8}\epsilon^{8}(1-\epsilon)^{3}\\ & + 0.034699{11\choose 9}\epsilon^{9}(1-\epsilon)^{2} + 0.035049{11\choose 10}\epsilon^{10}(1-\epsilon)^{1} + 0.035049\epsilon^{11} \\ 
  & \bar{e}_{11}^{(F, OBB)}(\epsilon) = 0.002112{11\choose 1}\epsilon^{1}(1-\epsilon)^{10} + 0.008118{11\choose 2}\epsilon^{2}(1-\epsilon)^{9} + 0.011588{11\choose 3}\epsilon^{3}(1-\epsilon)^{8}\\ & + 0.014835{11\choose 4}\epsilon^{4}(1-\epsilon)^{7} + 0.017861{11\choose 5}\epsilon^{5}(1-\epsilon)^{6} + 0.020725{11\choose 6}\epsilon^{6}(1-\epsilon)^{5}\\ & + 0.023484{11\choose 7}\epsilon^{7}(1-\epsilon)^{4} + 0.026204{11\choose 8}\epsilon^{8}(1-\epsilon)^{3} + 0.028964{11\choose 9}\epsilon^{9}(1-\epsilon)^{2}\\ & + 0.031863{11\choose 10}\epsilon^{10}(1-\epsilon)^{1} + 0.035049\epsilon^{11} 
\end{split}
\end{equation*}

\begin{equation*}
\begin{split}
  & h_{12}^{(F, OBB)}(\epsilon) = 0.255068(1-\epsilon)^{12} + 0.021256{12\choose 1}\epsilon^{1}(1-\epsilon)^{11} + 0.016566{12\choose 2}\epsilon^{2}(1-\epsilon)^{10}\\ & + 0.015836{12\choose 3}\epsilon^{3}(1-\epsilon)^{9} + 0.015671{12\choose 4}\epsilon^{4}(1-\epsilon)^{8} + 0.015559{12\choose 5}\epsilon^{5}(1-\epsilon)^{7}\\ & + 0.015427{12\choose 6}\epsilon^{6}(1-\epsilon)^{6} + 0.015266{12\choose 7}\epsilon^{7}(1-\epsilon)^{5} + 0.015097{12\choose 8}\epsilon^{8}(1-\epsilon)^{4}\\ & + 0.014939{12\choose 9}\epsilon^{9}(1-\epsilon)^{3} + 0.014821{12\choose 10}\epsilon^{10}(1-\epsilon)^{2} + 0.014762{12\choose 11}\epsilon^{11}(1-\epsilon)^{1}\\ & + 0.014762\epsilon^{12} \\ 
  & \bar{e}_{12}^{(F, OBB)}(\epsilon) = 0.001771{12\choose 1}\epsilon^{1}(1-\epsilon)^{11} + 0.004568{12\choose 2}\epsilon^{2}(1-\epsilon)^{10} + 0.006112{12\choose 3}\epsilon^{3}(1-\epsilon)^{9}\\ & + 0.007409{12\choose 4}\epsilon^{4}(1-\epsilon)^{8} + 0.008507{12\choose 5}\epsilon^{5}(1-\epsilon)^{7} + 0.009451{12\choose 6}\epsilon^{6}(1-\epsilon)^{6}\\ & + 0.010279{12\choose 7}\epsilon^{7}(1-\epsilon)^{5} + 0.011040{12\choose 8}\epsilon^{8}(1-\epsilon)^{4} + 0.011785{12\choose 9}\epsilon^{9}(1-\epsilon)^{3}\\ & + 0.012584{12\choose 10}\epsilon^{10}(1-\epsilon)^{2} + 0.013532{12\choose 11}\epsilon^{11}(1-\epsilon)^{1} + 0.014762\epsilon^{12} 
\end{split}
\end{equation*}

\begin{equation*}
\begin{split}
  & h_{13}^{(F, OBB)}(\epsilon) = 0.254691(1-\epsilon)^{13} + 0.019592{13\choose 1}\epsilon^{1}(1-\epsilon)^{12} + 0.022512{13\choose 2}\epsilon^{2}(1-\epsilon)^{11}\\ & + 0.023702{13\choose 3}\epsilon^{3}(1-\epsilon)^{10} + 0.024770{13\choose 4}\epsilon^{4}(1-\epsilon)^{9} + 0.025713{13\choose 5}\epsilon^{5}(1-\epsilon)^{8}\\ & + 0.026544{13\choose 6}\epsilon^{6}(1-\epsilon)^{7} + 0.027272{13\choose 7}\epsilon^{7}(1-\epsilon)^{6} + 0.027904{13\choose 8}\epsilon^{8}(1-\epsilon)^{5}\\ & + 0.028446{13\choose 9}\epsilon^{9}(1-\epsilon)^{4} + 0.028893{13\choose 10}\epsilon^{10}(1-\epsilon)^{3} + 0.029234{13\choose 11}\epsilon^{11}(1-\epsilon)^{2}\\ & + 0.029438{13\choose 12}\epsilon^{12}(1-\epsilon)^{1} + 0.029438\epsilon^{13} \\ 
  & \bar{e}_{13}^{(F, OBB)}(\epsilon) = 0.001507{13\choose 1}\epsilon^{1}(1-\epsilon)^{12} + 0.005824{13\choose 2}\epsilon^{2}(1-\epsilon)^{11} + 0.008382{13\choose 3}\epsilon^{3}(1-\epsilon)^{10}\\ & + 0.010789{13\choose 4}\epsilon^{4}(1-\epsilon)^{9} + 0.013046{13\choose 5}\epsilon^{5}(1-\epsilon)^{8} + 0.015182{13\choose 6}\epsilon^{6}(1-\epsilon)^{7}\\ & + 0.017223{13\choose 7}\epsilon^{7}(1-\epsilon)^{6} + 0.019200{13\choose 8}\epsilon^{8}(1-\epsilon)^{5} + 0.021144{13\choose 9}\epsilon^{9}(1-\epsilon)^{4}\\ & + 0.023090{13\choose 10}\epsilon^{10}(1-\epsilon)^{3} + 0.025082{13\choose 11}\epsilon^{11}(1-\epsilon)^{2} + 0.027174{13\choose 12}\epsilon^{12}(1-\epsilon)^{1}\\ & + 0.029438\epsilon^{13} 
\end{split}
\end{equation*}

\begin{equation*}
\begin{split}
  & h_{14}^{(F, OBB)}(\epsilon) = 0.254365(1-\epsilon)^{14} + 0.018169{14\choose 1}\epsilon^{1}(1-\epsilon)^{13} + 0.019987{14\choose 2}\epsilon^{2}(1-\epsilon)^{12}\\ & + 0.020964{14\choose 3}\epsilon^{3}(1-\epsilon)^{11} + 0.021899{14\choose 4}\epsilon^{4}(1-\epsilon)^{10} + 0.022751{14\choose 5}\epsilon^{5}(1-\epsilon)^{9}\\ & + 0.023515{14\choose 6}\epsilon^{6}(1-\epsilon)^{8} + 0.024196{14\choose 7}\epsilon^{7}(1-\epsilon)^{7} + 0.024798{14\choose 8}\epsilon^{8}(1-\epsilon)^{6}\\ & + 0.025322{14\choose 9}\epsilon^{9}(1-\epsilon)^{5} + 0.025770{14\choose 10}\epsilon^{10}(1-\epsilon)^{4} + 0.026137{14\choose 11}\epsilon^{11}(1-\epsilon)^{3}\\ & + 0.026414{14\choose 12}\epsilon^{12}(1-\epsilon)^{2} + 0.026576{14\choose 13}\epsilon^{13}(1-\epsilon)^{1} + 0.026576\epsilon^{14} \\ 
  & \bar{e}_{14}^{(F, OBB)}(\epsilon) = 0.001298{14\choose 1}\epsilon^{1}(1-\epsilon)^{13} + 0.004810{14\choose 2}\epsilon^{2}(1-\epsilon)^{12} + 0.006986{14\choose 3}\epsilon^{3}(1-\epsilon)^{11}\\ & + 0.009027{14\choose 4}\epsilon^{4}(1-\epsilon)^{10} + 0.010959{14\choose 5}\epsilon^{5}(1-\epsilon)^{9} + 0.012798{14\choose 6}\epsilon^{6}(1-\epsilon)^{8}\\ & + 0.014559{14\choose 7}\epsilon^{7}(1-\epsilon)^{7} + 0.016260{14\choose 8}\epsilon^{8}(1-\epsilon)^{6} + 0.017923{14\choose 9}\epsilon^{9}(1-\epsilon)^{5}\\ & + 0.019567{14\choose 10}\epsilon^{10}(1-\epsilon)^{4} + 0.021218{14\choose 11}\epsilon^{11}(1-\epsilon)^{3} + 0.022909{14\choose 12}\epsilon^{12}(1-\epsilon)^{2}\\ & + 0.024678{14\choose 13}\epsilon^{13}(1-\epsilon)^{1} + 0.026576\epsilon^{14} 
\end{split}
\end{equation*}

\begin{equation*}
\begin{split}
  & h_{15}^{(F, OBB)}(\epsilon) = 0.254081(1-\epsilon)^{15} + 0.016939{15\choose 1}\epsilon^{1}(1-\epsilon)^{14} + 0.017300{15\choose 2}\epsilon^{2}(1-\epsilon)^{13}\\ & + 0.017856{15\choose 3}\epsilon^{3}(1-\epsilon)^{12} + 0.018498{15\choose 4}\epsilon^{4}(1-\epsilon)^{11} + 0.019093{15\choose 5}\epsilon^{5}(1-\epsilon)^{10}\\ & + 0.019624{15\choose 6}\epsilon^{6}(1-\epsilon)^{9} + 0.020091{15\choose 7}\epsilon^{7}(1-\epsilon)^{8} + 0.020500{15\choose 8}\epsilon^{8}(1-\epsilon)^{7}\\ & + 0.020856{15\choose 9}\epsilon^{9}(1-\epsilon)^{6} + 0.021164{15\choose 10}\epsilon^{10}(1-\epsilon)^{5} + 0.021426{15\choose 11}\epsilon^{11}(1-\epsilon)^{4}\\ & + 0.021643{15\choose 12}\epsilon^{12}(1-\epsilon)^{3} + 0.021808{15\choose 13}\epsilon^{13}(1-\epsilon)^{2} + 0.021907{15\choose 14}\epsilon^{14}(1-\epsilon)^{1}\\ & + 0.021907\epsilon^{15} \\ 
  & \bar{e}_{15}^{(F, OBB)}(\epsilon) = 0.001129{15\choose 1}\epsilon^{1}(1-\epsilon)^{14} + 0.003946{15\choose 2}\epsilon^{2}(1-\epsilon)^{13} + 0.005654{15\choose 3}\epsilon^{3}(1-\epsilon)^{12}\\ & + 0.007271{15\choose 4}\epsilon^{4}(1-\epsilon)^{11} + 0.008791{15\choose 5}\epsilon^{5}(1-\epsilon)^{10} + 0.010224{15\choose 6}\epsilon^{6}(1-\epsilon)^{9}\\ & + 0.011582{15\choose 7}\epsilon^{7}(1-\epsilon)^{8} + 0.012880{15\choose 8}\epsilon^{8}(1-\epsilon)^{7} + 0.014133{15\choose 9}\epsilon^{9}(1-\epsilon)^{6}\\ & + 0.015361{15\choose 10}\epsilon^{10}(1-\epsilon)^{5} + 0.016581{15\choose 11}\epsilon^{11}(1-\epsilon)^{4} + 0.017818{15\choose 12}\epsilon^{12}(1-\epsilon)^{3}\\ & + 0.019095{15\choose 13}\epsilon^{13}(1-\epsilon)^{2} + 0.020446{15\choose 14}\epsilon^{14}(1-\epsilon)^{1} + 0.021907\epsilon^{15} 
\end{split}
\end{equation*}

\begin{equation*}
\begin{split}
  & h_{16}^{(F, OBB)}(\epsilon) = 0.253832(1-\epsilon)^{16} + 0.015865{16\choose 1}\epsilon^{1}(1-\epsilon)^{15} + 0.017781{16\choose 2}\epsilon^{2}(1-\epsilon)^{14}\\ & + 0.018617{16\choose 3}\epsilon^{3}(1-\epsilon)^{13} + 0.019369{16\choose 4}\epsilon^{4}(1-\epsilon)^{12} + 0.020048{16\choose 5}\epsilon^{5}(1-\epsilon)^{11}\\ & + 0.020663{16\choose 6}\epsilon^{6}(1-\epsilon)^{10} + 0.021217{16\choose 7}\epsilon^{7}(1-\epsilon)^{9} + 0.021716{16\choose 8}\epsilon^{8}(1-\epsilon)^{8}\\ & + 0.022162{16\choose 9}\epsilon^{9}(1-\epsilon)^{7} + 0.022559{16\choose 10}\epsilon^{10}(1-\epsilon)^{6} + 0.022907{16\choose 11}\epsilon^{11}(1-\epsilon)^{5}\\ & + 0.023205{16\choose 12}\epsilon^{12}(1-\epsilon)^{4} + 0.023450{16\choose 13}\epsilon^{13}(1-\epsilon)^{3} + 0.023633{16\choose 14}\epsilon^{14}(1-\epsilon)^{2}\\ & + 0.023738{16\choose 15}\epsilon^{15}(1-\epsilon)^{1} + 0.023738\epsilon^{16} \\ 
  & \bar{e}_{16}^{(F, OBB)}(\epsilon) = 0.000992{16\choose 1}\epsilon^{1}(1-\epsilon)^{15} + 0.003835{16\choose 2}\epsilon^{2}(1-\epsilon)^{14} + 0.005592{16\choose 3}\epsilon^{3}(1-\epsilon)^{13}\\ & + 0.007241{16\choose 4}\epsilon^{4}(1-\epsilon)^{12} + 0.008802{16\choose 5}\epsilon^{5}(1-\epsilon)^{11} + 0.010285{16\choose 6}\epsilon^{6}(1-\epsilon)^{10}\\ & + 0.011704{16\choose 7}\epsilon^{7}(1-\epsilon)^{9} + 0.013068{16\choose 8}\epsilon^{8}(1-\epsilon)^{8} + 0.014391{16\choose 9}\epsilon^{9}(1-\epsilon)^{7}\\ & + 0.015684{16\choose 10}\epsilon^{10}(1-\epsilon)^{6} + 0.016962{16\choose 11}\epsilon^{11}(1-\epsilon)^{5} + 0.018239{16\choose 12}\epsilon^{12}(1-\epsilon)^{4}\\ & + 0.019533{16\choose 13}\epsilon^{13}(1-\epsilon)^{3} + 0.020863{16\choose 14}\epsilon^{14}(1-\epsilon)^{2} + 0.022255{16\choose 15}\epsilon^{15}(1-\epsilon)^{1}\\ & + 0.023738\epsilon^{16} 
\end{split}
\end{equation*}

\begin{equation*}
\begin{split}
  & h_{3}^{(F, SBB)}(\epsilon) = \frac{1}{3}(1-\epsilon)^{3} + \frac{1}{9}{3\choose 1}\epsilon^{1}(1-\epsilon)^{2} + \frac{1}{9}{3\choose 2}\epsilon^{2}(1-\epsilon)^{1} + \frac{1}{3}\epsilon^{3} \\ 
  & \bar{e}_{3}^{(F, SBB)}(\epsilon) = \frac{1}{27}{3\choose 1}\epsilon^{1}(1-\epsilon)^{2} + \frac{2}{27}{3\choose 2}\epsilon^{2}(1-\epsilon)^{1} + \frac{1}{3}\epsilon^{3} 
\end{split}
\end{equation*}

\begin{equation*}
\begin{split}
  & h_{4}^{(F, SBB)}(\epsilon) = \frac{1}{4}(1-\epsilon)^{4} + \frac{1}{16}{4\choose 1}\epsilon^{1}(1-\epsilon)^{3} + \frac{1}{12}{4\choose 2}\epsilon^{2}(1-\epsilon)^{2} + \frac{1}{16}{4\choose 3}\epsilon^{3}(1-\epsilon)^{1} + \frac{1}{4}\epsilon^{4} \\ 
  & \bar{e}_{4}^{(F, SBB)}(\epsilon) = \frac{1}{64}{4\choose 1}\epsilon^{1}(1-\epsilon)^{3} + \frac{1}{24}{4\choose 2}\epsilon^{2}(1-\epsilon)^{2} + \frac{3}{64}{4\choose 3}\epsilon^{3}(1-\epsilon)^{1} + \frac{1}{4}\epsilon^{4} 
\end{split}
\end{equation*}

\begin{equation*}
\begin{split}
  & h_{5}^{(F, SBB)}(\epsilon) = 0.264000(1-\epsilon)^{5} + 0.052800{5\choose 1}\epsilon^{1}(1-\epsilon)^{4} + 0.059200{5\choose 2}\epsilon^{2}(1-\epsilon)^{3}\\ & + 0.059200{5\choose 3}\epsilon^{3}(1-\epsilon)^{2} + 0.052800{5\choose 4}\epsilon^{4}(1-\epsilon)^{1} + 0.264000\epsilon^{5} \\ 
  & \bar{e}_{5}^{(F, SBB)}(\epsilon) = 0.010560{5\choose 1}\epsilon^{1}(1-\epsilon)^{4} + 0.024960{5\choose 2}\epsilon^{2}(1-\epsilon)^{3} + 0.034240{5\choose 3}\epsilon^{3}(1-\epsilon)^{2}\\ & + 0.042240{5\choose 4}\epsilon^{4}(1-\epsilon)^{1} + 0.264000\epsilon^{5} 
\end{split}
\end{equation*}

\begin{equation*}
\begin{split}
  & h_{6}^{(F, SBB)}(\epsilon) = 0.259259(1-\epsilon)^{6} + 0.043210{6\choose 1}\epsilon^{1}(1-\epsilon)^{5} + 0.041152{6\choose 2}\epsilon^{2}(1-\epsilon)^{4}\\ & + 0.038272{6\choose 3}\epsilon^{3}(1-\epsilon)^{3} + 0.041152{6\choose 4}\epsilon^{4}(1-\epsilon)^{2} + 0.043210{6\choose 5}\epsilon^{5}(1-\epsilon)^{1}\\ & + 0.259259\epsilon^{6} \\ 
  & \bar{e}_{6}^{(F, SBB)}(\epsilon) = 0.007202{6\choose 1}\epsilon^{1}(1-\epsilon)^{5} + 0.017284{6\choose 2}\epsilon^{2}(1-\epsilon)^{4} + 0.019136{6\choose 3}\epsilon^{3}(1-\epsilon)^{3}\\ & + 0.023868{6\choose 4}\epsilon^{4}(1-\epsilon)^{2} + 0.036008{6\choose 5}\epsilon^{5}(1-\epsilon)^{1} + 0.259259\epsilon^{6} 
\end{split}
\end{equation*}

\begin{equation*}
\begin{split}
  & h_{7}^{(F, SBB)}(\epsilon) = 0.258523(1-\epsilon)^{7} + 0.036932{7\choose 1}\epsilon^{1}(1-\epsilon)^{6} + 0.042440{7\choose 2}\epsilon^{2}(1-\epsilon)^{5}\\ & + 0.043909{7\choose 3}\epsilon^{3}(1-\epsilon)^{4} + 0.043909{7\choose 4}\epsilon^{4}(1-\epsilon)^{3} + 0.042440{7\choose 5}\epsilon^{5}(1-\epsilon)^{2}\\ & + 0.036932{7\choose 6}\epsilon^{6}(1-\epsilon)^{1} + 0.258523\epsilon^{7} \\ 
  & \bar{e}_{7}^{(F, SBB)}(\epsilon) = 0.005276{7\choose 1}\epsilon^{1}(1-\epsilon)^{6} + 0.014952{7\choose 2}\epsilon^{2}(1-\epsilon)^{5} + 0.020046{7\choose 3}\epsilon^{3}(1-\epsilon)^{4}\\ & + 0.023862{7\choose 4}\epsilon^{4}(1-\epsilon)^{3} + 0.027487{7\choose 5}\epsilon^{5}(1-\epsilon)^{2} + 0.031656{7\choose 6}\epsilon^{6}(1-\epsilon)^{1}\\ & + 0.258523\epsilon^{7} 
\end{split}
\end{equation*}

\begin{equation*}
\begin{split}
  & h_{8}^{(F, SBB)}(\epsilon) = 0.257446(1-\epsilon)^{8} + 0.032181{8\choose 1}\epsilon^{1}(1-\epsilon)^{7} + 0.042454{8\choose 2}\epsilon^{2}(1-\epsilon)^{6}\\ & + 0.038256{8\choose 3}\epsilon^{3}(1-\epsilon)^{5} + 0.044332{8\choose 4}\epsilon^{4}(1-\epsilon)^{4} + 0.038256{8\choose 5}\epsilon^{5}(1-\epsilon)^{3}\\ & + 0.042454{8\choose 6}\epsilon^{6}(1-\epsilon)^{2} + 0.032181{8\choose 7}\epsilon^{7}(1-\epsilon)^{1} + 0.257446\epsilon^{8} \\ 
  & \bar{e}_{8}^{(F, SBB)}(\epsilon) = 0.004023{8\choose 1}\epsilon^{1}(1-\epsilon)^{7} + 0.013932{8\choose 2}\epsilon^{2}(1-\epsilon)^{6} + 0.016147{8\choose 3}\epsilon^{3}(1-\epsilon)^{5}\\ & + 0.022166{8\choose 4}\epsilon^{4}(1-\epsilon)^{4} + 0.022109{8\choose 5}\epsilon^{5}(1-\epsilon)^{3} + 0.028522{8\choose 6}\epsilon^{6}(1-\epsilon)^{2}\\ & + 0.028158{8\choose 7}\epsilon^{7}(1-\epsilon)^{1} + 0.257446\epsilon^{8} 
\end{split}
\end{equation*}

\begin{equation*}
\begin{split}
  & h_{9}^{(F, SBB)}(\epsilon) = 0.256678(1-\epsilon)^{9} + 0.028520{9\choose 1}\epsilon^{1}(1-\epsilon)^{8} + 0.032563{9\choose 2}\epsilon^{2}(1-\epsilon)^{7}\\ & + 0.036335{9\choose 3}\epsilon^{3}(1-\epsilon)^{6} + 0.034293{9\choose 4}\epsilon^{4}(1-\epsilon)^{5} + 0.034293{9\choose 5}\epsilon^{5}(1-\epsilon)^{4}\\ & + 0.036335{9\choose 6}\epsilon^{6}(1-\epsilon)^{3} + 0.032563{9\choose 7}\epsilon^{7}(1-\epsilon)^{2} + 0.028520{9\choose 8}\epsilon^{8}(1-\epsilon)^{1}\\ & + 0.256678\epsilon^{9} \\ 
  & \bar{e}_{9}^{(F, SBB)}(\epsilon) = 0.003169{9\choose 1}\epsilon^{1}(1-\epsilon)^{8} + 0.009810{9\choose 2}\epsilon^{2}(1-\epsilon)^{7} + 0.014357{9\choose 3}\epsilon^{3}(1-\epsilon)^{6}\\ & + 0.015994{9\choose 4}\epsilon^{4}(1-\epsilon)^{5} + 0.018299{9\choose 5}\epsilon^{5}(1-\epsilon)^{4} + 0.021978{9\choose 6}\epsilon^{6}(1-\epsilon)^{3}\\ & + 0.022753{9\choose 7}\epsilon^{7}(1-\epsilon)^{2} + 0.025351{9\choose 8}\epsilon^{8}(1-\epsilon)^{1} + 0.256678\epsilon^{9} 
\end{split}
\end{equation*}

\begin{equation*}
\begin{split}
  & h_{10}^{(F, SBB)}(\epsilon) = 0.256038(1-\epsilon)^{10} + 0.025604{10\choose 1}\epsilon^{1}(1-\epsilon)^{9} + 0.027144{10\choose 2}\epsilon^{2}(1-\epsilon)^{8}\\ & + 0.024597{10\choose 3}\epsilon^{3}(1-\epsilon)^{7} + 0.025946{10\choose 4}\epsilon^{4}(1-\epsilon)^{6} + 0.025583{10\choose 5}\epsilon^{5}(1-\epsilon)^{5}\\ & + 0.025946{10\choose 6}\epsilon^{6}(1-\epsilon)^{4} + 0.024597{10\choose 7}\epsilon^{7}(1-\epsilon)^{3} + 0.027144{10\choose 8}\epsilon^{8}(1-\epsilon)^{2}\\ & + 0.025604{10\choose 9}\epsilon^{9}(1-\epsilon)^{1} + 0.256038\epsilon^{10} \\ 
  & \bar{e}_{10}^{(F, SBB)}(\epsilon) = 0.002560{10\choose 1}\epsilon^{1}(1-\epsilon)^{9} + 0.007627{10\choose 2}\epsilon^{2}(1-\epsilon)^{8} + 0.009105{10\choose 3}\epsilon^{3}(1-\epsilon)^{7}\\ & + 0.011405{10\choose 4}\epsilon^{4}(1-\epsilon)^{6} + 0.012791{10\choose 5}\epsilon^{5}(1-\epsilon)^{5} + 0.014541{10\choose 6}\epsilon^{6}(1-\epsilon)^{4}\\ & + 0.015493{10\choose 7}\epsilon^{7}(1-\epsilon)^{3} + 0.019517{10\choose 8}\epsilon^{8}(1-\epsilon)^{2} + 0.023043{10\choose 9}\epsilon^{9}(1-\epsilon)^{1}\\ & + 0.256038\epsilon^{10} 
\end{split}
\end{equation*}

\begin{equation*}
\begin{split}
  & h_{11}^{(F, SBB)}(\epsilon) = 0.255512(1-\epsilon)^{11} + 0.023228{11\choose 1}\epsilon^{1}(1-\epsilon)^{10} + 0.026221{11\choose 2}\epsilon^{2}(1-\epsilon)^{9}\\ & + 0.027262{11\choose 3}\epsilon^{3}(1-\epsilon)^{8} + 0.027760{11\choose 4}\epsilon^{4}(1-\epsilon)^{7} + 0.027974{11\choose 5}\epsilon^{5}(1-\epsilon)^{6}\\ & + 0.027974{11\choose 6}\epsilon^{6}(1-\epsilon)^{5} + 0.027760{11\choose 7}\epsilon^{7}(1-\epsilon)^{4} + 0.027262{11\choose 8}\epsilon^{8}(1-\epsilon)^{3}\\ & + 0.026221{11\choose 9}\epsilon^{9}(1-\epsilon)^{2} + 0.023228{11\choose 10}\epsilon^{10}(1-\epsilon)^{1} + 0.255512\epsilon^{11} \\ 
  & \bar{e}_{11}^{(F, SBB)}(\epsilon) = 0.002112{11\choose 1}\epsilon^{1}(1-\epsilon)^{10} + 0.006888{11\choose 2}\epsilon^{2}(1-\epsilon)^{9} + 0.009509{11\choose 3}\epsilon^{3}(1-\epsilon)^{8}\\ & + 0.011509{11\choose 4}\epsilon^{4}(1-\epsilon)^{7} + 0.013212{11\choose 5}\epsilon^{5}(1-\epsilon)^{6} + 0.014762{11\choose 6}\epsilon^{6}(1-\epsilon)^{5}\\ & + 0.016252{11\choose 7}\epsilon^{7}(1-\epsilon)^{4} + 0.017753{11\choose 8}\epsilon^{8}(1-\epsilon)^{3} + 0.019333{11\choose 9}\epsilon^{9}(1-\epsilon)^{2}\\ & + 0.021117{11\choose 10}\epsilon^{10}(1-\epsilon)^{1} + 0.255512\epsilon^{11} 
\end{split}
\end{equation*}

\begin{equation*}
\begin{split}
  & h_{12}^{(F, SBB)}(\epsilon) = 0.255068(1-\epsilon)^{12} + 0.021256{12\choose 1}\epsilon^{1}(1-\epsilon)^{11} + 0.018466{12\choose 2}\epsilon^{2}(1-\epsilon)^{10}\\ & + 0.016542{12\choose 3}\epsilon^{3}(1-\epsilon)^{9} + 0.016350{12\choose 4}\epsilon^{4}(1-\epsilon)^{8} + 0.015541{12\choose 5}\epsilon^{5}(1-\epsilon)^{7}\\ & + 0.016128{12\choose 6}\epsilon^{6}(1-\epsilon)^{6} + 0.015541{12\choose 7}\epsilon^{7}(1-\epsilon)^{5} + 0.016350{12\choose 8}\epsilon^{8}(1-\epsilon)^{4}\\ & + 0.016542{12\choose 9}\epsilon^{9}(1-\epsilon)^{3} + 0.018466{12\choose 10}\epsilon^{10}(1-\epsilon)^{2} + 0.021256{12\choose 11}\epsilon^{11}(1-\epsilon)^{1}\\ & + 0.255068\epsilon^{12} \\ 
  & \bar{e}_{12}^{(F, SBB)}(\epsilon) = 0.001771{12\choose 1}\epsilon^{1}(1-\epsilon)^{11} + 0.004505{12\choose 2}\epsilon^{2}(1-\epsilon)^{10} + 0.005473{12\choose 3}\epsilon^{3}(1-\epsilon)^{9}\\ & + 0.006457{12\choose 4}\epsilon^{4}(1-\epsilon)^{8} + 0.006983{12\choose 5}\epsilon^{5}(1-\epsilon)^{7} + 0.008064{12\choose 6}\epsilon^{6}(1-\epsilon)^{6}\\ & + 0.008558{12\choose 7}\epsilon^{7}(1-\epsilon)^{5} + 0.009893{12\choose 8}\epsilon^{8}(1-\epsilon)^{4} + 0.011068{12\choose 9}\epsilon^{9}(1-\epsilon)^{3}\\ & + 0.013961{12\choose 10}\epsilon^{10}(1-\epsilon)^{2} + 0.019484{12\choose 11}\epsilon^{11}(1-\epsilon)^{1} + 0.255068\epsilon^{12} 
\end{split}
\end{equation*}

\begin{equation*}
\begin{split}
  & h_{13}^{(F, SBB)}(\epsilon) = 0.254691(1-\epsilon)^{13} + 0.019592{13\choose 1}\epsilon^{1}(1-\epsilon)^{12} + 0.021871{13\choose 2}\epsilon^{2}(1-\epsilon)^{11}\\ & + 0.022711{13\choose 3}\epsilon^{3}(1-\epsilon)^{10} + 0.023175{13\choose 4}\epsilon^{4}(1-\epsilon)^{9} + 0.023439{13\choose 5}\epsilon^{5}(1-\epsilon)^{8}\\ & + 0.023561{13\choose 6}\epsilon^{6}(1-\epsilon)^{7} + 0.023561{13\choose 7}\epsilon^{7}(1-\epsilon)^{6} + 0.023439{13\choose 8}\epsilon^{8}(1-\epsilon)^{5}\\ & + 0.023175{13\choose 9}\epsilon^{9}(1-\epsilon)^{4} + 0.022711{13\choose 10}\epsilon^{10}(1-\epsilon)^{3} + 0.021871{13\choose 11}\epsilon^{11}(1-\epsilon)^{2}\\ & + 0.019592{13\choose 12}\epsilon^{12}(1-\epsilon)^{1} + 0.254691\epsilon^{13} \\ 
  & \bar{e}_{13}^{(F, SBB)}(\epsilon) = 0.001507{13\choose 1}\epsilon^{1}(1-\epsilon)^{12} + 0.005089{13\choose 2}\epsilon^{2}(1-\epsilon)^{11} + 0.007104{13\choose 3}\epsilon^{3}(1-\epsilon)^{10}\\ & + 0.008677{13\choose 4}\epsilon^{4}(1-\epsilon)^{9} + 0.010019{13\choose 5}\epsilon^{5}(1-\epsilon)^{8} + 0.011221{13\choose 6}\epsilon^{6}(1-\epsilon)^{7}\\ & + 0.012340{13\choose 7}\epsilon^{7}(1-\epsilon)^{6} + 0.013420{13\choose 8}\epsilon^{8}(1-\epsilon)^{5} + 0.014498{13\choose 9}\epsilon^{9}(1-\epsilon)^{4}\\ & + 0.015608{13\choose 10}\epsilon^{10}(1-\epsilon)^{3} + 0.016782{13\choose 11}\epsilon^{11}(1-\epsilon)^{2} + 0.018085{13\choose 12}\epsilon^{12}(1-\epsilon)^{1}\\ & + 0.254691\epsilon^{13} 
\end{split}
\end{equation*}

\begin{equation*}
\begin{split}
  & h_{14}^{(F, SBB)}(\epsilon) = 0.254365(1-\epsilon)^{14} + 0.018169{14\choose 1}\epsilon^{1}(1-\epsilon)^{13} + 0.021090{14\choose 2}\epsilon^{2}(1-\epsilon)^{12}\\ & + 0.020198{14\choose 3}\epsilon^{3}(1-\epsilon)^{11} + 0.021062{14\choose 4}\epsilon^{4}(1-\epsilon)^{10} + 0.020892{14\choose 5}\epsilon^{5}(1-\epsilon)^{9}\\ & + 0.021293{14\choose 6}\epsilon^{6}(1-\epsilon)^{8} + 0.021166{14\choose 7}\epsilon^{7}(1-\epsilon)^{7} + 0.021293{14\choose 8}\epsilon^{8}(1-\epsilon)^{6}\\ & + 0.020892{14\choose 9}\epsilon^{9}(1-\epsilon)^{5} + 0.021062{14\choose 10}\epsilon^{10}(1-\epsilon)^{4} + 0.020198{14\choose 11}\epsilon^{11}(1-\epsilon)^{3}\\ & + 0.021090{14\choose 12}\epsilon^{12}(1-\epsilon)^{2} + 0.018169{14\choose 13}\epsilon^{13}(1-\epsilon)^{1} + 0.254365\epsilon^{14} \\ 
  & \bar{e}_{14}^{(F, SBB)}(\epsilon) = 0.001298{14\choose 1}\epsilon^{1}(1-\epsilon)^{13} + 0.004640{14\choose 2}\epsilon^{2}(1-\epsilon)^{12} + 0.005990{14\choose 3}\epsilon^{3}(1-\epsilon)^{11}\\ & + 0.007519{14\choose 4}\epsilon^{4}(1-\epsilon)^{10} + 0.008532{14\choose 5}\epsilon^{5}(1-\epsilon)^{9} + 0.009695{14\choose 6}\epsilon^{6}(1-\epsilon)^{8}\\ & + 0.010583{14\choose 7}\epsilon^{7}(1-\epsilon)^{7} + 0.011598{14\choose 8}\epsilon^{8}(1-\epsilon)^{6} + 0.012359{14\choose 9}\epsilon^{9}(1-\epsilon)^{5}\\ & + 0.013543{14\choose 10}\epsilon^{10}(1-\epsilon)^{4} + 0.014208{14\choose 11}\epsilon^{11}(1-\epsilon)^{3} + 0.016449{14\choose 12}\epsilon^{12}(1-\epsilon)^{2}\\ & + 0.016871{14\choose 13}\epsilon^{13}(1-\epsilon)^{1} + 0.254365\epsilon^{14} 
\end{split}
\end{equation*}

\begin{equation*}
\begin{split}
  & h_{15}^{(F, SBB)}(\epsilon) = 0.254081(1-\epsilon)^{15} + 0.016939{15\choose 1}\epsilon^{1}(1-\epsilon)^{14} + 0.017055{15\choose 2}\epsilon^{2}(1-\epsilon)^{13}\\ & + 0.017717{15\choose 3}\epsilon^{3}(1-\epsilon)^{12} + 0.017592{15\choose 4}\epsilon^{4}(1-\epsilon)^{11} + 0.017847{15\choose 5}\epsilon^{5}(1-\epsilon)^{10}\\ & + 0.017963{15\choose 6}\epsilon^{6}(1-\epsilon)^{9} + 0.017946{15\choose 7}\epsilon^{7}(1-\epsilon)^{8} + 0.017946{15\choose 8}\epsilon^{8}(1-\epsilon)^{7}\\ & + 0.017963{15\choose 9}\epsilon^{9}(1-\epsilon)^{6} + 0.017847{15\choose 10}\epsilon^{10}(1-\epsilon)^{5} + 0.017592{15\choose 11}\epsilon^{11}(1-\epsilon)^{4}\\ & + 0.017717{15\choose 12}\epsilon^{12}(1-\epsilon)^{3} + 0.017055{15\choose 13}\epsilon^{13}(1-\epsilon)^{2} + 0.016939{15\choose 14}\epsilon^{14}(1-\epsilon)^{1}\\ & + 0.254081\epsilon^{15} \\ 
  & \bar{e}_{15}^{(F, SBB)}(\epsilon) = 0.001129{15\choose 1}\epsilon^{1}(1-\epsilon)^{14} + 0.003541{15\choose 2}\epsilon^{2}(1-\epsilon)^{13} + 0.005027{15\choose 3}\epsilon^{3}(1-\epsilon)^{12}\\ & + 0.006017{15\choose 4}\epsilon^{4}(1-\epsilon)^{11} + 0.006997{15\choose 5}\epsilon^{5}(1-\epsilon)^{10} + 0.007850{15\choose 6}\epsilon^{6}(1-\epsilon)^{9}\\ & + 0.008601{15\choose 7}\epsilon^{7}(1-\epsilon)^{8} + 0.009345{15\choose 8}\epsilon^{8}(1-\epsilon)^{7} + 0.010113{15\choose 9}\epsilon^{9}(1-\epsilon)^{6}\\ & + 0.010850{15\choose 10}\epsilon^{10}(1-\epsilon)^{5} + 0.011574{15\choose 11}\epsilon^{11}(1-\epsilon)^{4} + 0.012691{15\choose 12}\epsilon^{12}(1-\epsilon)^{3}\\ & + 0.013513{15\choose 13}\epsilon^{13}(1-\epsilon)^{2} + 0.015810{15\choose 14}\epsilon^{14}(1-\epsilon)^{1} + 0.254081\epsilon^{15} 
\end{split}
\end{equation*}

\begin{equation*}
\begin{split}
  & h_{16}^{(F, SBB)}(\epsilon) = 0.253832(1-\epsilon)^{16} + 0.015865{16\choose 1}\epsilon^{1}(1-\epsilon)^{15} + 0.018639{16\choose 2}\epsilon^{2}(1-\epsilon)^{14}\\ & + 0.018075{16\choose 3}\epsilon^{3}(1-\epsilon)^{13} + 0.018841{16\choose 4}\epsilon^{4}(1-\epsilon)^{12} + 0.018735{16\choose 5}\epsilon^{5}(1-\epsilon)^{11}\\ & + 0.019043{16\choose 6}\epsilon^{6}(1-\epsilon)^{10} + 0.019007{16\choose 7}\epsilon^{7}(1-\epsilon)^{9} + 0.019174{16\choose 8}\epsilon^{8}(1-\epsilon)^{8}\\ & + 0.019007{16\choose 9}\epsilon^{9}(1-\epsilon)^{7} + 0.019043{16\choose 10}\epsilon^{10}(1-\epsilon)^{6} + 0.018735{16\choose 11}\epsilon^{11}(1-\epsilon)^{5}\\ & + 0.018841{16\choose 12}\epsilon^{12}(1-\epsilon)^{4} + 0.018075{16\choose 13}\epsilon^{13}(1-\epsilon)^{3} + 0.018639{16\choose 14}\epsilon^{14}(1-\epsilon)^{2}\\ & + 0.015865{16\choose 15}\epsilon^{15}(1-\epsilon)^{1} + 0.253832\epsilon^{16} \\ 
  & \bar{e}_{16}^{(F, SBB)}(\epsilon) = 0.000992{16\choose 1}\epsilon^{1}(1-\epsilon)^{15} + 0.003711{16\choose 2}\epsilon^{2}(1-\epsilon)^{14} + 0.004890{16\choose 3}\epsilon^{3}(1-\epsilon)^{13}\\ & + 0.006173{16\choose 4}\epsilon^{4}(1-\epsilon)^{12} + 0.007048{16\choose 5}\epsilon^{5}(1-\epsilon)^{11} + 0.007997{16\choose 6}\epsilon^{6}(1-\epsilon)^{10}\\ & + 0.008756{16\choose 7}\epsilon^{7}(1-\epsilon)^{9} + 0.009587{16\choose 8}\epsilon^{8}(1-\epsilon)^{8} + 0.010252{16\choose 9}\epsilon^{9}(1-\epsilon)^{7}\\ & + 0.011046{16\choose 10}\epsilon^{10}(1-\epsilon)^{6} + 0.011687{16\choose 11}\epsilon^{11}(1-\epsilon)^{5} + 0.012668{16\choose 12}\epsilon^{12}(1-\epsilon)^{4}\\ & + 0.013185{16\choose 13}\epsilon^{13}(1-\epsilon)^{3} + 0.014928{16\choose 14}\epsilon^{14}(1-\epsilon)^{2} + 0.014873{16\choose 15}\epsilon^{15}(1-\epsilon)^{1}\\ & + 0.253832\epsilon^{16} 
\end{split}
\end{equation*}

\begin{equation*}
\begin{split}
  & h_{4}^{(H, OBB)}(\epsilon) = \frac{1}{4}(1-\epsilon)^{4} + \frac{1}{16}{4\choose 1}\epsilon^{1}(1-\epsilon)^{3} + \frac{1}{16}{4\choose 2}\epsilon^{2}(1-\epsilon)^{2} + \frac{3}{32}{4\choose 3}\epsilon^{3}(1-\epsilon)^{1} + \frac{3}{32}\epsilon^{4} \\ 
  & \bar{e}_{4}^{(H, OBB)}(\epsilon) = \frac{1}{64}{4\choose 1}\epsilon^{1}(1-\epsilon)^{3} + \frac{1}{32}{4\choose 2}\epsilon^{2}(1-\epsilon)^{2} + \frac{9}{128}{4\choose 3}\epsilon^{3}(1-\epsilon)^{1} + \frac{3}{32}\epsilon^{4} 
\end{split}
\end{equation*}

\begin{equation*}
\begin{split}
  & h_{8}^{(H, OBB)}(\epsilon) = 0.257446(1-\epsilon)^{8} + 0.032181{8\choose 1}\epsilon^{1}(1-\epsilon)^{7} + 0.037857{8\choose 2}\epsilon^{2}(1-\epsilon)^{6}\\ & + 0.042412{8\choose 3}\epsilon^{3}(1-\epsilon)^{5} + 0.044759{8\choose 4}\epsilon^{4}(1-\epsilon)^{4} + 0.046648{8\choose 5}\epsilon^{5}(1-\epsilon)^{3}\\ & + 0.048082{8\choose 6}\epsilon^{6}(1-\epsilon)^{2} + 0.049087{8\choose 7}\epsilon^{7}(1-\epsilon)^{1} + 0.049087\epsilon^{8} \\ 
  & \bar{e}_{8}^{(H, OBB)}(\epsilon) = 0.004023{8\choose 1}\epsilon^{1}(1-\epsilon)^{7} + 0.012783{8\choose 2}\epsilon^{2}(1-\epsilon)^{6} + 0.021266{8\choose 3}\epsilon^{3}(1-\epsilon)^{5}\\ & + 0.027012{8\choose 4}\epsilon^{4}(1-\epsilon)^{4} + 0.032374{8\choose 5}\epsilon^{5}(1-\epsilon)^{3} + 0.037564{8\choose 6}\epsilon^{6}(1-\epsilon)^{2}\\ & + 0.042951{8\choose 7}\epsilon^{7}(1-\epsilon)^{1} + 0.049087\epsilon^{8} 
\end{split}
\end{equation*}

\begin{equation*}
\begin{split}
  & h_{16}^{(H, OBB)}(\epsilon) = 0.253832(1-\epsilon)^{16} + 0.015865{16\choose 1}\epsilon^{1}(1-\epsilon)^{15} + 0.017582{16\choose 2}\epsilon^{2}(1-\epsilon)^{14}\\ & + 0.018614{16\choose 3}\epsilon^{3}(1-\epsilon)^{13} + 0.019368{16\choose 4}\epsilon^{4}(1-\epsilon)^{12} + 0.020048{16\choose 5}\epsilon^{5}(1-\epsilon)^{11}\\ & + 0.020663{16\choose 6}\epsilon^{6}(1-\epsilon)^{10} + 0.021217{16\choose 7}\epsilon^{7}(1-\epsilon)^{9} + 0.021716{16\choose 8}\epsilon^{8}(1-\epsilon)^{8}\\ & + 0.022162{16\choose 9}\epsilon^{9}(1-\epsilon)^{7} + 0.022559{16\choose 10}\epsilon^{10}(1-\epsilon)^{6} + 0.022907{16\choose 11}\epsilon^{11}(1-\epsilon)^{5}\\ & + 0.023205{16\choose 12}\epsilon^{12}(1-\epsilon)^{4} + 0.023450{16\choose 13}\epsilon^{13}(1-\epsilon)^{3} + 0.023633{16\choose 14}\epsilon^{14}(1-\epsilon)^{2}\\ & + 0.023738{16\choose 15}\epsilon^{15}(1-\epsilon)^{1} + 0.023738\epsilon^{16} \\ 
  & \bar{e}_{16}^{(H, OBB)}(\epsilon) = 0.000992{16\choose 1}\epsilon^{1}(1-\epsilon)^{15} + 0.003579{16\choose 2}\epsilon^{2}(1-\epsilon)^{14} + 0.005590{16\choose 3}\epsilon^{3}(1-\epsilon)^{13}\\ & + 0.007241{16\choose 4}\epsilon^{4}(1-\epsilon)^{12} + 0.008802{16\choose 5}\epsilon^{5}(1-\epsilon)^{11} + 0.010285{16\choose 6}\epsilon^{6}(1-\epsilon)^{10}\\ & + 0.011704{16\choose 7}\epsilon^{7}(1-\epsilon)^{9} + 0.013068{16\choose 8}\epsilon^{8}(1-\epsilon)^{8} + 0.014391{16\choose 9}\epsilon^{9}(1-\epsilon)^{7}\\ & + 0.015684{16\choose 10}\epsilon^{10}(1-\epsilon)^{6} + 0.016962{16\choose 11}\epsilon^{11}(1-\epsilon)^{5} + 0.018239{16\choose 12}\epsilon^{12}(1-\epsilon)^{4}\\ & + 0.019533{16\choose 13}\epsilon^{13}(1-\epsilon)^{3} + 0.020863{16\choose 14}\epsilon^{14}(1-\epsilon)^{2} + 0.022255{16\choose 15}\epsilon^{15}(1-\epsilon)^{1}\\ & + 0.023738\epsilon^{16} 
\end{split}
\end{equation*}

\begin{equation*}
\begin{split}
  & h_{4}^{(H, SBB)}(\epsilon) = \frac{1}{4}(1-\epsilon)^{4} + \frac{1}{16}{4\choose 1}\epsilon^{1}(1-\epsilon)^{3} + \frac{1}{8}{4\choose 2}\epsilon^{2}(1-\epsilon)^{2} + \frac{1}{16}{4\choose 3}\epsilon^{3}(1-\epsilon)^{1} + \frac{1}{4}\epsilon^{4} \\ 
  & \bar{e}_{4}^{(H, SBB)}(\epsilon) = \frac{1}{64}{4\choose 1}\epsilon^{1}(1-\epsilon)^{3} + \frac{1}{16}{4\choose 2}\epsilon^{2}(1-\epsilon)^{2} + \frac{3}{64}{4\choose 3}\epsilon^{3}(1-\epsilon)^{1} + \frac{1}{4}\epsilon^{4} 
\end{split}
\end{equation*}

\begin{equation*}
\begin{split}
  & h_{8}^{(H, SBB)}(\epsilon) = 0.257446(1-\epsilon)^{8} + 0.032181{8\choose 1}\epsilon^{1}(1-\epsilon)^{7} + 0.075714{8\choose 2}\epsilon^{2}(1-\epsilon)^{6}\\ & + 0.037857{8\choose 3}\epsilon^{3}(1-\epsilon)^{5} + 0.062317{8\choose 4}\epsilon^{4}(1-\epsilon)^{4} + 0.037857{8\choose 5}\epsilon^{5}(1-\epsilon)^{3}\\ & + 0.075714{8\choose 6}\epsilon^{6}(1-\epsilon)^{2} + 0.032181{8\choose 7}\epsilon^{7}(1-\epsilon)^{1} + 0.257446\epsilon^{8} \\ 
  & \bar{e}_{8}^{(H, SBB)}(\epsilon) = 0.004023{8\choose 1}\epsilon^{1}(1-\epsilon)^{7} + 0.025566{8\choose 2}\epsilon^{2}(1-\epsilon)^{6} + 0.015856{8\choose 3}\epsilon^{3}(1-\epsilon)^{5}\\ & + 0.031158{8\choose 4}\epsilon^{4}(1-\epsilon)^{4} + 0.022001{8\choose 5}\epsilon^{5}(1-\epsilon)^{3} + 0.050148{8\choose 6}\epsilon^{6}(1-\epsilon)^{2}\\ & + 0.028158{8\choose 7}\epsilon^{7}(1-\epsilon)^{1} + 0.257446\epsilon^{8} 
\end{split}
\end{equation*}

\begin{equation*}
\begin{split}
  & h_{16}^{(H, SBB)}(\epsilon) = 0.253832(1-\epsilon)^{16} + 0.015865{16\choose 1}\epsilon^{1}(1-\epsilon)^{15} + 0.035163{16\choose 2}\epsilon^{2}(1-\epsilon)^{14}\\ & + 0.017966{16\choose 3}\epsilon^{3}(1-\epsilon)^{13} + 0.022741{16\choose 4}\epsilon^{4}(1-\epsilon)^{12} + 0.018697{16\choose 5}\epsilon^{5}(1-\epsilon)^{11}\\ & + 0.020894{16\choose 6}\epsilon^{6}(1-\epsilon)^{10} + 0.018984{16\choose 7}\epsilon^{7}(1-\epsilon)^{9} + 0.020593{16\choose 8}\epsilon^{8}(1-\epsilon)^{8}\\ & + 0.018984{16\choose 9}\epsilon^{9}(1-\epsilon)^{7} + 0.020894{16\choose 10}\epsilon^{10}(1-\epsilon)^{6} + 0.018697{16\choose 11}\epsilon^{11}(1-\epsilon)^{5}\\ & + 0.022741{16\choose 12}\epsilon^{12}(1-\epsilon)^{4} + 0.017966{16\choose 13}\epsilon^{13}(1-\epsilon)^{3} + 0.035163{16\choose 14}\epsilon^{14}(1-\epsilon)^{2}\\ & + 0.015865{16\choose 15}\epsilon^{15}(1-\epsilon)^{1} + 0.253832\epsilon^{16} \\ 
  & \bar{e}_{16}^{(H, SBB)}(\epsilon) = 0.000992{16\choose 1}\epsilon^{1}(1-\epsilon)^{15} + 0.007158{16\choose 2}\epsilon^{2}(1-\epsilon)^{14} + 0.004791{16\choose 3}\epsilon^{3}(1-\epsilon)^{13}\\ & + 0.007453{16\choose 4}\epsilon^{4}(1-\epsilon)^{12} + 0.007019{16\choose 5}\epsilon^{5}(1-\epsilon)^{11} + 0.008774{16\choose 6}\epsilon^{6}(1-\epsilon)^{10}\\ & + 0.008742{16\choose 7}\epsilon^{7}(1-\epsilon)^{9} + 0.010296{16\choose 8}\epsilon^{8}(1-\epsilon)^{8} + 0.010242{16\choose 9}\epsilon^{9}(1-\epsilon)^{7}\\ & + 0.012120{16\choose 10}\epsilon^{10}(1-\epsilon)^{6} + 0.011678{16\choose 11}\epsilon^{11}(1-\epsilon)^{5} + 0.015289{16\choose 12}\epsilon^{12}(1-\epsilon)^{4}\\ & + 0.013174{16\choose 13}\epsilon^{13}(1-\epsilon)^{3} + 0.028005{16\choose 14}\epsilon^{14}(1-\epsilon)^{2} + 0.014873{16\choose 15}\epsilon^{15}(1-\epsilon)^{1}\\ & + 0.253832\epsilon^{16} 
\end{split}
\end{equation*}

\end{document}